\documentclass[12pt]{article} 

\usepackage{amsfonts,amsmath,amsthm,graphicx,float,placeins,subcaption,enumerate,
  color,natbib,verbatim, hyperref, url}
\usepackage{xr-hyper}
\usepackage{amssymb}
\usepackage{lscape,longtable}
\usepackage[a4paper, left=0.9in, right=0.9in, top=0.7in, bottom=0.8in]{geometry}

\usepackage[dvipsnames]{xcolor} 
\usepackage{caption} 
\usepackage{setspace}
\usepackage{booktabs} 

\usepackage{lineno} 
\newcommand*\patchAmsMathEnvironmentForLineno[1]{%
  \expandafter\let\csname old#1\expandafter\endcsname\csname
  #1\endcsname
  \expandafter\let\csname oldend#1\expandafter\endcsname\csname
  end#1\endcsname
  \renewenvironment{#1}%
                   {\linenomath\csname old#1\endcsname}%
                   {\csname oldend#1\endcsname\endlinenomath}}%
\newcommand*\patchBothAmsMathEnvironmentsForLineno[1]{%
  \patchAmsMathEnvironmentForLineno{#1}%
  \patchAmsMathEnvironmentForLineno{#1*}}%
\AtBeginDocument{%
  \patchBothAmsMathEnvironmentsForLineno{equation}%
  \patchBothAmsMathEnvironmentsForLineno{align}%
  \patchBothAmsMathEnvironmentsForLineno{flalign}%
  \patchBothAmsMathEnvironmentsForLineno{alignat}%
  \patchBothAmsMathEnvironmentsForLineno{gather}%
  \patchBothAmsMathEnvironmentsForLineno{multline}%
}

\usepackage{enumitem}

\graphicspath{{figures/}}
\allowdisplaybreaks

\newcommand{\0}{\mathbf{0}}

\newcommand{\I}{\mathbf{I}}

\newcommand{\e}{\mathbf{e}}

\newcommand{\U}{\mathbf{U}}

\newcommand{\X}{\mathbf{X}}

\newcommand{\Z}{\mathbf{Z}}

\newcommand{\diff}{\mathrm{d}}

\newcommand{\bbeta}{\boldsymbol{\beta}}
\newcommand{\brho}{\boldsymbol{\rho}}

\newcommand{\bomega}{\boldsymbol{\omega}}

\newcommand{\hbeta}{{\hat{\boldsymbol{\beta}}}}
\newcommand{\cbeta}{{\check{\boldsymbol{\beta}}}}

\newcommand{\ag}{\text{ag}}
\newcommand{\pl}{\text{pl}}

\newcommand{\tdelta}{\tilde\delta}
\newcommand{\tpi}{\tilde\pi}
\newcommand{\onesqN}{\frac{1}{\sqrt{N}}}

\newcommand{\oneN}{\frac{1}{N}}

\newcommand{\cA}{\mathcal{A}}
\newcommand{\Fn}{\mathcal{F}_N}
\newcommand{\RN}{\mathcal{R}_N}

\newcommand{\sumN}{\sum_{i=1}^{N}}

\newcommand{\tp}{^{\top}}
\newcommand{\tr}{\operatorname{tr}}

\renewcommand{\Pr}{\mathbb{P}}
\newcommand{\Exp}{\mathbb{E}}
\newcommand{\Nor}{\mathbb{N}}
\newcommand{\Var}{\mathbb{V}}

\newcommand{\bI}{\mathbb{I}}

\newtheorem{assumption}{Assumption}
\newtheorem{lemma}{Lemma}
\newtheorem{proposition}{Proposition}
\newtheorem{corollary}{Corollary}
\newtheorem{theorem}{Theorem}

\theoremstyle{definition}
\newtheorem{remark}{{Remark}}

\newcommand{\red}{\color{red}}

\newcommand{\blue}{\color{blue}}
\newcommand{\why}[1]{{\blue WHY: \red #1}}
\newcommand{\jy}[1]{{\textcolor{violet}{JY: #1}}}
\newcommand{\qkd}[1]{{\textcolor{orange}{QKD: #1}}}

\hypersetup{hidelinks}

\renewcommand{\why}[1]{\ignorespaces}
\renewcommand{\jy}[1]{\ignorespaces}
\renewcommand{\qkd}[1]{\ignorespaces}

\begin{document}

\def\spacingset#1{\renewcommand{\baselinestretch}%
{#1}\small\normalsize} \spacingset{1}

\title{Rare-Feature-Adaptive Subsampling for Big Data Regression}

\ifdefined\BLINDVERSION
\author{}
\else
\author{
Qingkai Dong$^{1}$, Jun Yan$^{1}$, Qiang Zhang$^{2,3}$, and HaiYing Wang$^{1}$\\[0.3em]
\small $^{1}$Department of Statistics, University of Connecticut, Storrs, CT, USA\\
\small $^{2}$Vickie and Jack Farber Vision Research Center at Wills Eye, Wills Eye Hospital, Philadelphia, PA, USA\\
\small $^{3}$Sidney Kimmel Medical College at Thomas Jefferson University, Philadelphia, PA, USA
}
\fi

\date{}
\maketitle

\begin{abstract}
Rare binary features cause instability and slow convergence in
regression. Subsampling massive datasets may discard observations
carrying rare-feature information. We show that rare-feature
coefficient convergence rates depend on feature prevalence and the
inclusion probabilities assigned to rare observations. This finding
motivates subsampling designs that adapt sampling probabilities to feature
rarity. We first study optimal rare-census designs, which extend the
 negative subsampling approach by retaining every rare
observation. However,
they may face practical constraints. We then propose a computationally
efficient rarity-adaptive L-optimal design that incorporates inverse prevalence
scaling. Because implementing optimal designs requires pilot estimates
whose quality depends on adequate representation of rare features, we
propose a balanced design, a one-step rarity-adaptive design based
only on binary features. It can be used alone, as a pilot,
or combined with designs targeting other objectives. For
implementation in two stages, we develop a union-sample estimator that
reuses pilot observations while accounting for sample overlap and
establish its asymptotic normality. Simulations and applications show
improved stability and estimation efficiency over existing methods.
One application evaluates designs that jointly protect
rare outcomes and rare features.

\bigskip\noindent
{\it Keywords:} Balanced sampling;
Estimating equation;
Low-prevalence outcome;
Pilot sample;
Unequal probability sampling;
\end{abstract}

\newpage
\spacingset{1.9}
\section{Introduction}

Rare features play an important role in many scientific domains and
represent infrequent but consequential characteristics, such as rare
diseases, exposures, or underrepresented populations. These features
are common in genetics, electronic health records, low-prevalence
epidemiologic exposures, and social surveys
\citep[e.g.,][]{doerken2019penalized, king2001logistic, pizzo2019rare,
chen2024tree, shyalika2024comprehensive}. Although rare, they often
carry valuable information for scientific inference and policy
decisions. Their sparsity, however, creates substantial difficulties
for statistical modeling. When such features, typically coded as
binary covariates, take the value one in only a small fraction of
observations, instability, inefficiency, and identifiability issues
arise. For example, in high-dimensional linear models, rare features
lead to extremely sparse design matrices, estimation inconsistency,
and variable selection failure \citep{yan2021rare}. Hypothesis testing
loses power for signals from rare features
\citep{mukherjee2015hypothesis}.

Importantly, these challenges can persist with larger sample sizes
and can be exacerbated in large-scale data analyses.  Subsampling is a
common strategy for scalable regression, reducing computational cost
by fitting a model on a subset of the full data
\citep{wang2018optimal}. This is useful for models requiring iterative
optimization, such as generalized linear models, where computation is
expensive for massive datasets. However, subsampling can worsen the
rare-feature problem by failing to preserve rare observations.  When
observations carrying rare-feature information are omitted or included
too infrequently, the subsample may not permit reliable estimation,
thereby aggravating the instability and inefficiency already present in
full-data analysis.

Many studies have developed methodological and theoretical results for
rare-outcome (rare-event) analyses \citep[e.g.,][]{king2001logistic,
owen2007infinitely}. These studies, together with
optimal subsampling methods for rare outcomes in logistic and Cox
regression \citep{wang2021nonuniform, keret2023analyzing}, focus on
outcome rarity rather than rarity arising from the covariates. Recent
studies analyze rare covariates \citep{yan2021rare, chen2024tree,
mishra2024taro}, primarily in high dimensions where feature
aggregation is often used to alleviate rarity.  The asymptotic
behavior of regression estimators with rare features, the additional
difficulties introduced by subsampling, and the design of subsampling
procedures that preserve rare features remain less understood.

To address these gaps, this paper studies subsampling for regression
with rare features. We establish asymptotic results showing that
rare-feature prevalence slows the convergence rates of the associated
coefficient estimators and that subsampling further alters these
rates through the inclusion probabilities assigned to rare
observations. Protecting rare observations therefore improves both
rare-feature and overall estimation accuracy. This yields a general
design principle: sampling probabilities should adapt to feature
rarity, with observations expressing less prevalent features receiving
larger inclusion probabilities while respecting the sampling budget.
We call designs that follow this principle and attain the resulting
target rate rarity-adaptive (RA) designs. The negative subsampling
approach for rare outcomes
\citep{wang2021nonuniform,keret2023analyzing} can be extended to
rare-feature settings by retaining every rare observation and sampling
from the remainder. We call the resulting designs rare-census (RC)
designs. When complete retention is feasible, RC designs form a constrained
subclass of RA designs.

Our main design contribution is an RA framework that protects less
prevalent features more strongly while accommodating practical
constraints. Within this framework, we derive optimal RC designs that
retain every rare observation and allocate the remaining budget under
A-optimal or L-optimal criteria \citep{wang2018optimal}. When feasible,
RC preserves the full-data convergence rate for rare-feature
coefficients but does not generally recover their full-data
variance. In practice, however, RC can exhaust the budget for nonrare
observations or become infeasible. It also requires a prespecified
rare feature set.  Beyond this constrained case, we build on existing
optimal subsampling designs \citep{wang2018optimal,yu2020quasi}. We
show that the A-optimal design (Aopt) can attain the target rate but
requires dense calculations, whereas the computationally simpler
L-optimal design (Lopt) does not generally provide the required rarity
adaptivity. We therefore propose a rarity-adaptive L-optimal design
(RA-Lopt), which incorporates inverse prevalence scaling into Lopt
without requiring a prespecified rare feature set.

The preceding optimal designs depend on pilot estimates
\citep{wang2018optimal}, which can be unstable when rare features are
poorly represented. We therefore propose the one-step balanced design
(BL), whose probabilities use only binary feature information and
require no model or parameter estimates. BL is itself an RA design,
requires no prespecified rare feature set, and can be used alone, as a
pilot, or as a component of another design.

For two-step
implementation, existing aggregation procedures combine pilot and
second-step information. We
develop a union-sample estimator that incorporates both samples while
accounting for their overlap. We also provide designs for simultaneous
rare outcomes and rare features. Numerically, the proposed designs
compare favorably with existing alternatives and can approach
full-data performance with a much smaller subsample.

The remainder of the paper is organized as
follows. Section~\ref{sec:prelim} introduces the notation and model
framework. Section~\ref{sec:asymptotics} develops the asymptotic
theory for full-data and subsample
estimators. Section~\ref{sec:subsampling-designs} develops the RA
framework, including optimal RC designs, RA-Lopt, and BL.
Section~\ref{sec:implementation} presents feasible implementations,
the union-sample estimator and its theoretical guarantees, and designs
for simultaneous rare features and outcomes.
Sections~\ref{sec:simulation} and~\ref{sec:real} report simulations
and applications, respectively. Section~\ref{sec:discussion}
concludes. All proofs are provided in Appendix~\ref{app:proofs}.

\section{Preliminaries}
\label{sec:prelim}

Consider a regression problem where $Y\in \mathbb{R}$ is the response
variable and $\X \in \mathbb{R}^{d}$ is the feature vector. Define the function
$\psi(Y,\X,\bbeta)=\{Y - \mu(\X\tp \bbeta)\}\X$, where $\mu(\cdot)$ is a
continuously differentiable and strictly increasing function. Let
$\Psi_{0}(\bbeta) = \Exp\{\psi(Y,\X,\bbeta)\}$ and assume that the true
parameter $\bbeta_0$ is the unique root of $\Psi_{0}(\bbeta)=\0$. Let
the full data be a sample of size $N$, denoted by
$\Fn = \left\{(Y_{i}, \X_{i}) \right\}_{i=1}^{N}$, where each pair
$(Y_{i}, \X_{i})$ is an independent copy of $(Y, \X)$.  The full-data
estimator $\hbeta_{N}$ is the solution to 
$ \Psi_N(\bbeta) = N^{-1} \sumN \psi(Y_{i}, \X_{i}, \bbeta) = \0$.
Estimating equations of this form include canonical generalized
linear models, such as logistic regression, and many Z-estimation
problems.

To study the estimation challenges posed by rare features, we
partition $\X=(\U\tp,\Z\tp)\tp$, where $\U\in\mathbb R^{d_u}$ contains
nonrare features, including an intercept term, and
$\Z\in\{0,1\}^{d_r}$ contains rare features. Each component of $\Z$ is a
Bernoulli random variable; the $j$-th rare feature $Z_{\cdot j}$ takes the
value~$1$ (feature expressed) with prevalence
$\rho_{j,N}=\Exp(Z_{\cdot j})$, $j=1,\dots,d_r$. Write
$\Z_i=(Z_{i1},\ldots,Z_{id_r})\tp$ for the rare-feature vector of
observation $i$. The rare-feature regime considered in this paper
allows $\rho_{j,N}\to0$ while $N\rho_{j,N}\to\infty$ as
$N\to\infty$. In other words, the number of observations expressing a
rare feature diverges as $N\to\infty$, but at a slower rate than
$N$. Note that assuming rare features are binary is appropriate because
categorical variables are typically coded as dummy variables in regression
models.

We use standard notation throughout. The Euclidean norm of a vector is
denoted by $\|\cdot\|$, and the Frobenius norm of a matrix by
$\|\cdot\|_F$. The notation $^{\otimes2}$ denotes the outer product,
i.e., $\X^{\otimes2}=\X\X\tp$. The first derivative of a function
$f(x)$ with respect to $x$ is denoted by $\dot f(x)$. We write $\I$
for the identity matrix, $\mathbb I(\cdot)$ for the indicator
function, $\Exp(\cdot)$ for expectation, and $\Var(\cdot)$ for
variance.  Convergence in probability and convergence in distribution
are denoted by $\xrightarrow{p}$ and $\xrightarrow{d}$,
respectively. We use $o_{p}(1)$ and $O_{p}(1)$ to represent a random
quantity that converges to $0$ in probability and is bounded in
probability, respectively.

\section{Theoretical Results under Rare Features}
\label{sec:asymptotics}

The goal of this section is to characterize the statistical cost of
rare features before and after subsampling. We first analyze the
full-data estimator, which provides the baseline convergence rate for
rare-feature coefficients. We then analyze a general subsample
estimator and show how this baseline rate is modified by the
inclusion probabilities assigned to observations expressing rare
features.

\subsection{Asymptotics of the Full-Data Estimator}

We first examine the full-data estimator $\hbeta_{N}$. In
regression asymptotics without rare features, the matrices
$M_{N}(\bbeta) = - N^{-1} \sumN \dot\psi(Y_{i},\X_{i},\bbeta)$
and
$\Lambda_{N}(\bbeta) =N^{-1} \sumN \psi^{\otimes2}(Y_{i},\X_{i},\bbeta)$
are typically assumed to have positive-definite limits,
and all coefficient estimators achieve the
$N^{-1/2}$ convergence rate. With rare features, this classical
scaling is no longer appropriate: for example, the diagonal entry of
$M_N(\bbeta_0)$ corresponding to the $j$-th rare-feature coordinate,
i.e., $-N^{-1} \sumN \{\dot\mu(\X_i\tp \bbeta_{0}) Z_{ij}\}$, is of order
$\rho_{j,N}$ and converges to zero. We therefore introduce the diagonal scaling matrix
$D_{N} = \operatorname{diag}\{\mathbf{1}_{d_{u}}\tp, (\brho_{N}^{1/2})\tp\}$,
where $\brho_{N} = (\rho_{1,N}, \dots, \rho_{d_{r},N})\tp$ and the
power $^{1/2}$ is taken element-wise. We make the following regularity assumptions.

\begin{assumption} \label{asmp:regularity-1}
The nonrare features satisfy $\Exp\|\U\|^4<\infty$. For rare features,
$\rho_{j,N}>0$, $\rho_{j,N}\to0$, and $N\rho_{j,N}\to\infty$, as $N\to\infty$,
for each $j=1,\ldots,d_r$.
\end{assumption}

\begin{assumption} \label{asmp:regularity-2}
(i) For all $\bbeta\in\Theta$, where $\Theta$ is a compact set,
$\sup_{N}\Exp\{\|D_{N}^{-1}\psi(Y,\X,\bbeta)\|^{2}\}<\infty$. (ii) There
exists $\kappa>0$ such that
$N^{-\kappa/2} \Exp\|D_{N}^{-1} \psi(Y, \X, \bbeta_{0})\|^{2+\kappa} \to 0$.
(iii) There exists a function $L(x)\ge0$ satisfying
$\sup_{N}\Exp\{L^2(\X)\|D_{N}^{-1}\X\|^2\}<\infty$ such that, for all
$\bbeta_{1},\bbeta_{2} \in \Theta$,
\begin{align*}
  &|\mu(\X\tp\bbeta_{1})-\mu(\X\tp\bbeta_{2})|
    \le L(\X)\|\bbeta_{1}-\bbeta_{2}\|,
    \quad 
    |\dot\mu(\X\tp\bbeta_{1})-\dot\mu(\X\tp\bbeta_{2})|
    \le L(\X)\|\bbeta_{1}-\bbeta_{2}\|.
\end{align*}
(iv) For any $\epsilon>0$, there exists a constant $c(\epsilon)>0$ such
that, for all sufficiently large $N$,
$\inf_{\bbeta\in\Theta:\|\bbeta-\bbeta_{0}\|\ge \epsilon}\|D_{N}^{-2}\Psi_{0}(\bbeta)\|\ge c(\epsilon)$.
\end{assumption}

\begin{assumption} \label{asmp:regularity-3}
Define $M_{r,N}(\bbeta) = D_{N}^{-1}M_{N}(\bbeta)D_{N}^{-1}$ and
$\Lambda_{r,N}(\bbeta) = D_{N}^{-1}\Lambda_{N}(\bbeta)D_{N}^{-1}$.
Assume $M_{r,N}(\bbeta_{0}) \xrightarrow{p} M_{r}(\bbeta_{0}) $ and
$\Lambda_{r,N}(\bbeta_{0}) \xrightarrow{p} \Lambda_{r}(\bbeta_{0})$,
where $M_{r}(\bbeta_{0})$ and $\Lambda_{r}(\bbeta_{0})$ are finite and
positive-definite matrices.
\end{assumption}

Assumptions~\ref{asmp:regularity-1}--\ref{asmp:regularity-3} are
standard regularity conditions adapted to rare-feature scaling.
Assumption~\ref{asmp:regularity-1} formalizes the rare-feature regime:
for each rare feature, the expected number of observations expressing
it goes to infinity but grows slower than $N$. The prevalences
need not be of the same order.  Assumption~\ref{asmp:regularity-2}
states moment, smoothness, and scaled identification conditions.
Assumption~\ref{asmp:regularity-3} requires that appropriately
rescaled matrices $M_{r,N}(\bbeta_0)$ and $\Lambda_{r,N}(\bbeta_0)$
are full rank and have positive-definite limits.

\begin{theorem}
  \label{thm:full}
Under Assumptions~\ref{asmp:regularity-1}--\ref{asmp:regularity-3}, as $N \to \infty$,
\begin{equation*}
\sqrt{N} D_N \left(\hbeta_{N} - \bbeta_{0}\right) \xrightarrow{d}
\Nor\left(\0,
  M_{r}^{-1}(\bbeta_{0}) \Lambda_{r}(\bbeta_{0}) M_{r}^{-1}(\bbeta_{0})
\right).
\end{equation*}
\end{theorem}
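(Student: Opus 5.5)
The argument follows the usual Z-estimator route, carried out in the rescaled coordinates $\bm{\theta}=D_N\bbeta$; equivalently, every estimating-equation quantity is premultiplied by $D_N^{-1}$. It has three steps: consistency, a CLT for the scaled score, and a Taylor expansion with a uniform Jacobian approximation.

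\emph{Step 1 (consistency).} We show that $\sup_{\bbeta\in\Theta}\|D_N^{-2}\{\Psi_N(\bbeta)-\Psi_0(\bbeta)\}\|=o_p(1)$ and then invoke Assumption~\ref{asmp:regularity-2}(iv) to conclude $\hbeta_N\xrightarrow{p}\bbeta_0$.
\begin{itemize}
\item \emph{Pointwise bound.} Since $\|D_N^{-2}v\|\le \rho_{\min,N}^{-1/2}\|D_N^{-1}v\|$, Chebyshev and Assumption~\ref{asmp:regularity-2}(i) give $\Exp\|D_N^{-2}\{\Psi_N-\Psi_0\}(\bbeta)\|^2\le (N\rho_{\min,N})^{-1}\sup_N\Exp\|D_N^{-1}\psi\|^2\to0$, using $N\rho_{j,N}\to\infty$.
\item \emph{Uniformity.} We use the Lipschitz bound (iii): $\|D_N^{-2}\{\psi(\bbeta_1)-\psi(\bbeta_2)\}\|\le L(\X)\|D_N^{-2}\X\|\,\|\bbeta_1-\bbeta_2\|$. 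Its mean is controlled by Cauchy–Schwarz through $\sup_N\Exp\{L^2\|D_N^{-1}\X\|^2\}$, with one factor $\rho^{-1/2}$ absorbed as in the pointwise bound. A finite $\delta$-net of the compact $\Theta$ then finishes this step.
\end{itemize}
The scaling is slightly delicate here. The cleanest route is to note that the rare-coordinate entries of $D_N^{-2}\Psi_N$ are averages over roughly $N\rho_{j,N}$ effective observations, each with bounded conditional second moment.

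\emph{Step 2 (CLT for the scaled score).} Write $\sqrt N D_N^{-1}\Psi_N(\bbeta_0)=N^{-1/2}\sum_i D_N^{-1}\psi(Y_i,\X_i,\bbeta_0)$. This is a row-wise i.i.d.\ triangular array with mean $\0$. Its covariance $\Exp\{D_N^{-1}\psi^{\otimes2}D_N^{-1}\}$ converges to $\Lambda_r(\bbeta_0)$; this follows from Assumption~\ref{asmp:regularity-3} together with uniform integrability from (ii). The Lyapunov condition is exactly Assumption~\ref{asmp:regularity-2}(ii): $N\cdot N^{-1-\kappa/2}\Exp\|D_N^{-1}\psi\|^{2+\kappa}\to0$. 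Cramér–Wold then gives $\sqrt N D_N^{-1}\Psi_N(\bbeta_0)\xrightarrow{d}\Nor(\0,\Lambda_r)$.

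\emph{Step 3 (expansion).} Start from $\0=\Psi_N(\hbeta_N)=\Psi_N(\bbeta_0)-\bar M_N(\hbeta_N-\bbeta_0)$, where $\bar M_N$ is the row-wise mean-value (or integral-form) Jacobian. Multiplying by $\sqrt N D_N^{-1}$ gives
\[
D_N^{-1}\bar M_N D_N^{-1}\,\sqrt N D_N(\hbeta_N-\bbeta_0)=\sqrt N D_N^{-1}\Psi_N(\bbeta_0).
\]
It remains to show $D_N^{-1}\{\bar M_N-M_N(\bbeta_0)\}D_N^{-1}=o_p(1)$. Entrywise this is bounded by $N^{-1}\sum_i L(\X_i)\|\bar\bbeta_i-\bbeta_0\|\,\|D_N^{-1}\X_i\|^2$. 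We control it by combining the Lipschitz bound on $\dot\mu$ with the observation that $\|D_N^{-1}\X\|^2=\|\U\|^2+\sum_j Z_j/\rho_{j,N}$ has bounded mean.

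\emph{Main obstacle.} The difficulty is that Step 1 only gives $\|\hbeta_N-\bbeta_0\|=o_p(1)$, while the moment condition on $L\|D_N^{-1}\X\|^2$ (as opposed to $L^2\|D_N^{-1}\X\|^2$) is not directly assumed. I would first use Cauchy–Schwarz with $\Exp\{L^2\|D_N^{-1}\X\|^2\}$ and $\Exp\|D_N^{-1}\X\|^2$. If that fails, I would bootstrap a preliminary rate: first show $\|D_N(\hbeta_N-\bbeta_0)\|=O_p(N^{-1/2})$, which implies $\|\hbeta_N-\bbeta_0\|=O_p((N\rho_{\min,N})^{-1/2})$. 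That rate then kills the remainder, since $\|D_N^{-1}\X\|^2$ has bounded mean.

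Given the remainder bound, Assumption~\ref{asmp:regularity-3} and the invertibility of $M_r$ yield $\sqrt N D_N(\hbeta_N-\bbeta_0)=M_r^{-1}(\bbeta_0)\sqrt N D_N^{-1}\Psi_N(\bbeta_0)+o_p(1)$, and Slutsky's theorem completes the proof.
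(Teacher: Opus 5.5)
Your proposal is correct and follows essentially the same route as the paper. Consistency comes from uniform convergence of $D_N^{-2}\{\Psi_N(\bbeta)-\Psi_0(\bbeta)\}$ over a finite net plus Assumption~\ref{asmp:regularity-2}(iv); the score CLT for $N^{-1/2}\sum_{i} D_N^{-1}\psi(Y_i,\X_i,\bbeta_0)$ comes from Lindeberg/Lyapunov via Assumption~\ref{asmp:regularity-2}(ii); and the integral-form expansion has its Jacobian remainder handled exactly by your first option. Cauchy--Schwarz gives $\sup_N\Exp\{L(\X)\|D_N^{-1}\X\|^2\}<\infty$ because $\Exp\|D_N^{-1}\X\|^2=\Exp\|\U\|^2+d_r$, so the rate-bootstrap fallback is never needed.

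You are right that $\Exp\{\Lambda_{r,N}(\bbeta_0)\}\to\Lambda_r(\bbeta_0)$ needs more than Assumption~\ref{asmp:regularity-3}, a point the paper passes over. However, Assumption~\ref{asmp:regularity-2}(ii) does not supply genuine uniform integrability, which actually fails here because $Z_{\cdot j}/\rho_{j,N}$ is not uniformly integrable. What it supplies is truncation-at-level-$N$ control for a triangular-array weak law, and combined with $\Lambda_{r,N}(\bbeta_0)\xrightarrow{p}\Lambda_r(\bbeta_0)$ that forces the expectations to converge.
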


Theorem~\ref{thm:full} extends the classical asymptotic normality
result to settings with rare features. The estimator of the $j$-th
rare-feature coefficient converges at rate $(N\rho_{j,N})^{-1/2}$,
where $N\rho_{j,N}$ is the expected number of observations with
$Z_{ij}=1$. In contrast, nonrare
coordinates of $\hbeta_N$ retain the usual $N^{-1/2}$ convergence
rate. When $d_r=0$, $D_N$ reduces to the identity matrix and the usual
root-$N$ consistency for all coefficients is recovered.  Relative to
the nonrare coordinates, the rare-feature coordinates can therefore
contribute more substantially to the overall coefficient estimation
error. This contribution can be particularly large when the model
contains many rare features or when their prevalence is extremely low.

This result has direct implications for subsampling. Because the
information about a rare-feature coefficient is tied to observations
expressing that feature, a subsample that underrepresents or omits
them can further reduce estimation accuracy. Therefore, inclusion
probabilities assigned to rare observations are central to subsampling
performance. We next study how these probabilities affect
identifiability and subsample estimator convergence.

\subsection{Theory of General Subsample Estimators}

\subsubsection{Poisson Sampling and Its Identifiability Risk}
\label{sec:poisson-subsampling}

We draw a subsample of expected size $n$ from $\Fn$ using
sampling probabilities $\{\pi(Y_{i}, \X_{i})\}_{i=1}^{N}$ satisfying
$0 < \pi(Y_{i}, \X_{i}) \le 1$ and
$\sumN \pi(Y_{i}, \X_{i}) = n$. Observation $i$ is selected if
$V_{i} < \pi(Y_{i}, \X_{i})$, where $V_{i} \sim \text{Uniform}(0,1)$
is independent of $\Fn$. This is known as Poisson sampling
\citep{wang2022sampling}. Let
$\delta_{i} = \bI\{V_{i} < \pi(Y_{i}, \X_{i})\}$ be the inclusion
indicator, so that the actual subsample size is $\sumN \delta_{i}$,
which satisfies $\Exp(\sumN \delta_{i}) = n$. We write
$\pi(Y, \X) = \min\{h_{N} \varphi(Y,\X),1\}$, where $h_{N}=n/N$ is the
base subsampling rate and $\varphi(Y,\X)$ is the weight function that
reflects the relative importance of each observation.  The form of
$\varphi(Y,\X)$ is determined by the subsampling design.  The baseline
choice $\varphi(Y,\X)=1$ gives uniform subsampling, under which all
observations are treated as equally important. A subsample estimator
$\cbeta_n$ is defined as the solution to
\begin{align}
 \label{eq:ee-sub}
  \Psi_{n}(\bbeta) =
  \oneN \sumN \left\{\frac{\delta_{i}}{\pi(Y_{i}, \X_{i})}
  \psi(Y_{i}, \X_{i}, \bbeta) \right\}  = \0.
\end{align}
For the $j$-th rare feature, define
\begin{align}
  \label{eq:omega-N-j}
  \omega_{j,N}
  =
  \Exp\left\{\pi(Y,\X)\mid Z_{\cdot j}=1\right\}.
\end{align}
This quantity represents the expected inclusion probability assigned to
observations expressing the $j$-th rare feature. A larger
$\omega_{j,N}$ indicates that the subsampling design more strongly
protects observations carrying that rare feature.

A coefficient identifiability issue arises when a subsample contains
no observation expressing a rare feature. In this case, the
corresponding coefficient cannot be estimated from the subsample. We
characterize the probability of this issue as follows.

\begin{proposition}
\label{prop:lower-bound}
Define the event $\cA = \cup_{j=1}^{d_{r}}\{ \sumN\delta_{i}Z_{ij}=0 \}$, which
occurs when one or more rare features have no expressed observations in
a subsample. When there exists a constant $\zeta<1$ such that
$\pi(Y,\X) \le \zeta$ almost surely, we have
\begin{align}
\label{eq:lower-bound}
  \Pr(\cA) \ge \max_{1 \le j \le d_{r}} \left\{ 
  \exp{\left(-  \frac{N \rho_{j,N} \omega_{j,N}}{1-\zeta}  \right)}
   \right\}.
\end{align}
\end{proposition}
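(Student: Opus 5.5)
Since $\cA$ contains each event $\{\sumN\delta_iZ_{ij}=0\}$, we have $\Pr(\cA)\ge\Pr(\sumN\delta_iZ_{ij}=0)$ for every $j$. It therefore suffices to fix $j$ and show
\[
\Pr\Bigl(\sumN\delta_iZ_{ij}=0\Bigr)\ge\exp\{-N\rho_{j,N}\omega_{j,N}/(1-\zeta)\},
\]
and then take the maximum over $j$.

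For fixed $j$, I use the i.i.d.\ structure of Poisson sampling unconditionally. The pairs $(Y_i,\X_i,V_i)$ are independent across $i$, and the $V_i$ are independent of $\Fn$. Hence the events $\{\delta_iZ_{ij}=0\}$ are independent, and
\[
\Pr\Bigl(\sumN\delta_iZ_{ij}=0\Bigr)=\{1-\Pr(\delta_1Z_{1j}=1)\}^N .
\]
Conditioning on $(Y_1,\X_1)$ gives $\Pr(\delta_1Z_{1j}=1)=\Exp\{\pi(Y,\X)Z_{\cdot j}\}$. By the definition of $\omega_{j,N}$ in \eqref{eq:omega-N-j}, this equals $\rho_{j,N}\omega_{j,N}=:p$. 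The probability in question is therefore exactly $(1-p)^N$.

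The main step is the elementary inequality $\log(1-x)\ge -x/(1-x)$ for $x\in[0,1)$. It follows from $\log(1-x)=-\sum_k x^k/k\ge-\sum_k x^k$, or equivalently from $\log(1/(1-x))\le 1/(1-x)-1$. Since $\pi\le\zeta$ almost surely, $p\le\zeta\rho_{j,N}\le\zeta<1$, so $1-p\ge1-\zeta$. Hence
\[
(1-p)^N=\exp\{N\log(1-p)\}\ge\exp\{-Np/(1-p)\}\ge\exp\{-Np/(1-\zeta)\},
\]
which is the stated bound.

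One subtlety, which is the only place I expect care to be needed, is whether $\pi(Y_i,\X_i)$ depends on the whole sample through the normalization $\sumN\pi=n$. If it does, the $\delta_iZ_{ij}$ are only conditionally independent given $\Fn$. In that case I would condition on $\Fn$ and get
\[
\Pr\Bigl(\sumN\delta_iZ_{ij}=0\mid\Fn\Bigr)=\prod_{i:Z_{ij}=1}(1-\pi_i)\ge\exp\Bigl\{-\sumN\pi_iZ_{ij}/(1-\zeta)\Bigr\},
\]
using the same inequality termwise. I would then take expectations and apply Jensen's inequality to the convex function $\exp$. This yields $\exp\{-\Exp(\sumN\pi_iZ_{ij})/(1-\zeta)\}=\exp\{-N\rho_{j,N}\omega_{j,N}/(1-\zeta)\}$. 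This route works in either interpretation, so I would present it as the main argument.
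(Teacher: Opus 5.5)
Your conditional argument is correct and is essentially the paper's proof. The paper also conditions on $\Fn$, writes the conditional probability as $\exp[\sum_{i}Z_{ij}\log\{1-\pi(Y_i,\X_i)\}]$, and combines Jensen's inequality with $\log(1-x)\ge -x/(1-x)\ge -x/(1-\zeta)$; the only difference is that it applies Jensen before the termwise bound rather than after, which is immaterial.
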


Proposition~\ref{prop:lower-bound} shows how the inclusion probability
assigned to rare observations affects identifiability. The lower bound
for $\Pr(\cA)$ in \eqref{eq:lower-bound} may not decrease as $N$ grows
because it is also affected by $\rho_{j,N}$ and $\omega_{j,N}$. Under
uniform subsampling, $\omega_{j,N}=h_N$, so the bound is controlled by
$n\rho_{j,N}$. For $N=10^5$, $d_r=1$, $\rho_{1,N}=10^{-3}$, and
$n=10^3$, the lower bound is approximately $0.36$, meaning at least
$36\%$ of subsamples fail to produce an estimate. This risk grows with
multiple rare features, as shown in Section~\ref{sec:simulation}.

\subsubsection{Asymptotics of the Subsample Estimator}
\label{sec:asymptotics-subsample}

We next investigate the asymptotic behavior of the subsample estimator
$\cbeta_n$. To facilitate the presentation, define
$\Omega_{N} = \operatorname{diag}\{h_{N}^{1/2}\mathbf{1}\tp_{d_{u}}, (\bomega_{N}^{1/2})\tp\}$,
where $\bomega_{N} = (\omega_{1,N}, \dots, \omega_{d_{r},N})\tp$ with
elements $\omega_{j,N}$ defined in Equation~\eqref{eq:omega-N-j}.  The
matrix $\Omega_ND_N$ gives the scaling induced jointly by subsampling design and
feature rarity. Its diagonal entries are $h_N^{1/2}$ for
nonrare coordinates and $(\rho_{j,N}\omega_{j,N})^{1/2}$ for the
$j$-th rare-feature coordinate. For simplicity, define
$\Lambda_{\pi,N}(\bbeta) = N^{-1}\sumN \{\pi^{-1}(Y_{i},\X_{i})\psi^{\otimes2}(Y_{i},\X_{i},\bbeta)\}$
and
$K_{N}(\X) = \Omega_{N}D_{N}^{-1}\X\X\tp D_{N}^{-1}\Omega_{N}^{-1}$.
We make the following regularity assumptions on the sampling
probabilities:

\begin{assumption}
  \label{asmp:samp-prob-1}
(i) There exists a constant $\alpha>0$ such that
$\varphi(Y,\X) \ge \alpha$ almost surely. (ii) Assume
$\Omega_{N}M_{r,N}(\bbeta_{0})\Omega_{N}^{-1} \xrightarrow{p} M_{\pi}(\bbeta_{0})$
and
$\Omega_{N}D_{N}^{-1}\Lambda_{\pi,N}(\bbeta_{0}) D_{N}^{-1}\Omega_{N} \xrightarrow{p} \Lambda_{\pi}(\bbeta_{0})$,
where $M_{\pi}(\bbeta_{0})$ is finite and non-singular and
$\Lambda_{\pi}(\bbeta_{0})$ is finite and positive-definite.
\end{assumption}

\begin{assumption}
  \label{asmp:samp-prob-2}
  (i) For $L(x)$ defined in
Assumption~\ref{asmp:regularity-2},
$\sup_{N} \Exp \left\| L(\X) K_{N}(\X) \right\|_{F} < \infty$. (ii) There
exists $\kappa_{2}>0$ such that
$N^{-\kappa_{2}/2} \Exp\|\delta\pi^{-1}(Y,\X)\Omega_{N} D_{N}^{-1} \psi(Y, \X, \bbeta_{0})\|^{2+\kappa_{2}} \to 0$. (iii)
Assume
$n^{-1} \Exp \left\|\dot\mu(\X\tp\bbeta_{0}) K_{N}(\X) \right\|^{2}_{F} \to 0$.
\end{assumption}

\begin{theorem}
\label{thm:sub1}
Under Assumptions~\ref{asmp:regularity-1}--\ref{asmp:samp-prob-2},
if the base subsampling rate $h_{N}$ satisfies
$Nh_{N}\rho_{j,N}\to \infty$ for $j=1,\dots,d_{r}$, then
\begin{equation*}
    \sqrt{N} \Omega_{N} D_{N}
    (\cbeta_{n} - \bbeta_{0})
  \xrightarrow{d}
  \Nor\left(\0,
    M_{\pi}^{-1}(\bbeta_{0}) \Lambda_{\pi}(\bbeta_{0}) \{M_{\pi}^{-1}(\bbeta_{0})\}\tp\right),
\end{equation*}
where $M_{\pi}(\bbeta_0)$ and $\Lambda_{\pi}(\bbeta_0)$ are defined in
Assumption~\ref{asmp:samp-prob-1}.
\end{theorem}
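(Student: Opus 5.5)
The plan is the standard Z-estimator argument: consistency, then linearization, then a martingale/Lindeberg CLT. Each step is carried out after rescaling by $\Omega_N D_N$ instead of $\sqrt{n}$. Write $S_N=\sqrt N\,\Omega_N D_N^{-1}$ for the score scaling. Then
\[
S_N\Psi_n(\bbeta_0)=N^{-1/2}\textstyle\sumN \delta_i\pi_i^{-1}\Omega_N D_N^{-1}\psi(Y_i,\X_i,\bbeta_0),
\]
where $\pi_i=\pi(Y_i,\X_i)$. The target expansion is
\[
\sqrt N\Omega_ND_N(\cbeta_n-\bbeta_0)=-\{\Omega_N M_{r,N}(\bbeta_0)\Omega_N^{-1}\}^{-1} S_N\Psi_n(\bbeta_0)+o_p(1).
\]
This uses the identity $S_N\dot\Psi_n D_N^{-1}\Omega_N^{-1}/\sqrt N=\Omega_N D_N^{-1}\dot\Psi_n D_N^{-1}\Omega_N^{-1}$. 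Its full-data conditional mean is $-\Omega_N M_{r,N}\Omega_N^{-1}$, which converges to $-M_\pi(\bbeta_0)$ by Assumption~\ref{asmp:samp-prob-1}(ii).

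\textbf{Step 1 (score CLT).} Decompose
\[
S_N\Psi_n(\bbeta_0)=S_N\{\Psi_n(\bbeta_0)-\Psi_N(\bbeta_0)\}+S_N\Psi_N(\bbeta_0).
\]
For the second term, $\sqrt N D_N^{-1}\Psi_N(\bbeta_0)=O_p(1)$ by the proof of Theorem~\ref{thm:full}. Multiplying by $\Omega_N$, whose entries are at most one, keeps it $O_p(1)$ and makes it negligible relative to the first term, since $h_N,\omega_{j,N}\to0$ or stay bounded. More cleanly, I would not split at all. I would treat $\{\delta_i\pi_i^{-1}\Omega_ND_N^{-1}\psi_i\}$ as i.i.d.\ across $i$ unconditionally, because $(Y_i,\X_i,V_i)$ are i.i.d. The summands then have mean $\Omega_N D_N^{-1}\Psi_0(\bbeta_0)=\0$. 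Their variance is $\Omega_N D_N^{-1}\Exp\{\pi^{-1}\psi^{\otimes2}\}D_N^{-1}\Omega_N$, which is the limit $\Lambda_\pi(\bbeta_0)$ of Assumption~\ref{asmp:samp-prob-1}(ii) up to $o(1)$. The Lyapunov condition is exactly Assumption~\ref{asmp:samp-prob-2}(ii). So the Lindeberg--Feller CLT gives $S_N\Psi_n(\bbeta_0)\xrightarrow{d}\Nor(\0,\Lambda_\pi)$.

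\textbf{Step 2 (Hessian).} Define
\[
\tilde M_n=-N^{-1}\textstyle\sumN\delta_i\pi_i^{-1}\dot\mu(\X_i\tp\bbeta_0)K_N(\X_i),
\]
which equals $\Omega_N D_N^{-1}\dot\Psi_n(\bbeta_0)D_N^{-1}\Omega_N^{-1}$. I would show $\tilde M_n+\Omega_NM_{r,N}\Omega_N^{-1}=o_p(1)$ by a conditional second-moment bound. The conditional variance given $\Fn$ is at most $N^{-2}\sum\pi_i^{-1}\dot\mu^2\|K_N\|_F^2$. Using $\pi_i\ge\min(h_N\alpha,1)$ from Assumption~\ref{asmp:samp-prob-1}(i), this is at most $(\alpha n)^{-1}N^{-1}\sum\dot\mu^2\|K_N\|_F^2$. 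Its expectation tends to $0$ by Assumption~\ref{asmp:samp-prob-2}(iii). For $\bbeta$ in a shrinking neighbourhood, the Lipschitz bound on $\dot\mu$ together with Assumption~\ref{asmp:samp-prob-2}(i) controls $\tilde M_n(\bbeta)-\tilde M_n(\bbeta_0)$ by $\|\bbeta-\bbeta_0\|\,O_p(1)$. This uses $\Exp\{\delta\pi^{-1}\,\cdot\mid\X\}=\cdot$ and Markov's inequality.

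\textbf{Step 3 (consistency, then expansion).} For consistency I would show $\sup_{\Theta}\|D_N^{-2}\{\Psi_n(\bbeta)-\Psi_0(\bbeta)\}\|=o_p(1)$ and invoke the identification condition in Assumption~\ref{asmp:regularity-2}(iv). Pointwise convergence follows from a variance bound. The $j$-th rare block has variance of order
\[
\rho_{j,N}^{-2}N^{-1}\Exp\{\pi^{-1}Z_j\cdots\}\lesssim(Nh_N\rho_{j,N})^{-1}.
\]
This is where $Nh_N\rho_{j,N}\to\infty$ enters, via $\pi\ge \alpha h_N$ and Assumption~\ref{asmp:regularity-2}(i). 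Uniformity over $\Theta$ follows from Lipschitz equicontinuity (Assumption~\ref{asmp:regularity-2}(iii)) on a finite net. I expect this to be the main obstacle. The scaling $D_N^{-2}$ is harsher than $D_N^{-1}$, and the cruder bound $\pi^{-1}\le(\alpha h_N)^{-1}$ must be combined carefully with the rare indicator $Z_j$. Once consistency holds, I would take a mean-value expansion of $\Psi_n(\cbeta_n)=\0$ row by row. I would then rescale to get
\[
\0=S_N\Psi_n(\bbeta_0)+\tilde M_n(\bar\bbeta)\sqrt N\Omega_ND_N(\cbeta_n-\bbeta_0).
\]
Step 2 and nonsingularity of $M_\pi$ give the expansion, and Step 1 with Slutsky's lemma yields the stated normal limit with sandwich variance $M_\pi^{-1}\Lambda_\pi M_\pi^{-\top}$. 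The transpose appears because $\Omega_N M_{r,N}\Omega_N^{-1}$ is not symmetric.
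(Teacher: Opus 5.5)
Your proposal is correct and follows essentially the same route as the paper's proof. The shared steps are: consistency from uniform convergence of $D_N^{-2}\{\Psi_n(\bbeta)-\Psi_0(\bbeta)\}$ over a finite net (using $\pi\ge\alpha h_N$ and $Nh_N\rho_{j,N}\to\infty$); a Taylor expansion rescaled by $\Omega_N D_N$; convergence of the rescaled Hessian via the Lipschitz bound with Assumption~\ref{asmp:samp-prob-2}(i), a variance bound with Assumption~\ref{asmp:samp-prob-2}(iii), and Assumption~\ref{asmp:samp-prob-1}(ii); and an unconditional i.i.d.\ Lindeberg--Feller CLT for the score under the Lyapunov condition of Assumption~\ref{asmp:samp-prob-2}(ii). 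There is one harmless slip: since $\Omega_N D_N^{-1}\dot\Psi_n D_N^{-1}\Omega_N^{-1}\xrightarrow{p}-M_{\pi}(\bbeta_0)$, your displayed target expansion should read $+\{\Omega_N M_{r,N}(\bbeta_0)\Omega_N^{-1}\}^{-1}S_N\Psi_n(\bbeta_0)$, as your own Step~3 identity gives, and this does not change the limiting law.
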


Theorem~\ref{thm:sub1} shows that, beyond the base sampling rate,
rare-feature estimation depends on the inclusion probabilities
assigned to observations expressing the rare feature. To see this,
partition $\cbeta_n=(\cbeta_{n,u}\tp,\cbeta_{n,r}\tp)\tp$ into its
nonrare and rare-feature coordinates. For $\cbeta_{n,u}$, the
corresponding diagonal entries of $\Omega_ND_N$ are $h_{N}^{1/2}$, so
their convergence rate is $(N h_N)^{-1/2} = n^{-1/2}$. This agrees
with standard results in the subsampling literature; see, for example,
M-estimation \citep{wang2022sampling} and the Cox model
\citep{zhang2024approximating}. By contrast, the convergence rate for
the $j$-th element of $\cbeta_{n,r}$ is
$(N\rho_{j,N}\omega_{j,N})^{-1/2}$. This indicates that the convergence
rate of $\cbeta_{n,r,j}$ improves as $\omega_{j,N}$ increases;
thus, a subsampling design should make $\omega_{j,N}$ large.  Note that
$N\rho_{j,N}\omega_{j,N}=\Exp(\sumN\delta_iZ_{ij})$ is the expected number of
sampled observations expressing the $j$-th rare feature, which is bounded by
both the expected subsample size $n$ and the expected full-data count
$N\rho_{j,N}$.  Therefore, a desirable subsampling design can choose
$\varphi(Y,\X)$ so that $N\rho_{j,N}\omega_{j,N}\asymp \min\{n,N\rho_{j,N}\}$,
which implies $\omega_{j,N}$ is of order $\min\{h_N\rho_{j,N}^{-1},1\}$. We
refer to designs satisfying this order
$\omega_{j,N}\asymp\min\{h_N\rho_{j,N}^{-1},1\}$ as rarity-adaptive (RA)
designs. Equivalently, the expected inclusion probability assigned to
observations expressing a rare feature increases in inverse proportion
to its prevalence until reaching one. 

\section{Subsampling Designs}
\label{sec:subsampling-designs}

This section constructs RA subsampling designs that achieve the RA order
$\omega_{j,N}\asymp\min\{h_N\rho_{j,N}^{-1},1\}$. We first study optimal RC
designs, which
retain every rare observation and optimally allocate the remaining
sampling budget. We then evaluate Aopt and Lopt under the rarity
adaptivity objective, and propose RA-Lopt.  Finally, we introduce BL,
which does not require pilot estimates and can be used alone, as a
pilot design, or as a component of another design. The designs in this section
are formulated based on the observed full-data set $\Fn$; their
implementation details are developed in Section~\ref{sec:feasible}.

\subsection{Optimal Rare-Census Designs}
\label{sec:rc-designs}

Negative subsampling has been used for rare outcomes, with all rare
cases or events retained and the remaining observations subsampled
\citep{wang2021nonuniform,keret2023analyzing}. It is natural
to extend this strategy to rare feature settings. Let
$\RN=\{i:\Z_i\neq\0,\ i=1,\dots,N\}$ denote the index set of
observations expressing at least one rare feature (rare observations),
and let $|\RN|$ denote its size. An RC design satisfies
$\pi(Y_i,\X_i)=1$ for $i\in\RN$ and $0<\pi(Y_i,\X_i)\le1$ for $i\in\RN^c$. 

For Theorem~\ref{thm:sub1} to apply, the sampling budget for nonrare
observations, $n-|\RN|$, must be of order $n$, because
Assumption~\ref{asmp:samp-prob-1}(i) implies $n-|\RN|\ge\alpha(1-|\RN|/N)n$,
where $|\RN|/N\xrightarrow{p}0$.  RC feasibility also implies
$N\rho_{j,N}=O(n)$, so $\omega_{j,N}=1\asymp\min\{h_N\rho_{j,N}^{-1},1\}$.
Thus, RC designs form a constrained subclass of RA designs, requiring complete
retention of rare observations and a sufficiently large sampling budget for
nonrare observations.

However, retaining all rare observations and attaining the full-data
rate does not guarantee recovery of the full-data variance.
Let $\Var_N(\cdot)$ denote the finite-$N$ sandwich covariance
approximation for an estimator. For $\cbeta_n$ and $\hbeta_N$ in
Theorems~\ref{thm:sub1} and~\ref{thm:full}, respectively, we have
$\Var_{N}(\cbeta_{n}) =
N^{-1}M_N^{-1}(\bbeta_0)\Lambda_{\pi,N}(\bbeta_0)M_N^{-1}(\bbeta_0)$
and
$\Var_N(\hbeta_{N}) =
N^{-1}M_N^{-1}(\bbeta_0)\Lambda_N(\bbeta_0)M_N^{-1}(\bbeta_0)$.
For an RC design, the subsample
estimator covariance has the following decomposition:

\begin{corollary}
  \label{cor:rc-variance-decomposition}
Suppose the conditions of Theorem~\ref{thm:sub1} hold.  Under an RC
design, $\Var_{N}(\cbeta_{n})$ can be decomposed as follows:
\begin{align}
 \Var_{N}(\cbeta_{n}) 
  = \Var_{N}(\hbeta_{N})
   +
    \oneN
    M_{N}^{-1}(\bbeta_{0})
     Q_{N,u}(\bbeta_{0,u})
    M_{N}^{-1}(\bbeta_{0}),\label{eq:rc-variance-decomposition}
\end{align}
where
$Q_{N,u}(\bbeta_{0,u}) = N^{-1} \sum_{i \in \RN^{c}} [\pi(Y_{i},\X_{i})^{-1}\{1-\pi(Y_{i},\X_{i})\} \{Y_{i} - \mu(\U_{i}\tp\bbeta_{0,u})\}^{2} J_u\U_{i}\U_{i}\tp J_u\tp]$,
and $J_u=(\I_{d_u},\0_{d_u\times d_r})\tp\in\mathbb{R}^{d\times d_u}$
so that $\X_i=J_u\U_i$ for every $i\in\RN^c$.
\end{corollary}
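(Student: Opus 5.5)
The decomposition is algebraic once the two finite-$N$ sandwich expressions are written down. Both $\Var_N(\cbeta_n)$ and $\Var_N(\hbeta_N)$ share the bread $N^{-1}M_N^{-1}(\bbeta_0)(\cdot)M_N^{-1}(\bbeta_0)$. It therefore suffices to show the meat identity
\begin{align*}
\Lambda_{\pi,N}(\bbeta_0)-\Lambda_N(\bbeta_0)=Q_{N,u}(\bbeta_{0,u}).
\end{align*}

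\textbf{Step 1: express the difference observation by observation.} By definition,
\begin{align*}
\Lambda_{\pi,N}(\bbeta_0)-\Lambda_N(\bbeta_0)
=\oneN\sumN\Bigl\{\frac{1-\pi(Y_i,\X_i)}{\pi(Y_i,\X_i)}\Bigr\}\psi^{\otimes2}(Y_i,\X_i,\bbeta_0).
\end{align*}

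\textbf{Step 2: split over $\RN$ and $\RN^c$.} Under an RC design, $\pi(Y_i,\X_i)=1$ for every $i\in\RN$. Those terms therefore carry the factor $1-\pi=0$ and vanish, which is exactly where complete retention enters. Only the sum over $i\in\RN^c$ remains.

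\textbf{Step 3: simplify the surviving terms.} For $i\in\RN^c$ we have $\Z_i=\0$, so $\X_i=(\U_i\tp,\0\tp)\tp=J_u\U_i$. Partitioning $\bbeta_0=(\bbeta_{0,u}\tp,\bbeta_{0,r}\tp)\tp$ gives $\X_i\tp\bbeta_0=\U_i\tp\bbeta_{0,u}$. Hence
\begin{align*}
\psi(Y_i,\X_i,\bbeta_0)=\{Y_i-\mu(\U_i\tp\bbeta_{0,u})\}J_u\U_i,
\qquad
\psi^{\otimes2}=\{Y_i-\mu(\U_i\tp\bbeta_{0,u})\}^2J_u\U_i\U_i\tp J_u\tp.
\end{align*}
Substituting these into the sum over $\RN^c$ yields exactly $Q_{N,u}(\bbeta_{0,u})$.

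\textbf{Step 4: multiply back by the bread.} Pre- and post-multiplying by $N^{-1}M_N^{-1}(\bbeta_0)$ gives \eqref{eq:rc-variance-decomposition}.

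The conditions of Theorem~\ref{thm:sub1} are needed only so that $\Var_N(\cbeta_n)$ is the relevant sandwich approximation and $M_N(\bbeta_0)$ is invertible. Invertibility follows from Assumption~\ref{asmp:regularity-3} for large $N$, since $M_{r,N}$ converges to a positive-definite limit.

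There is no real obstacle; the only care point is the bookkeeping in Step 3. One must verify that $\mu(\X_i\tp\bbeta_0)$ depends only on $\bbeta_{0,u}$ for nonrare observations, which justifies writing $Q_{N,u}$ as a function of $\bbeta_{0,u}$ alone. As a remark, $Q_{N,u}$ is positive semidefinite because $0<\pi\le1$, so the RC covariance dominates the full-data covariance.
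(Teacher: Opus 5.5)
Your proposal is correct and follows the paper's proof essentially line for line. The paper also reduces the claim to the meat identity $\Lambda_{\pi,N}(\bbeta_0)=\Lambda_N(\bbeta_0)+Q_{N,u}(\bbeta_{0,u})$, using $\pi(Y_i,\X_i)=1$ on $\RN$ and $\X_i=J_u\U_i$ on $\RN^c$, and then substitutes into the shared sandwich form. One small quibble: Assumption~\ref{asmp:regularity-3} gives invertibility of $M_N(\bbeta_0)=D_NM_{r,N}(\bbeta_0)D_N$ only with probability tending to one, not deterministically for large $N$, but this does not affect the algebraic identity.
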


Equation~\eqref{eq:rc-variance-decomposition} separates the covariance
of the full-data estimator from the additional covariance induced by
subsampling observations in $\RN^c$. Although $Q_{N,u}(\bbeta_{0,u})$
is nonzero only on the nonrare feature block, multiplication by
$M_N^{-1}(\bbeta_0)$ can couple the nonrare and rare
coordinates. Consequently, for the $j$-th rare feature estimator
$\cbeta_{n,r,j}$, the additional variance
$N^{-1}\{\e_{r,j}\tp M_N^{-1}(\bbeta_0)\}Q_{N,u}(\bbeta_{0,u}) \{M_N^{-1}(\bbeta_0)\e_{r,j}\}$,
where $\e_{r,j}\in\mathbb R^d$ is the unit vector for rare feature
coordinate $j$, need not be zero. The variances of nonrare coefficient
estimators are always inflated by subsampling. This result differs from the
rare-outcome setting of \citet{wang2021nonuniform}, where retaining
all positive cases can make variance inflation negligible for all
coefficients when the subsample size greatly exceeds the number of
positive cases. In the rare-feature setting, even when the subsample
size greatly exceeds $|\RN|$, an RC design generally cannot recover
the full-data variance, even for rare coordinates, without stronger
assumptions on feature dependence.

As discussed above, the allocation of the remaining sampling budget
over $\RN^c$ is important because it affects the variances of both nonrare
and rare feature coefficient estimators. A uniform RC design assigns
equal sampling probabilities to observations in $\RN^c$. Although
simple, it ignores differences among individual observations, and we
use it as a baseline RC design. We aim to derive optimal RC designs in
the sense of minimizing the variance of the subsample estimator or its
transformation.

\begin{proposition}
\label{prop:rc-optimal-scores}
Assume that
$|Y_i-\mu(\U_i\tp\bbeta_{0,u})|
\|L M_N^{-1}(\bbeta_0)J_u\U_i\|>0$
for every $i\in\RN^c$, where $L$ is a matrix of compatible dimensions. The RC weight function minimizing
$\tr\{\Var_{N}(L\cbeta_n)\}$ satisfies
$\varphi(Y_i,\U_i) \propto |Y_i-\mu(\U_i\tp\bbeta_{0,u})| \left\|L M_N^{-1}(\bbeta_0)J_u\U_i\right\|$,
$i\in\RN^c$. Taking $L=\I$ gives the RC A-optimal
(RC-Aopt) weight function
$\varphi^{\text{RC-Aopt}}(Y_i,\U_i) \propto |Y_i-\mu(\U_i\tp\bbeta_{0,u})| \left\|M_N^{-1}(\bbeta_0)J_u\U_i\right\|$,
$i\in\RN^c$.  Taking $L=M_N(\bbeta_0)$ gives the RC L-optimal
(RC-Lopt) weight function
$\varphi^{\text{RC-Lopt}}(Y_i,\U_i) \propto |Y_i-\mu(\U_i\tp\bbeta_{0,u})|\|\U_i\|$,
$i\in\RN^c$.
\end{proposition}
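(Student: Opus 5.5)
We reduce the problem to a constrained Cauchy--Schwarz minimization, using the decomposition in Corollary~\ref{cor:rc-variance-decomposition}. Since $\Var_N(L\cbeta_n)=L\Var_N(\cbeta_n)L\tp$ and the term $L\Var_N(\hbeta_N)L\tp$ does not depend on $\pi$, minimizing $\tr\{\Var_N(L\cbeta_n)\}$ over RC designs is equivalent to minimizing
\begin{align*}
\tr\{L M_N^{-1}Q_{N,u}M_N^{-1}L\tp\}
=\oneN\sum_{i\in\RN^c}\Big(\frac{1}{\pi_i}-1\Big)\{Y_i-\mu(\U_i\tp\bbeta_{0,u})\}^2\|LM_N^{-1}(\bbeta_0)J_u\U_i\|^2 .
\end{align*}
Here $\pi_i=\pi(Y_i,\X_i)$ and $M_N=M_N(\bbeta_0)$. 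The rare observations have $\pi_i=1$, so for $i\in\RN^c$ the residual $Y_i-\mu(\X_i\tp\bbeta_0)$ equals $Y_i-\mu(\U_i\tp\bbeta_{0,u})$. Write $a_i=|Y_i-\mu(\U_i\tp\bbeta_{0,u})|\,\|LM_N^{-1}J_u\U_i\|>0$. The $-1$ part contributes the constant $-\sum a_i^2/N$. The problem is therefore to minimize $\sum_{i\in\RN^c}a_i^2/\pi_i$ subject to $\sum_{i\in\RN^c}\pi_i=n-|\RN|$ and $0<\pi_i\le1$.

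Ignoring the upper bound $\pi_i\le 1$, Cauchy--Schwarz gives
\begin{align*}
\Big(\sum a_i\Big)^2=\Big(\sum \frac{a_i}{\sqrt{\pi_i}}\sqrt{\pi_i}\Big)^2\le \sum\frac{a_i^2}{\pi_i}\sum\pi_i .
\end{align*}
Equality holds iff $\pi_i\propto a_i$, which yields the minimizer $\pi_i=(n-|\RN|)a_i/\sum_{k\in\RN^c}a_k$. In the notation $\pi=\min\{h_N\varphi,1\}$ this is $\varphi(Y_i,\U_i)\propto a_i$, which is the stated general form.

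The cap at one is the main technical point. If some $\pi_i$ would exceed one, I would handle it with the standard truncation argument of \citet{wang2018optimal} and related work. The optimum then has the form $\pi_i=\min\{c\,a_i,1\}$, where $c$ is chosen so the budget is met, which one verifies from the KKT conditions of this convex problem. The weight function is still proportional to $a_i$ before truncation, which is what the statement asserts. The positivity assumption $a_i>0$ guarantees $\pi_i>0$, so the design remains valid and the objective stays finite.

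The two special cases then follow immediately. Taking $L=\I$ gives $a_i=|Y_i-\mu(\U_i\tp\bbeta_{0,u})|\,\|M_N^{-1}J_u\U_i\|$, which is RC-Aopt. Taking $L=M_N(\bbeta_0)$ gives $\|J_u\U_i\|=\|\U_i\|$, because $J_u$ only pads $\U_i$ with zeros; this is RC-Lopt.
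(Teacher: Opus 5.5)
Your proposal is correct and follows essentially the paper's route: you use Corollary~\ref{cor:rc-variance-decomposition} to reduce the problem to minimizing $\sum_{i\in\RN^c}a_i^2/\pi_i$ subject to $\sum_{i\in\RN^c}\pi_i=n-|\RN|$ and $0<\pi_i\le 1$, and then solve that convex program. The paper skips your Cauchy--Schwarz warm-up and carries out directly the KKT step you only sketch: stationarity $a_i^2/\pi_i^2=\lambda+\nu_i$, together with complementary slackness, gives $\pi_i=\lambda^{-1/2}a_i$ when $\pi_i<1$ and $\lambda^{-1/2}a_i\ge 1$ when $\pi_i=1$, hence $\pi_i=\min\{\tau a_i,1\}$ and $\varphi\propto a_i$.
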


\begin{remark}
The weight functions in Proposition~\ref{prop:rc-optimal-scores} are
stated up to proportionality. A common normalizer is required to
obtain valid inclusion probabilities with the target expected
subsample size. Its implementation is described in
Section~\ref{sec:feasible}.
\end{remark}

The residual term $|Y_i-\mu(\U_i\tp\bbeta_{0,u})|$ in the RC-Aopt and RC-Lopt weight functions depends
on $\bbeta_{0,u}$ but not on the rare feature coefficients
$\bbeta_{0,r}$. Moreover, RC-Lopt does not require calculating
$M_N^{-1}(\bbeta_0)$. Its sampling probabilities therefore avoid
dependence on pilot estimates of the rare feature coefficients and do
not require estimating $M_N(\bbeta_0)$. This reduces sensitivity to
pilot instability and the computational burden.

The complete retention of $\RN$, however, can be restrictive as it requires
$n>|\RN|$. Even when it is feasible, a
small remaining budget $n-|\RN|$ can reduce estimation accuracy for all
coefficients. An RC design also requires the analyst to prespecify the rare
feature set, which may be difficult in practice, such as when binary
features have heterogeneous prevalences. In particular, a relatively
rare feature that is not included in $\RN$ is not protected by
an RC design. This can lead to inaccurate or even
nonidentifiable coefficient estimates.

\subsection{Optimal Rarity-Adaptive Designs}
\label{sec:ra-designs}

We now consider designs that attain the RA target without enforcing
complete retention. We first examine whether commonly used uniform (Uni), Aopt,
and Lopt designs achieve this target. Uni is not an RA design because
$\varphi^{\text{Uni}}(Y,\X)=1$ implies $\omega_{j,N}=h_N$ for every rare
feature.  Aopt and Lopt are derived by optimizing covariance criteria; the
following proposition presents them in the current setup for completeness.

\begin{proposition}
  \label{prop:original-optimal}
Assume that $|Y_{i} - \mu(\X_{i}\tp\bbeta_0)| \left\|L M_{N}^{-1}(\bbeta_{0}) \X_{i}\right\|>0$
for every $i=1,\dots,N$, where $L$ is a matrix of compatible dimensions.
The weight function $\varphi$ for the design
that minimizes $\tr\{\Var_N(L\cbeta_n)\}$ subject to
$0<\pi(Y_i,\X_i)\le1$ and $\sumN\pi(Y_i,\X_i)=n$ satisfies
$\varphi(Y_{i},\X_{i}) \propto |Y_{i} - \mu(\X_{i}\tp\bbeta_0)| \left\|L M_{N}^{-1}(\bbeta_{0}) \X_{i}\right\|$,
$i=1,\dots,N$. Taking $L=\I$ and
$L=M_N(\bbeta_0)$ gives the Aopt and Lopt weight functions,
respectively:
\begin{align}
  \varphi^{\text{Aopt}}(Y_{i},\X_{i})
  &\propto |Y_{i} - \mu(\X_{i}\tp\bbeta_0)|
  \left\|M_{N}^{-1}(\bbeta_{0}) \X_{i}\right\|,
  \label{eq:aopt-score}\\
  \varphi^{\text{Lopt}}(Y_{i},\X_{i})
  &\propto |Y_{i} - \mu(\X_{i}\tp\bbeta_0)| \|\X_{i}\|, \quad i=1,\dots,N.
  \label{eq:lopt-score}
\end{align}
\end{proposition}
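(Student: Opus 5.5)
Since $\Var_N(\cbeta_n)=N^{-1}M_N^{-1}(\bbeta_0)\Lambda_{\pi,N}(\bbeta_0)M_N^{-1}(\bbeta_0)$, the criterion reduces to a separable function of the inclusion probabilities. We have $\psi^{\otimes2}(Y_i,\X_i,\bbeta_0)=\{Y_i-\mu(\X_i\tp\bbeta_0)\}^2\X_i\X_i\tp$. Writing $a_i=|Y_i-\mu(\X_i\tp\bbeta_0)|\,\|LM_N^{-1}(\bbeta_0)\X_i\|$, linearity of trace and cyclicity give
\begin{align*}
\tr\{\Var_N(L\cbeta_n)\}=\frac{1}{N^2}\sumN \frac{\{Y_i-\mu(\X_i\tp\bbeta_0)\}^2\,\|LM_N^{-1}(\bbeta_0)\X_i\|^2}{\pi_i}=\frac{1}{N^2}\sumN\frac{a_i^2}{\pi_i},
\end{align*}
where $\pi_i=\pi(Y_i,\X_i)$. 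Because $M_N(\bbeta_0)$ does not depend on the design, the problem becomes: minimize $\sumN a_i^2/\pi_i$ subject to $0<\pi_i\le1$ and $\sumN\pi_i=n$. The objective is strictly convex in $(\pi_1,\dots,\pi_N)$ on the positive orthant because every $a_i>0$ by assumption, and the feasible set is convex and compact after intersecting with any level set. Hence a unique minimizer exists.

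Ignoring the upper bound, Cauchy--Schwarz gives $(\sumN a_i)^2\le(\sumN a_i^2/\pi_i)(\sumN\pi_i)$, with equality if and only if $\pi_i\propto a_i$. So the unconstrained optimum is $\pi_i=n a_i/\sumN a_k$. To handle $\pi_i\le1$, I would use the KKT conditions. The Lagrangian $\sumN a_i^2/\pi_i+\lambda(\sumN\pi_i-n)+\sumN\nu_i(\pi_i-1)$ gives $a_i^2/\pi_i^2=\lambda+\nu_i$ with $\nu_i\ge0$ and $\nu_i(\pi_i-1)=0$. Therefore $\pi_i=\min\{a_i/\sqrt{\lambda},1\}$. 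Setting $h_N\varphi(Y_i,\X_i)=a_i/\sqrt\lambda$ yields $\pi_i=\min\{h_N\varphi(Y_i,\X_i),1\}$ with $\varphi\propto a_i$, and the common normalizer $\lambda$ is fixed by $\sumN\pi_i=n$. The main obstacle is this truncation step: I must check that such a $\lambda$ exists and is unique. The map $\lambda\mapsto\sumN\min\{a_i/\sqrt\lambda,1\}$ is continuous and nonincreasing, runs from $N$ down to $0$, and is strictly decreasing while at least one $\pi_i<1$; since $n<N$ this settles it. Convexity of the problem then makes the KKT point the global minimizer, which justifies stating the result up to proportionality.

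Finally, the special cases follow by substitution. With $L=\I$, $a_i=|Y_i-\mu(\X_i\tp\bbeta_0)|\,\|M_N^{-1}(\bbeta_0)\X_i\|$, which is \eqref{eq:aopt-score}. With $L=M_N(\bbeta_0)$, we have $LM_N^{-1}(\bbeta_0)=\I$, so $a_i=|Y_i-\mu(\X_i\tp\bbeta_0)|\,\|\X_i\|$, which is \eqref{eq:lopt-score}. This is the same argument that underlies Proposition~\ref{prop:rc-optimal-scores}, applied to all $N$ observations rather than to $\RN^c$ alone.
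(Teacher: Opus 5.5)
Your proposal is correct and follows essentially the same route as the paper. Both reduce $\tr\{\Var_N(L\cbeta_n)\}$ to $N^{-2}\sum_i a_i^2/\pi_i$, solve the resulting convex program through the KKT conditions to get $\pi_i=\min\{a_i/\sqrt{\lambda},1\}$, and then specialize $L$. Your added checks (a minimizer exists via compact sublevel sets, and the normalizer $\lambda$ exists and is unique) fill in details the paper leaves implicit. You should also note, as the paper does, that $\lambda>0$ because some $\pi_k<1$ when $n<N$, and that the case $n=N$ is trivial.
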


The $j$-th rare coordinate of $M_N^{-1}(\bbeta_0)\X_i$ in Aopt's
weight function \eqref{eq:aopt-score} can be of order
$\rho_{j,N}^{-1}$ when $Z_{ij}=1$, so Aopt can exhibit the desired
rarity amplification and attain
$\omega_{j,N}\asymp\min\{h_N\rho_{j,N}^{-1},1\}$. However, Aopt's
rarity adaptivity may not be realized in practice. The Aopt weights
depend on $\bbeta_0$ through both $\mu(\X\tp\bbeta_0)$ and
$M_N^{-1}(\bbeta_0)$. A feasible implementation therefore usually
estimates these quantities from a pilot sample.
If rare features are poorly represented in
the pilot sample, both the pilot estimator for $\bbeta_{0,r}$ and the
approximation to $M_N^{-1}(\bbeta_0)$ may be unstable, thereby
distorting the Aopt sampling probabilities.

Lopt is computationally simpler than Aopt because it avoids
$M_N^{-1}(\bbeta_0)$ in Aopt's norm calculation, but this
simplification removes the inverse prevalence amplification. In the
weight function \eqref{eq:lopt-score}, when $Z_{ij}=1$, the $j$-th
rare coordinate contributes only one unit to $\|\X_i\|$ rather than an
inverse prevalence amplification. This contribution can be
dominated by the nonrare coordinates. Consequently, the Lopt weights
generally cannot achieve
$\omega_{j,N}\asymp\min\{h_N\rho_{j,N}^{-1},1\}$, and Lopt therefore
does not adequately adapt to feature rarity. The numerical results in
Section~\ref{sec:simulation} illustrate the resulting loss in
estimation accuracy.

To restore rarity adaptivity without using
$M_N^{-1}(\bbeta_0)$, we adopt the rescaled transformation
$L=D_{N}^{-2}M_{N}(\bbeta_{0})$, which leads to the RA-Lopt
subsampling design:

\begin{proposition}
\label{prop:ra-lopt}
Assume that
$|Y_i-\mu(\X_i\tp\bbeta_0)|\|D_N^{-2}\X_i\|>0$ for every
$i=1,\dots,N$. Subject to the constraints $0<\pi(Y_i,\X_i)\le 1$ and
$\sumN\pi(Y_i,\X_i)=n$, the weight function minimizing
$\tr\{\Var_{N}(D_{N}^{-2}M_{N}(\bbeta_{0})\cbeta_n)\}$ satisfies
\begin{align}
  \label{eq:ra-lopt}
  \varphi^{\text{RA-Lopt}}(Y_{i},\X_{i})
  \propto
  |Y_{i} - \mu(\X_{i}\tp\bbeta_0)|
  \left\|D_N^{-2} \X_{i}\right\|, \qquad i=1,\dots,N.
\end{align}
\end{proposition}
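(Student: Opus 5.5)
Proposition~\ref{prop:ra-lopt} is a special case of the general-$L$ statement in Proposition~\ref{prop:original-optimal}. The plan is to substitute $L=D_N^{-2}M_N(\bbeta_0)$ there and simplify, while also spelling out the underlying optimization so that the argument stands on its own.

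\textbf{Step 1: reduce the objective.} Start from $\Var_N(\cbeta_n)=N^{-1}M_N^{-1}(\bbeta_0)\Lambda_{\pi,N}(\bbeta_0)M_N^{-1}(\bbeta_0)$ and set $L=D_N^{-2}M_N(\bbeta_0)$. Using the cyclic property of the trace,
\begin{align*}
\tr\{\Var_N(L\cbeta_n)\}
=\frac{1}{N^2}\sumN \frac{\{Y_i-\mu(\X_i\tp\bbeta_0)\}^2\,\|L M_N^{-1}(\bbeta_0)\X_i\|^2}{\pi(Y_i,\X_i)}
=\frac{1}{N^2}\sumN \frac{a_i^2}{\pi(Y_i,\X_i)},
\end{align*}
where $a_i=|Y_i-\mu(\X_i\tp\bbeta_0)|\,\|D_N^{-2}\X_i\|$. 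The point of the rescaled choice of $L$ is exactly this: $LM_N^{-1}(\bbeta_0)=D_N^{-2}$, so the inverse Hessian cancels. One caveat is that $\Var_N$ as defined uses $\Lambda_{\pi,N}$, which is the leading term. If one instead keeps the exact variance with the factor $(1-\pi_i)/\pi_i$, it differs only by a term that does not depend on $\pi$, so the minimizer is unchanged.

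\textbf{Step 2: solve the optimization.} The task is to minimize $\sum_i a_i^2/\pi_i$ subject to $0<\pi_i\le1$ and $\sum_i\pi_i=n$. The objective is strictly convex in $(\pi_i)$ on the feasible region, which is a convex set. The positivity assumption $a_i>0$ ensures that every $\pi_i$ must be positive and that the minimizer is unique.

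\begin{itemize}
\item \emph{Unconstrained case.} Ignoring the upper bound, Cauchy--Schwarz gives
\[
\Big(\sum_i a_i\Big)^2\le\Big(\sum_i a_i^2/\pi_i\Big)\Big(\sum_i\pi_i\Big),
\]
with equality if and only if $\pi_i\propto a_i$.
\item \emph{With the cap $\pi_i\le1$.} Write the KKT conditions with a multiplier $\lambda$ for the budget constraint and multipliers $\nu_i\ge0$ for the caps. Stationarity gives $a_i^2/\pi_i^2=\lambda+\nu_i$. For uncapped units, $\nu_i=0$, so $\pi_i=a_i/\sqrt\lambda$. Capped units have $\pi_i=1$.
\end{itemize}

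Together these say $\pi_i=\min\{h_N\varphi_i,1\}$ with $\varphi_i\propto a_i$ and a normalizing constant chosen so that $\sum_i\pi_i=n$. This is precisely \eqref{eq:ra-lopt}. As with the earlier propositions, the result is understood as holding up to this normalization.

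\textbf{Main obstacle.} The only delicate point is the capped case. One must check that there is a unique threshold normalizer for which the truncated probabilities meet the budget exactly. This follows because $c\mapsto\sum_i\min\{c\,a_i,1\}$ is continuous and nondecreasing, equals $0$ at $c=0$, and reaches $N\ge n$ as $c\to\infty$. Given that, the KKT conditions certify that the solution found is the global optimum, by convexity.
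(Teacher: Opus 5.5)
Your proposal is correct and follows the paper's route. The paper proves this proposition by applying Proposition~\ref{prop:original-optimal} with $L=D_N^{-2}M_N(\bbeta_0)$, so that $LM_N^{-1}(\bbeta_0)=D_N^{-2}$, and that proposition's proof is the same trace reduction to $\sum_i a_i^2/\pi_i$ plus the KKT argument you spell out; your added check that a budget-matching normalizer exists, together with the KKT-sufficiency certification, makes the argument slightly more complete but is not a different method.
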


The matrix $D_N^{-2}$ scales the coordinate for rare feature $j$ by a
factor of order $\rho_{j,N}^{-1}$. Thus, when $Z_{ij}=1$, this
coordinate contributes a term of order $\rho_{j,N}^{-1}$ to
$\|D_N^{-2}\X_i\|$. Through this inverse prevalence scaling, RA-Lopt
can attain 
$\omega_{j,N}\asymp\min\{h_N\rho_{j,N}^{-1},1\}$. Compared with Aopt,
RA-Lopt replaces the dense multiplication $M_N^{-1}(\bbeta_0)\X$ with
the diagonal scaling $D_N^{-2}\X$. Therefore, evaluating the RA-Lopt
weights for $N$ observations costs $O(Nd)$, the same complexity order
as Lopt, whereas evaluating the Aopt weights costs $O(Nd^2)$. This
replacement avoids estimating $M_N(\bbeta_0)$, although RA-Lopt still
requires a pilot estimate of $\bbeta_0$ for the residual term.

In practice, the unknown prevalences $\rho_{j,N}$ in $D_N$ must be
replaced by their empirical counterparts. Applying this scaling to all
binary features avoids the need to prespecify the rare-feature set.

\subsection{Balanced Design}
\label{sec:bl}

All optimal designs introduced above depend on unknown quantities and
require pilot estimates for feasible implementation. Their performance
can depend strongly on pilot quality, especially when rare
features are poorly represented in the pilot sample. Implementation in
two stages is also less convenient in practice. We therefore propose
BL, whose sampling probabilities use only binary feature information
and do not depend on any unknown
parameters. As a standalone design, BL is one-step, avoids
instability from pilot estimation, and does not require a prespecified
rare feature set. BL can also be used as a pilot design or as a
component of another design when additional sampling objectives are
incorporated.

Let $\mathcal B\subset\{1,\dots,d\}$ denote the set of indices corresponding to
binary features in $\X$, and define $\bar\X_{\cdot k}=N^{-1}\sumN X_{ik}$ for
$k\in\mathcal B$.  For observation $i$, define the balance score as
\begin{equation*}
  b(\X_i) = \sum_{k \in \mathcal{B}}
  \frac{\left| X_{ik} - \bar{\X}_{\cdot k}\right|}{\bar{\X}_{\cdot k}},
  \quad i=1,\dots,N,
\end{equation*}
where $X_{ik}$ is the value of binary feature $k$. The contribution of
binary feature $k$ to $b(\X_i)$ is
$(1-\bar\X_{\cdot k})/\bar\X_{\cdot k}$ when $X_{ik}=1$ and $1$ when
$X_{ik}=0$. Thus, an observation expressing a binary feature with
lower empirical prevalence receives a larger contribution.
Each feature's contribution to the balance score has the same total
across observations at the two feature levels. Specifically, for
every $k\in\mathcal B$,
\begin{align*}
  \sum_{i:X_{ik}=1}\frac{\left| X_{ik} - \bar{\X}_{\cdot k}\right|}{\bar{\X}_{\cdot k}}
  =
  \sum_{i:X_{ik}=0}\frac{\left| X_{ik} - \bar{\X}_{\cdot k}\right|}{\bar{\X}_{\cdot k}}.
\end{align*}
In this sense, the contribution of two levels of a single binary
feature to the score is balanced. Note that the total scores
$\sum_{i:X_{ik}=1}b(\X_i)$ and $\sum_{i:X_{ik}=0}b(\X_i)$ need not be
equal because the contributions from the remaining binary features
depend on their dependence structure.

As a standalone design, BL uses the weight function
$\varphi^{\mathrm{BL}}(\X_i)\propto b(\X_i)$. A desirable property of
BL is that it attains the RA target order:
\begin{proposition}
\label{prop:bl-order}
Let $\pi^{\mathrm{BL}}(\X_{i})=\min\{h_N\tau_1b(\X_i),1\}$,
$i=1,\dots,N$, where $\tau_1>0$ is chosen so that
$\sumN\pi^{\mathrm{BL}}(\X_{i})=n$. This design satisfies
$\omega_{j,N}^{\mathrm{BL}} = \Exp\{\pi^{\mathrm{BL}}(\X)| Z_{\cdot j}=1\} \asymp \min\{h_N\rho_{j,N}^{-1},1\}$,
$j=1,\dots,d_r$.
\end{proposition}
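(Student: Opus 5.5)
The plan is to get two-sided bounds on the normalizer $\tau_1$ and on the balance score $b(\X)$ of a rare observation, and then combine them. First I will show that $\tau_1\asymp 1$ with probability tending to one. Write $b(\X_i)=b_0(\X_i)+\sum_{k\in\mathcal B}\mathbb I(X_{ik}=1)\{(1-\bar\X_{\cdot k})/\bar\X_{\cdot k}-1\}$, where $b_0(\X_i)=|\mathcal B|$ collects the unit contributions. By the balance identity, for each $k$,
\[
N^{-1}\sumN \frac{|X_{ik}-\bar\X_{\cdot k}|}{\bar\X_{\cdot k}}=2(1-\bar\X_{\cdot k}),
\]
so $N^{-1}\sumN b(\X_i)=2\sum_{k\in\mathcal B}(1-\bar\X_{\cdot k})\in[c,2|\mathcal B|]$. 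The constant $c>0$ exists as long as at least one binary feature (for instance a rare one) has prevalence bounded away from one, and this holds because $d_r\ge1$. Every score also satisfies $b(\X_i)\ge \#\{k\in\mathcal B: X_{ik}=0\}$, which I will take to be at least $1$ whenever some rare feature is present; otherwise a feature equal to one for everyone would not be informative.

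Next I bound $\tau_1$ from both sides. For the upper bound: $\min\{h_N\tau_1 b_i,1\}\ge \min\{h_N\tau_1,1\}$ because $b_i\ge1$, so the constraint $\sum_i\pi_i=n$ forces $\min\{h_N\tau_1,1\}\le h_N$, that is $\tau_1\le1$. For the lower bound: $n=\sum_i\pi_i\le h_N\tau_1\sum_i b_i\le 2|\mathcal B|\,n\,\tau_1$, hence $\tau_1\ge 1/(2|\mathcal B|)$.

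Now I fix a rare feature $j$ and use $\bar Z_{\cdot j}/\rho_{j,N}\xrightarrow{p}1$. This follows from Chebyshev, since $\Var(\bar Z_{\cdot j})/\rho_{j,N}^2\le 1/(N\rho_{j,N})\to0$ by Assumption~\ref{asmp:regularity-1}. Given $Z_{\cdot j}=1$, the score satisfies
\[
b(\X)\ge (1-\bar Z_{\cdot j})/\bar Z_{\cdot j}\asymp\rho_{j,N}^{-1}.
\]
Combining this with $\tau_1\ge 1/(2|\mathcal B|)$ gives $\pi\ge\min\{c'h_N\rho_{j,N}^{-1},1\}$ on $\{Z_{\cdot j}=1\}$, and taking the conditional expectation yields the lower bound $\omega_{j,N}\gtrsim\min\{h_N\rho_{j,N}^{-1},1\}$.

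The upper bound is the main obstacle, because other, even rarer, features co-expressed with $Z_{\cdot j}$ can inflate $b(\X)$. I will write $b(\X)\le |\mathcal B|+\sum_k X_k/\bar\X_{\cdot k}$, use $\tau_1\le1$ and $\min\{a+c,1\}\le\min\{a,1\}+\min\{c,1\}$, and obtain
\[
\omega_{j,N}\le \min\{C h_N,1\}+\sum_{k\in\mathcal B}\Exp\bigl[\min\{h_N X_k/\bar\X_{\cdot k},1\}\mid Z_{\cdot j}=1\bigr].
\]
The $k=j$ term is $\lesssim\min\{h_N\rho_{j,N}^{-1},1\}$. For $k\neq j$, I bound the term by $\min\{h_N\rho_k^{-1},1\}\Pr(X_k=1\mid Z_{\cdot j}=1)$. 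I then need $\Pr(X_k=1\mid Z_{\cdot j}=1)\lesssim\rho_k/\rho_{j,N}$, which holds only if $\Pr(X_k=Z_{\cdot j}=1)\le\rho_k$. This is trivially true, and with it the term becomes $\le\min\{h_N/\rho_{j,N},\rho_k/\rho_{j,N}\}\le\min\{h_N\rho_{j,N}^{-1},1\}$ when $\rho_k\le\rho_{j,N}$; in the other case I use instead the bound $h_N\rho_k^{-1}\le h_N\rho_{j,N}^{-1}$. Finally, $h_N\lesssim\min\{h_N\rho_{j,N}^{-1},1\}$, which completes the two-sided order. The remaining care point is replacing empirical prevalences by $\rho$, which I handle on the high-probability event where all ratios $\bar\X_{\cdot k}/\rho_k$ lie in $[1/2,2]$.
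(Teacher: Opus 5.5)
\textbf{Lower bound.} Your lower bound is essentially the paper's. The identity $N^{-1}\sum_{i}b(\X_i)=\sum_{k\in\mathcal B}2(1-\bar\X_{\cdot k})\le 2|\mathcal B|$ gives $\tau_1\ge 1/(2|\mathcal B|)$. Then $b(\X_1)\ge(1-\bar Z_{\cdot j})/\bar Z_{\cdot j}$, which is of order $\rho_{j,N}^{-1}$ on $\{\bar Z_{\cdot j}\le 2\rho_{j,N}\}$.

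One small repair is needed. You need this event to have conditional probability tending to one given $Z_{1j}=1$. Unconditional Chebyshev does not give that, because you are conditioning on an event of probability $\rho_{j,N}\to0$. Instead, write $\bar Z_{\cdot j}=(1+\sum_{i\ge2}Z_{ij})/N$ under the conditioning and apply Chebyshev to the independent sum, as the paper does.

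\textbf{Upper bound: the paper's route.} Here you take a much harder route than necessary, and the obstacle you identify (co-expressed rarer features inflating $b$) does not actually arise. The paper's argument uses nothing about $b$ or $\tau_1$. By exchangeability, $\Exp(\pi_1^{\mathrm{BL}})=N^{-1}\Exp(\sum_i\pi_i^{\mathrm{BL}})=h_N$. Hence $\rho_{j,N}\omega_{j,N}^{\mathrm{BL}}=\Exp(\pi_1^{\mathrm{BL}}Z_{1j})\le\Exp(\pi_1^{\mathrm{BL}})=h_N$, and $\omega_{j,N}^{\mathrm{BL}}\le1$ trivially. This is just the budget constraint $N\rho_{j,N}\omega_{j,N}\le n$ noted after Theorem~\ref{thm:sub1}. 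It holds for any exchangeable design with $\sum_i\pi_i=n$, however the score inflates individual probabilities.

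\textbf{Upper bound: your route.} Your decomposition can be made to close, but two steps are not right as written.

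First, $\tau_1\le1$ needs $b(\X_i)\ge1$ for every $i$. Your justification via $\#\{k:X_{ik}=0\}\ge1$ fails for an observation expressing every binary feature. The correct reason is that rare feature $j$ contributes $1$ when $Z_{ij}=0$, and $(1-\bar Z_{\cdot j})/\bar Z_{\cdot j}\ge1$ when $Z_{ij}=1$ on the event $\bar Z_{\cdot j}\le 1/2$.

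Second, replacing $\bar\X_{\cdot k}$ by $\rho_k$ on one global event where all ratios lie in $[1/2,2]$ only gives $\omega_{j,N}\le C\min\{h_N\rho_{j,N}^{-1},1\}+\Pr(\text{complement}\mid Z_{1j}=1)$. That complement probability is of order $1/(N\rho_{\min})$, driven by the rarest feature. It need not be $O(h_N\rho_{j,N}^{-1})$ when $\rho_{j,N}/\rho_{\min}\gg n$, so an $o(1)$ error is not enough once $h_N\rho_{j,N}^{-1}\to0$. You have to localize, in three steps:
\begin{enumerate}
\item Control feature $k$ only on $\{X_{1k}=1\}$.
\item Apply Chebyshev to $\bar\X_{\cdot k}$ given $X_{1k}=1$. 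The bad-event contribution is then $\lesssim 1/(N\rho_k)\le h_N\rho_k^{-1}$, using $n\ge1$.
\item Multiply by $\Pr(X_{1k}=1\mid Z_{1j}=1)\le\min\{\rho_k/\rho_{j,N},1\}$.
\end{enumerate}
With these repairs your argument works, but it is considerably longer and depends on the structure of $b$, which the budget argument avoids entirely.
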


Because its sampling probabilities do not depend on unknown
quantities, BL can be used to draw a pilot sample for estimating the
unknown quantities required by a subsequent design. The balance score
can also be incorporated into another sampling weight. For example,
combining it with Lopt gives
\begin{equation}
  \varphi^{\text{BL-Lopt}}(Y_i,\X_i)
  \propto b(\X_i)|Y_i-\mu(\X_i\tp\bbeta_0)|\|\X_i\|,
\end{equation}
which combines the prevalence information with the residual and
feature magnitude information used by classical Lopt. For the double
rarity problem in Section~\ref{sec:double-rarity}, BL can be
combined with case-control sampling to jointly emphasize rare features
and rare outcomes.

\begin{remark}
For standalone BL, the sampling probability
$\pi^{\text{BL}}(\X)=\min\{h_N\varphi^{\text{BL}}(\X),1\}$ does not
depend on the response $Y$. Consequently, one can solve the unweighted
estimating equation
$N^{-1} \sumN \{\delta_{i} \psi(Y_i, \X_i, \bbeta)\} = \0$ instead of
the weighted estimating equation in \eqref{eq:ee-sub} without
introducing estimation bias. Under a correctly specified model, the
unweighted estimator is typically more efficient than its weighted
counterpart, although it is more sensitive to model misspecification
\citep{WangWangXiong2024}. This option does not apply to
response-dependent combinations such as BL-Lopt or BL with case
control sampling.
\end{remark}

\section{Implementation and Theoretical Guarantees}
\label{sec:implementation}

Several designs developed in Section~\ref{sec:subsampling-designs}
depend on unknown quantities and therefore require pilot estimates for
implementation. The corresponding feasible inclusion probabilities are
introduced in Section~\ref{sec:feasible}.
Section~\ref{sec:union} develops a union-sample estimator that accounts
for overlap between the pilot and second-step samples and establishes
its asymptotic properties. Section~\ref{sec:double-rarity} provides
practical procedures for the double rarity setting, where rare features
and rare outcomes occur simultaneously.

\subsection{Feasible Sampling Probabilities}
\label{sec:feasible}

Because the sampling probabilities for RC-Aopt, RC-Lopt, RA-Lopt, and
BL-Lopt, as well as those for classical Aopt and Lopt, depend on
unknown quantities, we estimate the required components of $\bbeta_0$
and, when needed, $M_N^{-1}(\bbeta_0)$ from a pilot sample. The pilot
estimator $\cbeta_{\pl}$ solves the estimating equation
$\Psi_{\pl}(\bbeta)= N^{-1} \sumN \{\delta_{\pl,i}\pi_{\pl}^{-1}(Y_i,\X_i) \psi(Y_i,\X_i,\bbeta)\}=\0$,
where $\delta_{\pl,i}=\bI\{V_{\pl,i}<\pi_{\pl}(Y_i,\X_i)\}$,
$V_{\pl,i}\sim\text{Uniform}(0,1)$ is independent of $\{V_i\}_{i=1}^N$
(defined in Section~\ref{sec:poisson-subsampling}) and $\Fn$, and the
expected pilot subsample size is $n_{\pl}$. The pilot estimate of
$M_N(\bbeta_0)$ is calculated as $\check M_{\pl}(\cbeta_{\pl}) =
N^{-1}\sumN\{\delta_{\pl,i}\pi_{\pl}^{-1}(Y_i,\X_i)
\dot\mu(\X_i\tp\cbeta_{\pl})\X_i\X_i\tp\}$.

Next, we derive feasible second-step sampling probabilities.
Let $s_i$ denote the unnormalized feasible weight for observation $i$
under the selected sampling design. For example, the RA-Lopt weight based on
Equation~\eqref{eq:ra-lopt} is
$s_i=|Y_i-\mu(\X_i\tp\cbeta_{\pl})|\|D_N^{-2}\X_i\|$, $i=1,\dots,N$.  We cannot
always set $\pi(Y_i,\X_i)$ exactly proportional to $s_i$, that is,
$\pi(Y_i,\X_i)=h_N\tau_0s_i$ for some $\tau_0>0$, because this
quantity may exceed one, especially for rare observations. Instead, we
define $\pi(Y_i,\X_i)=\min\{h_N\tau_1s_i,1\}$, where $\tau_1$ is
chosen to satisfy $\sumN\pi(Y_i,\X_i)=n$. Specifically, let
$s_{(1)}\le\cdots\le s_{(N)}$ be the sorted values of $s_i$ and let
$g$ be the number of values $h_{N}\tau_{1} s_{i}$ that exceed one. Solving
$\sum_{i=1}^{N-g}h_N\tau_1s_{(i)}+g=n$ gives
$\tau_1=\{N(1-g/n)\}/(\sum_{i=1}^{N-g}s_{(i)})$. The resulting
probabilities satisfy $0\le\pi(Y_i,\X_i)\le1$ and
$\sumN\pi(Y_i,\X_i)=n$. When $g=0$, the probabilities reduce to
$\pi(Y_i,\X_i)=h_N\tau_0s_i$, where $\tau_0=N/(\sumN s_i)$, so
$\pi(Y_i,\X_i)$ remains exactly proportional to $s_i$.

Two practical concerns remain. First, calculating $\tau_{1}$ requires
sorting $s_i$'s on the full data, which can be computationally
expensive. We can use the pilot sample to provide an approximation and
define 
\begin{align*}
  \check\pi(Y_{i},\X_{i}) = \min\{h_{N}\check\varphi(Y_{i},\X_{i}), 1\},
  \;\text{ where }
  \check\varphi(Y_{i},\X_{i}) 
  = \frac{\min\{s_{i}, s_{\pl, \gamma}\}}
  {\oneN\sumN \left[\frac{\delta_{\pl,i}}{\pi_{\pl}(Y_i,\X_i)}
      \min\{s_i,s_{\pl, \gamma}\}\right]},
\end{align*}
and $s_{\pl, \gamma}$ is the sample $\gamma$-quantile of the $s_{i}$
values calculated on the pilot sample, with tuning parameter
$\gamma \in (0, 1)$. For the case $g=0$, $\tau_{0}$ is approximated
by $N \{\sumN \delta_{\pl,i}\pi_{\pl}^{-1}(Y_{i},\X_{i})s_{i}\}^{-1}$.
Second,
because the values $\check\varphi(Y_i,\X_i)$ are computed using pilot
estimates, pilot estimation error can propagate and be amplified by
inverse probability weighting, especially when the corresponding
oracle weight is close to zero. We therefore apply the defensive adjustment
$\tilde\varphi(Y,\X)=(1-\alpha_0)\check\varphi(Y,\X)+\alpha_0$, where
$0<\alpha_0<1$ is a small constant. This defensive adjustment also
enforces strict positivity of the inclusion probabilities. After these
adjustments, we use
$\{\tpi(Y_{i},\X_{i})\}_{i=1}^{N} = \{\min\{h_{N}\tilde\varphi(Y_{i}, \X_{i}),1\}\}_{i=1}^{N}$
as the working subsampling probabilities.  The truncation at one
ensures valid inclusion probabilities. Because of these practical
adjustments, the sum $\sumN \tpi(Y_{i},\X_{i})$ may not equal $n$
exactly. The above procedure applies directly to Aopt, Lopt and
RA-Lopt. For RC-Aopt and RC-Lopt, the same calculation is applied over
$i\in\RN^c$ using the remaining budget $n-|\RN|$.

\subsection{Union-Sample Aggregation}
\label{sec:union}

Beyond the construction of the feasible second-step sampling
probabilities, one may also reuse the pilot information to improve
parameter estimation.  Existing aggregation procedures achieve this in
two common ways: one pools the pilot and second-step samples and
solves a combined estimating equation \citep{wang2018optimal}, while
the other constructs a weighted average of the pilot and second-step
estimators \citep{wang2019more}. Implementation details of these two
procedures are given in Appendix~\ref{supp:ag-implement}.  These procedures can improve upon
the second-step estimator alone when the two stages contribute
distinct information. However, when both stages favor rare
observations, their samples can overlap substantially. Ignoring this overlap
can cause aggregation methods to misweight shared observations, reducing
efficiency and distorting covariance estimates.

To account for the overlap, we union the inclusion indicators and probabilities
of the two samples to
build the estimating equation, so that each selected observation
enters the estimating equation only once.  Specifically, observation
$i$ is considered in both sampling stages, with pilot sampling
probability $\pi_{\pl}(Y_i,\X_i)$ and feasible second-step sampling
probability $\tpi(Y_i,\X_i)$, and with corresponding inclusion
indicators $\delta_{\pl,i}$ and $\tdelta_i=\bI\{V_i<\tpi(Y_i,\X_i)\}$. The union inclusion
indicator is
$\tdelta_{\ag,i} = \bI(\delta_{\pl,i}=1 \text{ or
} \tdelta_i=1) = \delta_{\pl,i}+(1-\delta_{\pl,i})\tdelta_i$,
$i=1,\dots,N$.  The feasible union estimator $\cbeta_{\ag}$ is the
solution to
\begin{align}
  \label{eq:ag-ee}
  \tilde\Psi_{\ag}(\bbeta) =
  \oneN 
  \sumN \left\{\frac{\tdelta_{\ag,i}}{\tpi_{\ag}(Y_{i}, \X_{i})}
  \psi(Y_{i},\X_{i},\bbeta) \right\}  = \0,
\end{align}
where the working denominator
$\tpi_{\ag}(Y_{i}, \X_{i}) = \pi_{\pl}(Y_i,\X_i) +\{1-\pi_{\pl}(Y_i,\X_i)\}\tpi(Y_i,\X_i)$.
This working denominator $\tpi_{\ag}(Y_i, \X_i)$ is
preferable to pooling the samples from both stages and using
$\pi_{\pl}(Y_i, \X_i)$ and $\tpi(Y_i, \X_i)$ separately, because
its reciprocal acts as a smoothed weight that can improve
estimation efficiency \citep{kim2013weighting, kim2023note}. The
empirical advantage of the proposed union-sample estimator is
illustrated in Figure~\ref{fig:rmse-compare-ag}.

Because the feasible second-step probability $\tpi(Y_i,\X_i)$ depends
on the pilot sample, the working denominator $\tpi_{\ag}(Y_i,\X_i)$ in Equation~\eqref{eq:ag-ee} is
not the exact inclusion probability.  Consequently,
$\tilde\Psi_{\ag}(\bbeta)$ is not centered at $\Psi_0(\bbeta)$.  To
study the asymptotic behavior of $\cbeta_{\ag}$, we introduce an
oracle version of the union procedure by setting the pilot quantities
in the feasible sampling probabilities to their limits. For example,
the consistent pilot estimator $\cbeta_{\pl}$ is replaced by the true
parameter $\bbeta_0$. Throughout, the
superscript $0$ denotes these oracle quantities. For instance, let
$\pi^0(Y,\X)$ denote the oracle counterpart of $\tpi(Y,\X)$. The
corresponding oracle union probability is
$\pi_{\ag}^0(Y,\X) = \pi_{\pl}(Y,\X) + \{1-\pi_{\pl}(Y,\X)\}\pi^0(Y,\X)$.
Formal definitions of $\Omega_{\ag,N}^0$ and
$\Lambda_{\ag,N}^0(\bbeta)$ are provided in Appendix~\ref{supp:thm-ag}.
To control the additional error introduced
by using the feasible second-step sampling probabilities, we impose
the following assumption:

\begin{assumption}
  \label{asmp:ag}
  (i) The pilot estimator $\cbeta_{\pl}$ is $\sqrt{n_{\pl}}$-consistent, i.e.,
  $\|\cbeta_{\pl} - \bbeta_{0}\|=O_{p}(n_{\pl}^{-1/2})$. (ii) The feasible
  and oracle second-step sampling probabilities satisfy
  $|\tpi(Y,\X) - \pi^{0}(Y,\X)| \le h_{N} L_{\pl}(Y,\X) \|\cbeta_{\pl} -
  \bbeta_{0}\|$ for a nonnegative function $L_{\pl}(Y,\X)$. Let
  $h_{\pl,N}=n_{\pl}/N$ and assume
  \begin{align*}
    \sup_N h_{\pl,N}^{-1}
    \Exp\{\pi_{\pl}(Y,\X)L_{\pl}(Y,\X)G_N(Y,\X,\bbeta_0)\}
    +
    \sup_N
    \Exp\{L_{\pl}(Y,\X)T_N(Y,\X,\bbeta_0)\}
    <\infty,
  \end{align*}
  where $G_N(Y,\X,\bbeta_0)$ and $T_N(Y,\X,\bbeta_0)$ are nonnegative
  envelope functions defined in
  Appendix~\ref{supp:thm-ag}.
\end{assumption}

\begin{theorem}
  \label{thm:ag}
Assume the oracle sampling probability $\pi^{0}_{\ag}(Y,\X)$ satisfies
Assumptions~\ref{asmp:samp-prob-1}-\ref{asmp:samp-prob-2}. In
particular, assume 
$\Omega_{\ag,N}^{0}M_{r,N}(\bbeta_{0})(\Omega_{\ag,N}^{0})^{-1}
\xrightarrow{p} M_{\ag}^{0}(\bbeta_{0})$  and
$\Omega_{\ag,N}^{0}D_{N}^{-1}\Lambda_{\ag,N}^{0}(\bbeta_{0})
    D_{N}^{-1}\Omega_{\ag,N}^{0}
    \xrightarrow{p} \Lambda_{\ag}^{0}(\bbeta_{0})$,
where $M_{\ag}^{0}(\bbeta_{0})$ is finite and nonsingular and
$\Lambda_{\ag}^{0}(\bbeta_{0})$ is finite and positive definite.
Under Assumptions~\ref{asmp:regularity-1}--\ref{asmp:regularity-3}
and~\ref{asmp:ag}, if $h_{N} \sqrt{n_{\pl}} \to \infty$,
$\sqrt{h_{\pl,N}} / h_{N} \to 0$, and $N h_{N} \rho_{j,N}\to\infty$ for
$j=1,\dots,d_{r}$, we have
\begin{align*}
  \sqrt{N}\Omega^{0}_{\ag,N}D_{N}
  (\cbeta_{\ag}-\bbeta_{0})
\xrightarrow{d}
\Nor\left(\0, \{M_{\ag}^{0}(\bbeta_{0})\}^{-1}
  \Lambda_{\ag}^{0}(\bbeta_{0}) 
  [\{ M_{\ag}^{0}(\bbeta_{0})\}^{-1}]\tp \right).
\end{align*}
\end{theorem}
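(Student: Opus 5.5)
The plan is to reduce Theorem~\ref{thm:ag} to the oracle case, where Theorem~\ref{thm:sub1} already applies, and then control the gap between the feasible and oracle estimating equations. Define the oracle union estimating function
\[
\Psi^{0}_{\ag}(\bbeta)=\oneN\sumN \frac{\delta^{0}_{\ag,i}}{\pi^{0}_{\ag}(Y_i,\X_i)}\psi(Y_i,\X_i,\bbeta),
\]
with $\delta^{0}_{\ag,i}=\delta_{\pl,i}+(1-\delta_{\pl,i})\bI\{V_i<\pi^{0}(Y_i,\X_i)\}$. Conditionally on $\Fn$, the pair $(V_{\pl,i},V_i)$ is independent across $i$, and $\Pr(\delta^{0}_{\ag,i}=1\mid\Fn)=\pi^{0}_{\ag}(Y_i,\X_i)$. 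Hence $\delta^0_{\ag,i}$ is a Poisson-sampling indicator with inclusion probability $\pi^{0}_{\ag}$.

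Since $\pi^{0}_{\ag}$ is assumed to satisfy Assumptions~\ref{asmp:samp-prob-1}--\ref{asmp:samp-prob-2} with the limits $M^0_{\ag}$ and $\Lambda^0_{\ag}$, the argument behind Theorem~\ref{thm:sub1} yields two facts. First, the scaled score satisfies $\sqrt N\,\Omega^0_{\ag,N}D_N^{-1}\Psi^0_{\ag}(\bbeta_0)\xrightarrow{d}\Nor(\0,\Lambda^0_{\ag})$; this uses the Lindeberg--Feller theorem conditional on $\Fn$ via Assumption~\ref{asmp:samp-prob-2}(ii). Second, the Jacobian satisfies $\Omega^0_{\ag,N}D_N^{-1}\dot\Psi^0_{\ag}(\bbeta)D_N^{-1}(\Omega^{0}_{\ag,N})^{-1}\to -M^0_{\ag}$ uniformly on shrinking neighborhoods, by Assumption~\ref{asmp:samp-prob-2}(i),(iii) and the Lipschitz condition. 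The rate condition $Nh_N\rho_{j,N}\to\infty$ enters here, because the $j$-th diagonal of $\Omega^0_{\ag,N}$ is at least of order $h_N$ times the rare count.

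The main work is the feasible-minus-oracle difference $\tilde\Psi_{\ag}(\bbeta)-\Psi^0_{\ag}(\bbeta)$. I will couple the two designs through the same uniforms $V_i$ and split the difference into two parts. The first is a denominator term, $\tdelta_{\ag,i}\{1/\tpi_{\ag}-1/\pi^0_{\ag}\}\psi_i$. The second is an indicator term, $(\tdelta_{\ag,i}-\delta^0_{\ag,i})/\pi^0_{\ag}\,\psi_i$. Both vanish when $\delta_{\pl,i}=1$ in the indicator part, so only observations with $\delta_{\pl,i}=0$ contribute there. Note also that $|\tpi_{\ag}-\pi^0_{\ag}|=(1-\pi_{\pl})|\tpi-\pi^0|\le h_N L_{\pl}\|\cbeta_{\pl}-\bbeta_0\|$ by Assumption~\ref{asmp:ag}(ii).

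A subtlety is that $\cbeta_{\pl}$ depends on $\delta_{\pl,i}$, so observation $i$ in the pilot sample has correlated weights. My plan is to bound the difference in absolute value after scaling by $\sqrt N\Omega^0_{\ag,N}D_N^{-1}$, splitting the sum into pilot-sampled and non-pilot units. For pilot units, use $\pi^0_{\ag}\ge\pi_{\pl}$ and the envelope $G_N$ in the first supremum of Assumption~\ref{asmp:ag}. This gives a bound of order $h_{\pl,N}\cdot h_{\pl,N}^{-1}\cdot(\ldots)$ times $\sqrt N\,\|\cbeta_{\pl}-\bbeta_0\|$ relative to the leading scale. For non-pilot units, take conditional expectation over $V_i$: the indicator term has $\Exp|\tdelta_i-\delta^0_i|=|\tpi-\pi^0|$, which is handled by the envelope $T_N$.

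Collecting terms, the remainder should be $O_p\bigl(\sqrt{h_{\pl,N}}/h_N+1/(h_N\sqrt{n_{\pl}})\bigr)$ times the scale of the leading term. This is exactly where the two conditions $h_N\sqrt{n_{\pl}}\to\infty$ and $\sqrt{h_{\pl,N}}/h_N\to0$ make the remainder $o_p(1)$. I expect this bookkeeping to be the main obstacle: getting the powers of $h_N$ from $\Omega^0_{\ag,N}$ to cancel against the factor $h_N$ in Assumption~\ref{asmp:ag}(ii), and handling the pilot-dependent units without a conditional-independence argument. The same bound with a Lipschitz envelope will give the uniform Jacobian remainder.

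The conclusion then follows the standard Z-estimator route. Consistency of $\cbeta_{\ag}$ comes from uniform convergence of $D_N^{-2}\tilde\Psi_{\ag}$ to $D_N^{-2}\Psi_0$ together with Assumption~\ref{asmp:regularity-2}(iv). A Taylor expansion then gives
\[
\sqrt N\Omega^0_{\ag,N}D_N(\cbeta_{\ag}-\bbeta_0)=\{M^0_{\ag}\}^{-1}\sqrt N\Omega^0_{\ag,N}D_N^{-1}\Psi^0_{\ag}(\bbeta_0)+o_p(1),
\]
and Slutsky's theorem with the oracle CLT delivers the stated limit.
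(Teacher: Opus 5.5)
Your architecture (oracle union indicators coupled through the same $V_i$, the oracle CLT borrowed from Theorem~\ref{thm:sub1}, a Taylor expansion, consistency, then Slutsky) matches the paper's. However, the remainder step you defer as bookkeeping fails as planned.

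Take a unit with $\delta_{\pl,i}=0$. Each of your two pieces (denominator and indicator) has conditional mean, given $(\Fn,\{V_{\pl,i}\})$, of size about $|\tpi-\pi^0|/\pi^0_{\ag}\lesssim L_{\pl}\|\cbeta_{\pl}-\bbeta_0\|$. After $N^{-1/2}\sumN$ and the factor $h_N^{1/2}$ in $\Omega^0_{\ag,N}$, the non-pilot contribution is therefore of order $\sqrt{n/n_{\pl}}$. This diverges, because $n_{\pl}/n=h_{\pl,N}/h_N\to0$.

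This is not slack in the bound. Generically the two pieces really are this large (for residual-based weights, their means are roughly $\pm\sqrt{Nh_N}$ times a nonzero matrix times $\cbeta_{\pl}-\bbeta_0$), and they cancel only in the sum:
\[
\Exp\Bigl\{\frac{\tdelta_i}{\tpi_{\ag}}-\frac{\delta^0_i}{\pi^0_{\ag}}\Bigm|\Fn,\{V_{\pl,i}\}\Bigr\}
=\frac{\tpi}{\tpi_{\ag}}-\frac{\pi^0}{\pi^0_{\ag}}
=\frac{\pi_{\pl}(\tpi-\pi^0)}{\tpi_{\ag}\pi^0_{\ag}}.
\]
Even the conditional absolute mean of the combined difference is of order $|\tpi-\pi^0|/h_N$. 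So no first-moment argument can produce your claimed $1/(h_N\sqrt{n_{\pl}})$ term.

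Your pilot-unit bound has a related defect. Using $\pi^0_{\ag}\ge\pi_{\pl}$ cancels the factor $\pi_{\pl}=\Exp(\delta_{\pl,i}\mid\Fn)$. By your own accounting it leaves something of order $\sqrt N\|\cbeta_{\pl}-\bbeta_0\|$, which does not tend to zero. You need $\tpi_{\ag},\pi^0_{\ag}\ge\alpha h_N$ instead, so that the $\pi_{\pl}$ factor survives.

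The missing idea is the paper's decomposition of $W_i=\tdelta_{\ag,i}/\tpi_{\ag}-\delta^0_{\ag,i}/\pi^0_{\ag}$ into $W_{i,1}+W_{i,2}$: the exact conditional mean given $(\Fn,\{V_{\pl,i}\})$ plus a centered part.

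\textbf{Mean part.} $W_{i,1}$ carries the factor $(1-\delta_{\pl,i})\pi_{\pl}-\delta_{\pl,i}(1-\pi_{\pl})$ and satisfies $|W_{i,1}|\le(\delta_{\pl,i}+\pi_{\pl})L_{\pl}\|\cbeta_{\pl}-\bbeta_0\|/(\alpha^2h_N)$. Pull out the global factor $\|\cbeta_{\pl}-\bbeta_0\|$; this is how the dependence of $\cbeta_{\pl}$ on the pilot indicators is handled, since the remaining sum no longer involves $\cbeta_{\pl}$ and Markov applies. Then use $G_N$ to get $\Exp\{\pi_{\pl}L_{\pl}\|D_N^{-1}\psi\|\}=O(h_{\pl,N})$. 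This gives $O_p(\sqrt{h_{\pl,N}}/h_N)$.

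\textbf{Centered part.} $W_{i,2}=(1-\delta_{\pl,i})\{(\tdelta_i-\tpi)/\tpi_{\ag}-(\delta^0_i-\pi^0)/\pi^0_{\ag}\}$. Given $(\Fn,\{V_{\pl,i}\})$, with respect to which $\cbeta_{\pl}$ and $\tpi$ are measurable, these terms are independent across $i$ with mean zero. They must be controlled through a conditional variance. Because $\Exp\{(\tdelta_i-\delta^0_i)^2\mid\cdot\}=|\tpi-\pi^0|$ is linear in the probability error, one gets $\Exp(W_{i,2}^2\mid\cdot)\le2L_{\pl}\|\cbeta_{\pl}-\bbeta_0\|/(\alpha^2h_N)$. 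With $T_N$, the conditional variance of the scaled $W_{i,2}$ sum is then $O_p\{(h_N\sqrt{n_{\pl}})^{-1}\}$.

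That second-moment step is where $h_N\sqrt{n_{\pl}}\to\infty$ is actually used. The same split also drives the paper's consistency and Jacobian lemmas for the feasible estimator.
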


Theorem~\ref{thm:ag} shows that the proposed feasible implementation
does not change the oracle asymptotic distribution. The condition
$h_N\sqrt{n_{\pl}}\to\infty$ controls the error from estimating the
second-step probabilities, while $\sqrt{h_{\pl,N}}/h_N\to0$ implies that
the pilot sampling rate $h_{\pl,N}$ is of a smaller order than the
second-step sampling rate $h_{N}$, so the oracle union sampling rate
remains of the same order of magnitude as $h_{N}$.
Nevertheless, the union
estimator can improve rare-feature estimation efficiency when the
pilot sample contributes rare observations not captured by the
second-step sample. This effect is reflected in the rare-feature
scaling component $\omega_{\ag,j}^{0}$ in $\Omega_{\ag,N}^{0}$, where
\begin{align*}
   \omega_{\ag,j}^{0}
  =
  \Exp\{\pi_{\ag}^{0}(Y,\X)\mid Z_{\cdot j}=1\}
  =
  \Exp[\pi_{\pl}(Y,\X)+\{1-\pi_{\pl}(Y,\X)\}\pi^{0}(Y,\X)
  \mid Z_{\cdot j}=1].
\end{align*}
It determines the convergence rate
$(N\rho_{j,N}\omega_{\ag,j}^{0})^{-1/2}$ for the $j$-th rare-feature
coordinate of $\cbeta_{\ag}$. The term $\omega_{\ag,j}^{0}$ includes
the additional pilot contribution beyond that from the oracle
second-step sample alone. This contribution can be substantial when
the pilot design, such as BL, strongly protects rare
observations; see the empirical evidence in
Section~\ref{sec:simulation}.

\subsection{Double Rarity Problem in Logistic Regression}
\label{sec:double-rarity}

A particularly challenging scenario arises when rare features and a rare
binary outcome occur simultaneously, as in many binary classification
problems involving massive data. Let $Y\in\{0,1\}$ be the binary outcome, and let
$N_1=\sumN Y_i$ and $N_0=N-N_1$ denote the numbers of cases and
controls, respectively, with $N_1/N_0\to 0$. Existing analyses, such as
\citet{wang2021nonuniform}, show that, without considering rare
features, the convergence rate of the full-data estimator is tied to
$N_1$ rather than $N$. A sampling method must therefore preserve both
rare outcomes and rare features. To address this ``double rarity''
problem, we propose the following two practical procedures:

\begin{itemize}[leftmargin=*]
\item When $N_1\ll n$, we recommend negative subsampling: retain all
cases and apply BL within the control group using the remaining
budget $n-N_1$. Specifically, assign an inclusion probability of
one to every case and choose the control inclusion probabilities
proportional to $b(\X_i)$, with the normalizer selected so that
$\sum_{i:Y_i=0}\pi(Y_i,\X_i)=n-N_1$.
\item Otherwise, we combine the balance score with a case-control (CC)
sampling scheme. Specifically, set
$\varphi(Y_i,\X_i)\propto b(\X_i)/(2N_{Y_i})$,
$i=1,\dots,N$, where $N_{Y_i}=N_1$ when $Y_i=1$ and $N_{Y_i}=N_0$ when
$Y_i=0$. Section~\ref{sec:iris} illustrates its application.
\end{itemize}

\section{Numerical Results}
\label{sec:simulation}

\subsection{Simulation Setup}

We use simulations to illustrate the theoretical results and compare
the proposed designs with existing
methods. Table~\ref{tab:compared-methods} lists the competing
methods. Methods requiring a pilot are referred to as two-step
methods. The pilot estimates $\cbeta_{\pl}$ and
$\check{M}_{\pl}^{-1}(\cbeta_{\pl})$ are obtained from a pilot sample
drawn using either the Uni or BL design, denoted by (Uni) and (BL),
respectively. RA-Lopt estimates the rare-coordinate entries of $D_N$
using the square roots of corresponding full-data means. The final
estimate for each two-step method is the union-estimator. For sample
size comparability, one-step methods use expected sample size
$n+n_{\pl}$.
RC methods require a sampling budget large enough to retain all rare
observations and sample nonrare observations. We therefore examine
settings where RC is feasible as well as settings where the rare-observation
count exceeds the available budget.

\begin{table}[tb]
\centering
\caption{Compared Designs.}
    \label{tab:compared-methods}
\small
\begin{tabular}{llll}
\toprule
   Design & Description & Weight function $\varphi(Y,\X)$ proportional to \\
\midrule
 Uni (baseline) &  Uniform sampling & $1$ \\
  BL (new) & Balance score & $b(\X)$ \\
  RC-Uni{} (baseline) & Uniformly sampling from $\RN^{c}$ &  $1$ for $\RN^{c}$ \\
  Aopt &  Traditional A-optimal & $|Y - \mu(\X\tp \cbeta_{\pl})| \|\check{M}_{\pl}^{-1}(\cbeta_{\pl})\X\|$ \\
  Lopt &  Traditional L-optimal & $|Y - \mu(\X\tp\cbeta_{\pl})| \|\X\|$ \\
  RC-Aopt (new) &  Aopt sampling from $\RN^{c}$ & $|Y_{i} - \mu(\U_{i}\tp\cbeta_{\pl,u})| \left\|\check{M}_{\pl}^{-1}(\cbeta_{\pl}) J_{u} \U_{i}\right\|$ for $\RN^{c}$ \\
  RC-Lopt (new) &  Lopt sampling from $\RN^{c}$ &  $|Y_{i} - \mu(\U_{i}\tp\cbeta_{\pl,u})| \left\|\U_{i}\right\|$ for $\RN^{c}$ \\
  RA-Lopt (new) &  Rescaled L-optimal & $|Y - \mu(\X\tp \cbeta_{\pl})| \| D^{-2}_{N} \X\|$ \\
  BL-Lopt (new) &  Balance score $\times$ L-optimal & $ |Y - \mu(\X\tp\cbeta_{\pl})| \|\X\| b(\X)$ \\
\bottomrule
\end{tabular}
\end{table}

We consider logistic regression as a representative model from the
class of estimating equations studied in this paper. The full-data
sample size is $N=2.5\times 10^{5}$. Observations are independent and
identically distributed. The nonrare features $\U$ follow a
multivariate normal distribution with zero means, unit variances, and
$\operatorname{Corr}(U_i,U_j)=0.5^{|i-j|}$,
$i,j \in \{1,\dots,d_{u}\}$. The rare features $\Z$ have marginal
distribution $\operatorname{Bernoulli}(\rho)$ and are partitioned into
independent blocks of size at most five, with within-block pairwise
correlation $0.3$.  The main comparisons use $d_u\in\{15,50\}$,
$d_r\in\{2,10,20\}$, $\rho\in\{1\%,0.2\%, 0.1\%\}$, pilot sizes
$n_{\pl}\in\{300,1{,}000\}$, and second-step sizes ranging from
$2{,}000$ to $50{,}000$. All true coefficients, including the
intercept, are set to $0.5$. The defensive parameter is 
$\alpha_{0}=0.02$.  The additional convergence study in
Figure~\ref{fig:convergence-pilot-ra}(A) uses $d_u=5$ and $d_r=1$ and
varies $N$ from $10^{4}$ to $10^{6}$ while holding $N\rho=500$.

Under these settings, the outcome is relatively balanced 
between $0$ and $1$.  We generated $1,000$ full datasets and performed
each subsampling method once per full dataset. Estimation accuracy is
measured by the empirical root mean squared error (RMSE) across
$1,000$ replications, reported separately for rare and nonrare
feature coefficients. We also report the empirical identifiability
failure rates (\%), defined as the percentage of replicates
lacking an estimable coefficient for a given feature. For two-step methods, pilot and second-step failures are reported separately.

\subsection{Simulation Results}

We first illustrate the convergence rate established in
Theorem~\ref{thm:full}. Figure~\ref{fig:convergence-pilot-ra}(A) shows
that the variance of the rare-feature coefficient estimator scales as
$(N\rho)^{-1}$ rather than $N^{-1}$.  Table~\ref{tab:failure-prob}
illustrates the identifiability problem identified in
Proposition~\ref{prop:lower-bound}. At $\rho=0.1\%$ and $n=3{,}500$,
failure rates can be large and increase sharply with $d_r$. Using a BL
pilot eliminates pilot failures. However, Lopt(BL) retains
nonnegligible second-step failure rates because Lopt does not
adequately protect rare features, whereas Aopt(BL) has no
identifiability failures.

Figure~\ref{fig:convergence-pilot-ra}(B) illustrates the effect of pilot
design in a setting where all displayed methods have zero failures in the 1,000
replications. Across the displayed sample sizes and methods, the
rare-feature RMSE under a Uni pilot is much larger than that under
a BL pilot, resulting in larger overall RMSE. Thus,
even when identifiability failures are absent, the finite-sample
performance of a two-step method depends strongly on the pilot design
used to construct its second-step sampling probabilities.

\begin{figure}[t]
  \centering
  \includegraphics[width=1\textwidth]
  {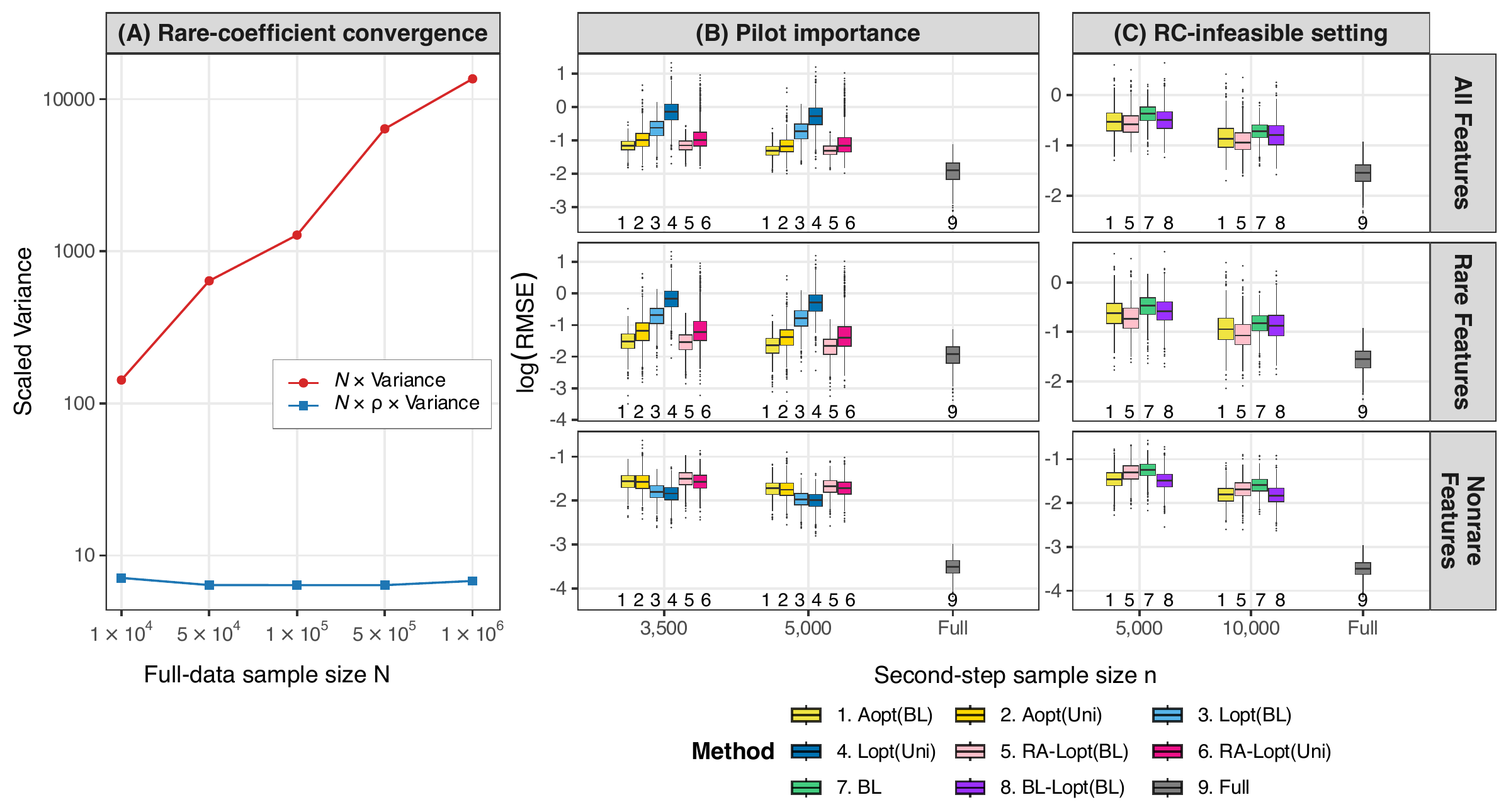}
  \caption{(A): Full-data estimator variance with $d_u=5$ and $d_r=1$,
varying $N$ from $10^{4}$ to $10^{6}$ while fixing $N\rho=500$. (B):
Comparing BL and Uni pilots in a regime with $n_{\pl}=1{,}000$,
$d_u=15$, $d_r=5$, and $\rho=1\%$.  (C): Comparing RA methods in a
regime with $n_{\pl}=300$, $d_u=15$, $d_r=10$, and $\rho=1\%$. Rows in
(B) and (C) show log(RMSE) for all, rare-feature, and nonrare feature
coefficients, respectively, and ``Full'' denotes the full-data
estimator.}
  \label{fig:convergence-pilot-ra}
\end{figure}

\begin{table}[tp]
\centering
\caption{Empirical identifiability failure rates (\%) at $\rho=0.1\%$,
and $n=3,500$. Method suffixes ``(BL)'' and ``(Uni)'' identify the pilot design. For two-step methods, entries $y(x)$ give the
second-step (pilot) failure rates.}
 \label{tab:failure-prob}
\begin{tabular}{r rrrrr}
\toprule
$d_r$ & \multicolumn{5}{c}{Method} \\
\cmidrule(lr){2-6}
 & Uni & Aopt(BL) & Aopt(Uni) & Lopt(BL) & Lopt(Uni) \\
\midrule
\addlinespace
\multicolumn{6}{c}{$n_{\pl}=300$} \\
\addlinespace
2 & 4.5 & 0(0) & 0(87.8) & 4.8(0) & 0(87.8) \\
10 & 17.6 & 0(0) & 0(99.2) & 23.7(0) & 0(99.2) \\
20 & 33.7 & 0(0) & 0(100) & 36.2(0) & 0(100) \\
\addlinespace
\multicolumn{6}{c}{$n_{\pl}=1,000$} \\
\addlinespace
2 & 2.8 & 0(0) & 0(58.6) & 6.4(0) & 0.2(58.6) \\
10 & 9.8 & 0(0) & 0(92.8) & 26.6(0) & 0(92.8) \\
20 & 20.9 & 0(0) & 0(99.2) & 44.3(0) & 0(99.2) \\
\bottomrule
\end{tabular}
\end{table}

\begin{table}[!tb]
\centering
\caption{Mean number of union-sample observations with $Z_{\cdot 1}=1$
under settings with $d_u=15$, $\rho=1\%$ and
$n_{\pl}=1{,}000$. Averages are over successful replications. The
expected full-data count of $Z_{\cdot 1}=1$ is $2,500$.}
\label{tab:rare-count-sim}
\begin{tabular}{r rr rr rr r}
\toprule
$n$ & \multicolumn{7}{c}{Method} \\
\cmidrule(lr){2-8}
 & BL & Uni & Aopt(BL) & Aopt(Uni) & Lopt(BL) & Lopt(Uni) & RA-Lopt(BL)  \\
\midrule
\addlinespace
\multicolumn{8}{c}{$d_r=2$} \\
\addlinespace
2,000 & 967 & 30 & 580 & 406 & 340 & 33 & 665 \\
10,000 & 2,454 & 110 & 1,248 & 1,148 & 410 & 124 & 1,375 \\
30,000 & 2,457 & 311 & 1,666 & 1,584 & 587 & 351 & 1,772 \\
\addlinespace
\multicolumn{8}{c}{$d_r=10$} \\
\addlinespace
2,000 & 346 & 30 & 267 & 223 & 135 & 33 & 316 \\
10,000 & 1,188 & 109 & 827 & 776 & 210 & 123 & 949 \\
30,000 & 2,086 & 309 & 1,419 & 1,326 & 398 & 349 & 1,557 \\
\bottomrule
\end{tabular}
\end{table}

\begin{figure}[t]
  \centering
  \includegraphics[width=\textwidth]
  {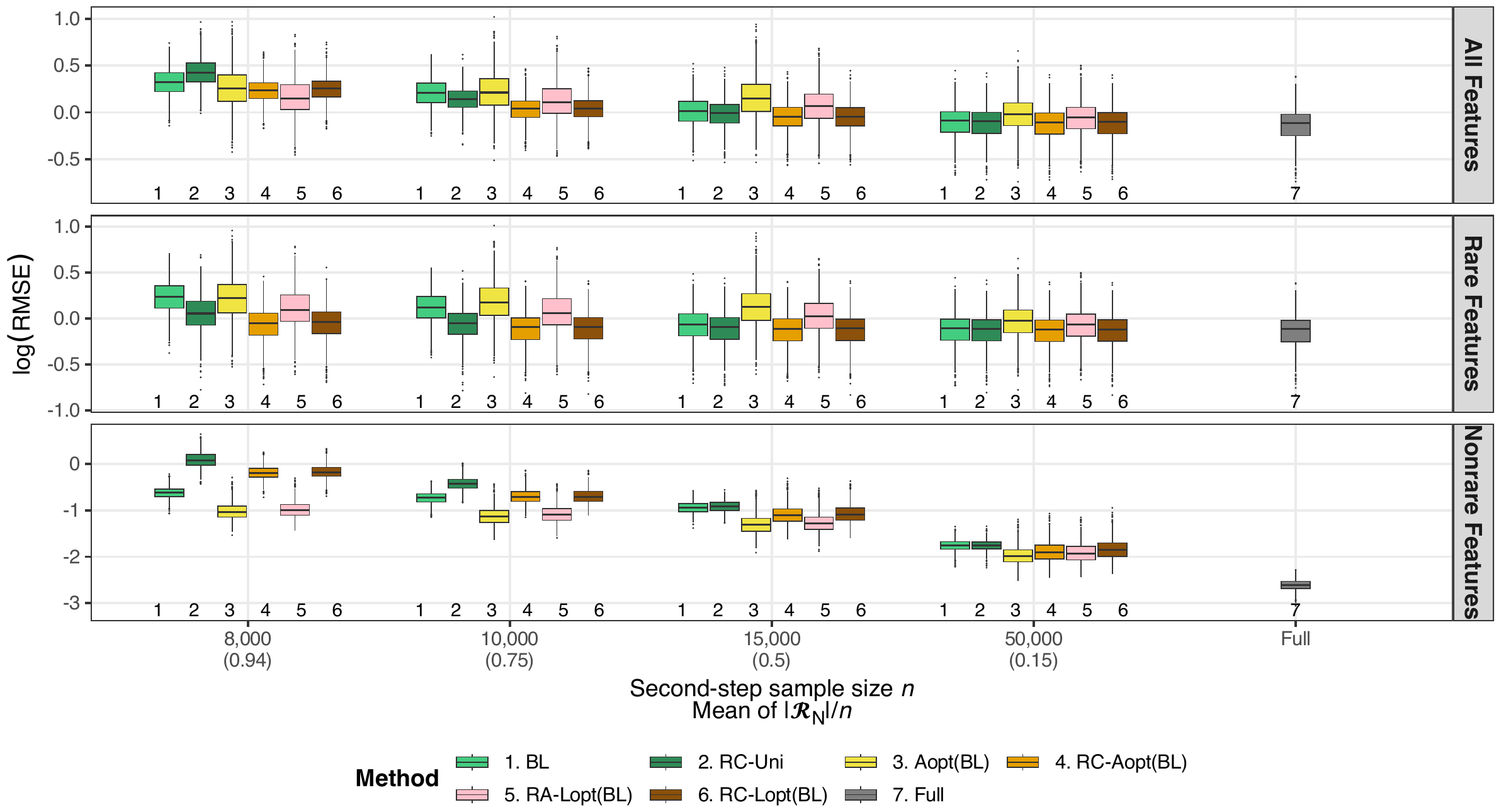}
  \caption{Rare-census (RC) trade-off under the setting
 $n_{\pl}=1{,}000$, $d_u=50$, $d_r=20$, and
$\rho=0.2\%$. The x-axis gives the second-step sample size $n$, with
the empirical mean of $|\mathcal R_N|/n$ shown in parentheses.}
  \label{fig:rc-ra-tradeoff}
\end{figure}

Table~\ref{tab:rare-count-sim} helps explain the differences in
identifiability and estimation accuracy. Relative to a Uni
pilot, a BL pilot increases the mean union-sample count with
$Z_{\cdot1}=1$ for Aopt and Lopt. Through union aggregation, it also
partly offsets Lopt’s lack of explicit rare-feature protection, as
discussed after Theorem~\ref{thm:ag}. Methods retaining more rare
observations generally have fewer identifiability failures.

Having examined pilot choice and rare-observation counts, we next
compare the estimation accuracy of RA designs, using a BL pilot for
all two-step methods. Figure~\ref{fig:convergence-pilot-ra}(C) compares
overall, rare-feature, and nonrare feature RMSE in a setting where
RC is infeasible. At both $n=5{,}000$ and
$n=10{,}000$, RA-Lopt(BL) has the lowest mean overall and rare-feature
RMSE, whereas BL-Lopt(BL) has the lowest mean nonrare feature RMSE,
closely followed by Aopt(BL). Although the BL method provides strong
representation of rare observations, incorporating residual information
can further improve estimation accuracy by prioritizing observations
that are informative for the regression coefficients.

Figure~\ref{fig:rc-ra-tradeoff} compares RC and other RA designs when
RC is feasible. Their relative accuracy depends on the fraction of
the second-step budget consumed by the rare-observation census,
$|\RN|/n$. The mean of this fraction decreases from approximately
0.94 to 0.15 as $n$ increases. At the highest budget pressure, RC methods have a
substantial loss in nonrare feature RMSE, while RA-Lopt(BL) has the
lowest overall RMSE at $n=8{,}000$.  At budget pressure 0.75 or lower,
optimal RC methods outperform the other RA methods on overall RMSE;
however, no design dominates across all regimes.

Finally, Figure~\ref{fig:rmse-compare-ag} compares the union, pooled,
weighted, and second-step estimators for RA-Lopt(BL) under sample
overlap.  The overlap rate is
$(\sumN \delta_{\pl,i}\tdelta_i)/(\sumN \tdelta_i)$, the proportion of
second-step observations appearing in the pilot sample under each
replication, and we report its average across replications. The
estimated MSE is computed from the finite-$N$ sandwich covariance
expressions and then averaged across replications. All three
aggregation methods improve upon the second-step estimator, but the
union estimator has the lowest empirical MSE and its estimated
MSE is close to the empirical MSE. The pooled and weighted estimators
substantially underestimate MSE.

\begin{figure}[!t]
    \centering    \includegraphics[width=\textwidth]{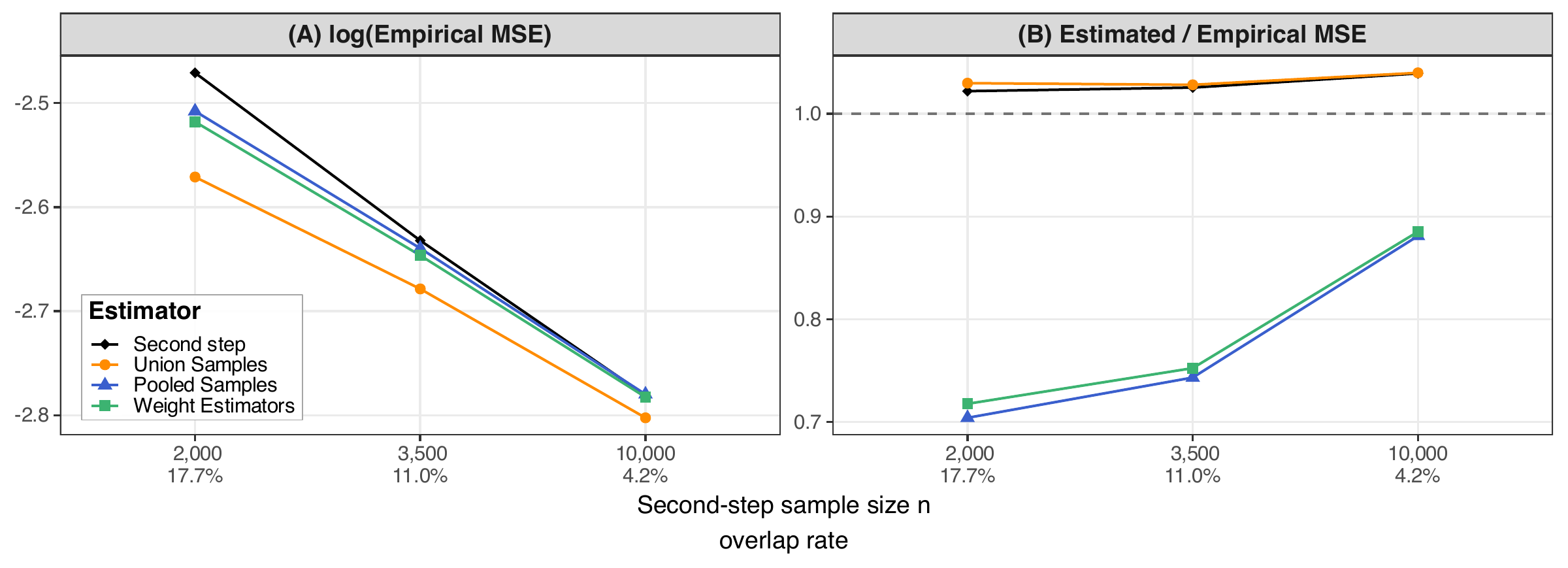}
\caption{Comparison of aggregation estimators for the setting with
$n_{\pl}=1,000$, $d_{u}=5$, $d_{r}=2$, and
$\rho=0.1\%$. The x-axis represents the expected second-step subsample
size $n$ and the corresponding sample overlap rate. (A) Log empirical MSE. (B) Ratio of estimated to
empirical MSE. Values below one indicate underestimation.}
    \label{fig:rmse-compare-ag}
\end{figure}

Taken together, these simulations suggest the following
recommendations within the settings considered. (1) When simplicity
and numerical stability are the primary considerations, or when the
quality of a pilot estimator cannot be assessed reliably, BL is a
natural default. (2) For two-step methods, a BL pilot is recommended, and the
final estimator should be fitted to the union of the pilot and
second-step samples. (3) When RC sampling is feasible and leaves
adequate budget for nonrare observations, RC-Aopt(BL) provides the
strongest overall performance, whereas RC-Lopt(BL) performs similarly
but has lower computational cost. (4) When RC sampling is infeasible,
RA-Lopt(BL) is preferable for overall and rare-feature
accuracy. Aopt(BL) remains competitive, particularly for
nonrare coefficients, but is more computationally demanding. Thus, the
choice of sampling design should be determined by the budget pressure
and the primary inferential target rather than by a universal ranking
of the methods.

\section{Real Data Applications}
\label{sec:real}

This section illustrates how the proposed subsampling designs can be
used in large-scale regression applications. We consider two datasets:
(i) the Covertype dataset, which contains categorical covariates with
low and heterogeneous prevalences, and (ii) an eye disease registry,
which represents a double-rarity setting with both an extremely rare
outcome and infrequent binary covariates. For each application, we
evaluate estimation accuracy relative to the corresponding full-data
estimator.

\subsection{Covertype Dataset}
\label{sec:covertype}

The Covertype dataset \citep{covertype_31} contains $N=581{,}012$
cartographic observations with 10 continuous environmental covariates
and 40 binary soil-type indicators. Following the dataset
documentation, we aggregate the 40 indicators into 11
geologic-climatic (GC) groups. Several groups remain extremely sparse
after aggregation, as shown in
Table~\ref{tab:covertype-gc-prevalence}. We define the response as an
indicator of lodgepole pine, which accounts for $48.76\%$ of the
original observations.  Under this coding, all $602$ observations in
GC\_5 have $Y=0$, so its observations are removed before model
fitting. The remaining GC groups are dummy-coded, with GC\_10, the
most prevalent group, used as the reference category.  These
preprocessing choices preserve interpretability while exposing the
practical difficulty of estimating coefficients for rare groups in
large-scale regression.

\begin{table}[tb]
  \centering 
  \caption{Counts and empirical prevalences of the GC groups in the
Covertype data. The full sample size is $N=581{,}012$. GC\_5 is removed
from model fitting, and GC\_10 is used as the reference category for
dummy coding.}
  \label{tab:covertype-gc-prevalence}
  \begin{tabular}{lrr}
    \toprule
    GC group & Count & Empirical prevalence (\%) \\
    \midrule
    GC\_2                 &        284 &  0.049 \\
    GC\_5 (removed)       &        602 &  0.104 \\
    GC\_3                 &      1{,}147 &  0.197 \\
    GC\_7                 &      1{,}899 &  0.327 \\
    GC\_6                 &      6{,}267 &  1.079 \\
    GC\_8                 &     14{,}118 &  2.430 \\
    GC\_1                 &     35{,}947 &  6.187 \\
    GC\_11                &     40{,}437 &  6.960 \\
    GC\_9                 &     91{,}125 & 15.684 \\
    GC\_4                 &     92{,}446 & 15.911 \\
    GC\_10 (reference)    &    296{,}740 & 51.073 \\
    \bottomrule
  \end{tabular}
\end{table}

We evaluate several promising designs to provide guidance for datasets
with this structure. The two-step methods use a BL pilot of size
$n_{\pl}=1{,}000$, and second-step subsample sizes
$n\in\{2{,}000,5{,}000,25{,}000,50{,}000\}$. Each setting is
replicated $1{,}000$ times. Performance is evaluated using RMSE
relative to the full-data estimator.  For the RC methods, the
rare-feature thresholds are $0.1\%$, $0.25\%$, $1.5\%$, and $5\%$,
respectively, retaining (GC\_2); (GC\_2, GC\_3); (GC\_2, GC\_3, GC\_6,
GC\_7); and (GC\_2, GC\_3, GC\_6, GC\_7, GC\_8). This calibration
keeps the retained rare observations below the second-step sample
size, but a relatively sparse group may not receive census retention
at smaller $n$.  In this application, RC-Uni and RC-Lopt(BL) exhibit
large identifiability failure rates, so we do not recommend them for
this dataset.  In contrast, methods such as BL, Aopt(BL), and
RA-Lopt(BL) do not require a prespecified rarity threshold and are not
exposed to this particular source of sensitivity. For comparability
across settings, the rare-feature RMSE is always evaluated on the
fixed set (GC\_2, GC\_3, GC\_6, GC\_7, GC\_8).

Figure~\ref{fig:rmse-compare-real}~(A) shows that Aopt(BL) has the
lowest overall and nonrare feature RMSE at every sample
size. RA-Lopt(BL) has the lowest rare-feature RMSE at $n=2{,}000$,
$5{,}000$, and $25{,}000$. At $50{,}000$, all five rare
features have entered the RC census, and RC-Aopt(BL) has the lowest
rare-feature RMSE.  Thus, Aopt(BL) is preferred when overall accuracy
is primary, whereas RA-Lopt(BL) is competitive when rare-feature
accuracy is the focus.

\begin{figure}[!tb]
  \centering
  \includegraphics[width=\textwidth]{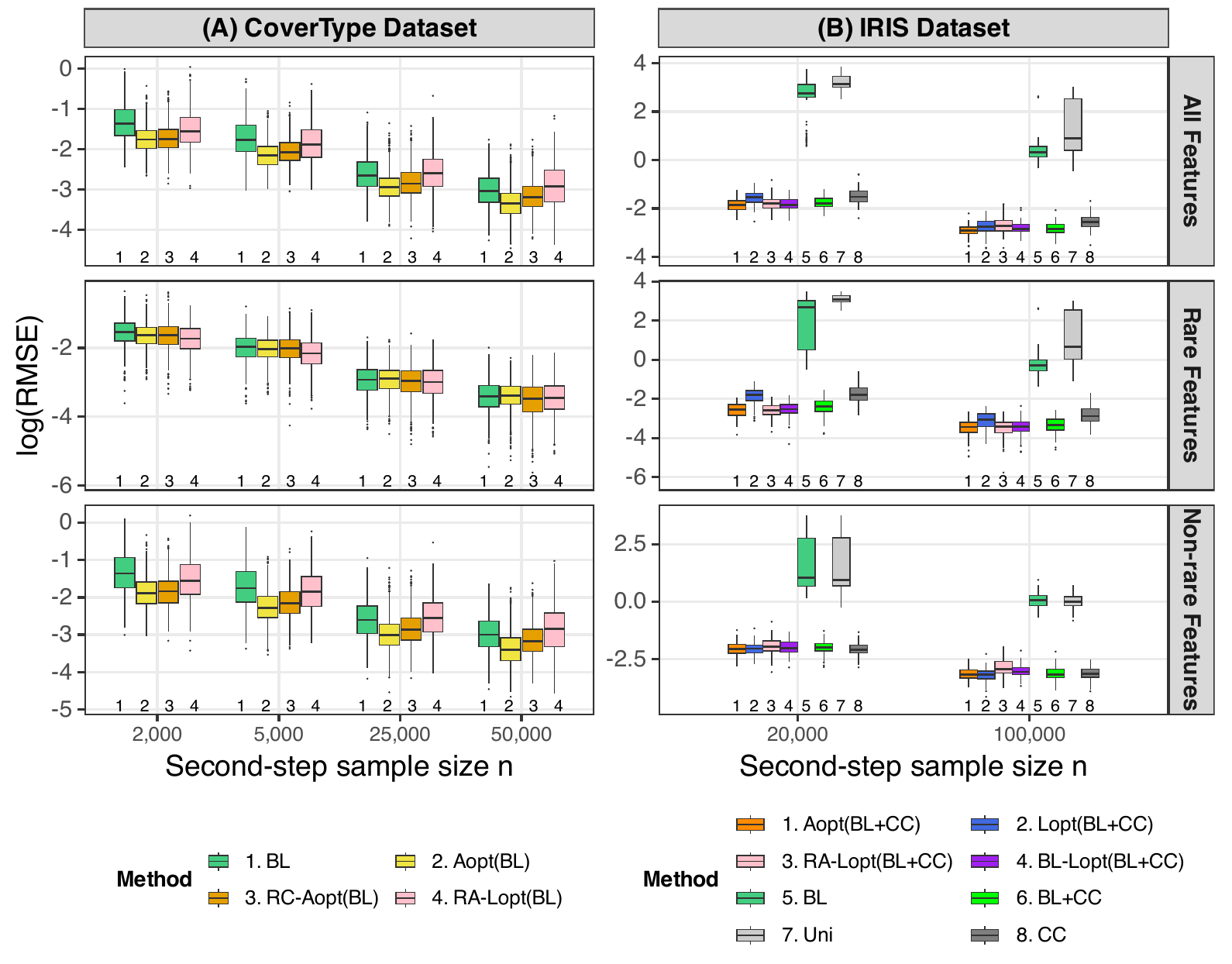}
\caption{Boxplots of RMSEs for the Covertype and IRIS experiments.
Panel (A) shows the Covertype dataset and Panel (B) shows the IRIS
dataset. In both datasets, $n_{\pl}=1{,}000$. In the
legend, ``(BL)'' and ``(BL+CC)'' indicate the BL and BL+CC pilot,
respectively.}
\label{fig:rmse-compare-real}
\end{figure}

\subsection{IRIS Registry: A Double-Rarity Case Study}
\label{sec:iris}

In this case study, we analyze a large-scale dataset that exhibits the
double-rarity structure described in
Section~\ref{sec:double-rarity}. The dataset is from the IRIS
(Intelligent Research in Sight) Registry, an electronic health
record-based eye disease registry in the United States. The binary
outcome $Y$ indicates thyroid eye disease, defined by at least two
ophthalmology visits coded with ICD-9 code 242.00 or ICD-10 code
E05.00. The full dataset contains $47{,}872{,}555$ observations, among
which only $0.086\%$ ($N_1=41{,}211$) are cases. We consider 14 features, including rare binary features
for Asian race ($2.902\%$) and other race ($0.882\%$). We do not
consider RC designs because retaining every observation expressing a
rare feature requires a substantially larger subsample than those
considered here.  Two-step methods use a pilot of size
$n_{\pl}=1{,}000$. We use the combined BL and CC pilot, denoted by
BL+CC, described in Section~\ref{sec:double-rarity}.  The final
estimator for two-step methods is the union-sample estimator. For
comparability, one-step methods use expected sample size
$n+n_{\pl}$. Each method was replicated 100 times by repeatedly
drawing subsamples from the full dataset.

Figure~\ref{fig:rmse-compare-real}~(B) shows that methods addressing
only one source of rarity perform poorly. BL does not account for the
extremely rare outcome, whereas CC 
does not protect rare feature information. Classical Lopt addresses
outcome imbalance through the residual-based sampling probabilities
but does not sufficiently amplify rare observations. In contrast, Aopt,
RA-Lopt, and BL-Lopt combine protection against rare outcomes with
preservation of rare-feature information and therefore achieve
substantially lower RMSEs when equipped with BL+CC pilots. Overall,
the results demonstrate that subsampling in large-scale logistic
regression requires simultaneous attention to outcome rarity and
feature rarity, and both the pilot and second-step designs should
account for them.

\section{Discussion}
\label{sec:discussion}

This paper studies subsampling for regression models with rare
features. We show that the estimators of coefficients associated with
rare features have convergence rates determined by the number of
observations expressing the corresponding features, rather than by the
full-data sample size. This finding explains why rare features remain
difficult even in massive datasets and why subsampling can further
aggravate the problem when rare-feature observations are
underrepresented. Motivated by this insight, we develop an RA subsampling
framework. We propose optimal RC designs, an
inverse-prevalence modification of Lopt, and a balanced design. We
also develop a union-sample aggregation procedure that incorporates
pilot information while accounting for overlap between pilot and
second-step samples. Simulation studies and applications to real
datasets show that the proposed methods reduce identifiability
failure, improve estimation efficiency, and provide more accurate
covariance estimation than existing subsampling approaches.

These results should be interpreted within the modeling framework
studied in this paper. In particular, the estimating equations are
assumed to be correctly specified, and the dimension of features
is fixed. The rare features are binary covariates, including
dummy-coded categorical features; highly correlated rare-feature
structures are not considered.

Several avenues remain for future research. First, since the
challenges associated with rare features stem directly from the
covariate structure, they warrant investigation beyond the model
framework considered here. A natural direction is to extend the
proposed methods to broader statistical settings, such as clustered,
longitudinal, and survival data models. Second, the impact of rare
features on inferential tasks beyond point and interval estimation,
including hypothesis testing, model selection, and variable screening,
remains to be explored. Third, high-dimensional regimes, particularly
settings where the number of rare features increases as the sample
size grows, will require new theory. Fourth, extending these scaling
and sampling schemes to non-parametric regression or tree-based
machine learning algorithms, where rare features are easily neglected
by default splitting rules, is another direction. Finally, the effects
of oversampling observations carrying rare features warrant further
study. A systematic comparison of estimation efficiency between such
oversampling designs and our proposed subsampling methods remains an
open question.

\bibliography{ref}
\bibliographystyle{natbib}

\clearpage

\doublespacing

\appendix
\setcounter{assumption}{0}
\renewcommand{\theassumption}{A.\arabic{assumption}}
\setcounter{lemma}{0}
\renewcommand{\thelemma}{A.\arabic{lemma}}
\setcounter{proposition}{0}
\renewcommand{\theproposition}{A.\arabic{proposition}}
\setcounter{corollary}{0}
\renewcommand{\thecorollary}{A.\arabic{corollary}}
\setcounter{theorem}{0}
\renewcommand{\thetheorem}{A.\arabic{theorem}}
\setcounter{algorithm}{0}
\renewcommand{\thealgorithm}{A.\arabic{algorithm}}
\setcounter{example}{0}
\renewcommand{\theexample}{A.\arabic{example}}
\setcounter{remark}{0}
\renewcommand{\theremark}{A.\arabic{remark}}

\setcounter{equation}{0}
\renewcommand{\theequation}{A.\arabic{equation}}

\setcounter{figure}{0}
\renewcommand{\thefigure}{A.\arabic{figure}}

\setcounter{table}{0}
\renewcommand{\thetable}{A.\arabic{table}}

\section{Proofs}
\label{app:proofs}

\subsection{Proof of Consistency of $\hbeta_{N}$}

\begin{lemma}
  \label{lem:full-consistency}
  Under Assumptions~\ref{asmp:regularity-1}-\ref{asmp:regularity-3},  $\hbeta_{N} \xrightarrow{p} \bbeta_{0}$.
\end{lemma}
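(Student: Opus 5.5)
The plan is the standard Z-estimator consistency argument (van der Vaart, Theorem 5.9), applied to the rescaled criterion $D_N^{-2}\Psi_N(\bbeta)$. With this rescaling, the identification condition in Assumption~\ref{asmp:regularity-2}(iv) is stated in the same units as the criterion. Since $\hbeta_N$ solves $\Psi_N(\hbeta_N)=\0$, we also have $D_N^{-2}\Psi_N(\hbeta_N)=\0$. It therefore suffices to prove the uniform convergence
\begin{equation*}
\sup_{\bbeta\in\Theta}\|D_N^{-2}\{\Psi_N(\bbeta)-\Psi_0(\bbeta)\}\|\xrightarrow{p}0 .
\end{equation*}
Given this, on the event $\|\hbeta_N-\bbeta_0\|\ge\epsilon$ we get $\|D_N^{-2}\Psi_0(\hbeta_N)\|\le \sup_\Theta\|D_N^{-2}(\Psi_N-\Psi_0)\|=o_p(1)$. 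This is smaller than $c(\epsilon)$ with probability tending to one, which contradicts (iv). Hence $\Pr(\|\hbeta_N-\bbeta_0\|\ge\epsilon)\to0$. We take $\hbeta_N\in\Theta$ as implicit in the definition of the estimator.

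\textbf{Pointwise convergence.} Fix $\bbeta\in\Theta$. Write $D_N^{-2}\{\Psi_N(\bbeta)-\Psi_0(\bbeta)\}=D_N^{-1}\cdot D_N^{-1}\{\Psi_N(\bbeta)-\Psi_0(\bbeta)\}$. The vector $D_N^{-1}\{\Psi_N(\bbeta)-\Psi_0(\bbeta)\}$ is an average of i.i.d.\ mean-zero terms, and by Assumption~\ref{asmp:regularity-2}(i) its covariance is $O(N^{-1})$. Multiplying by $D_N^{-1}$ inflates the $j$-th rare coordinate by $\rho_{j,N}^{-1/2}$, so the variance of that coordinate is $O\{(N\rho_{j,N})^{-1}\}$. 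This tends to $0$ by Assumption~\ref{asmp:regularity-1}, and Chebyshev's inequality gives pointwise convergence in probability.

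\textbf{Stochastic equicontinuity (main obstacle).} The rare coordinates carry the factor $\rho_{j,N}^{-1}$, so we need Lipschitz control that is uniform in $N$. For $\bbeta_1,\bbeta_2\in\Theta$, $\psi(Y,\X,\bbeta_1)-\psi(Y,\X,\bbeta_2)=-\{\mu(\X\tp\bbeta_1)-\mu(\X\tp\bbeta_2)\}\X$. Assumption~\ref{asmp:regularity-2}(iii) then gives
\begin{equation*}
\|D_N^{-2}\{\Psi_N(\bbeta_1)-\Psi_N(\bbeta_2)\}\|\le \Big\{\oneN\sumN L(\X_i)\|D_N^{-2}\X_i\|\Big\}\|\bbeta_1-\bbeta_2\|,
\end{equation*}
and the same bound holds for $\Psi_0$ with the expectation in place of the average. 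The key point is that $Z_{ij}$ is binary, so $\Exp\{L(\X)\rho_{j,N}^{-1}Z_{\cdot j}\}$ can be controlled by Cauchy--Schwarz:
\begin{equation*}
\Exp\{L(\X)\rho_{j,N}^{-1}Z_{\cdot j}\}\le \rho_{j,N}^{-1}\{\Exp(L^2Z_{\cdot j})\}^{1/2}\rho_{j,N}^{1/2}=\{\Exp(L^2\rho_{j,N}^{-1}Z_{\cdot j})\}^{1/2}.
\end{equation*}
The right-hand side is bounded by $\sup_N\Exp\{L^2(\X)\|D_N^{-1}\X\|^2\}^{1/2}<\infty$. The nonrare block is handled the same way using $\Exp\|\U\|^4<\infty$. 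So the expected Lipschitz constant is $O(1)$, and the empirical Lipschitz constant is $O_p(1)$ by Markov's inequality.

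\textbf{Conclusion.} Cover the compact set $\Theta$ by finitely many balls of radius $\eta$. Pointwise convergence holds at the centers, and the $O_p(1)$ Lipschitz bound controls the error inside each ball by $O_p(1)\,\eta$. Letting $\eta\to0$ yields the uniform convergence displayed above, and the identification argument in the first paragraph completes the proof.
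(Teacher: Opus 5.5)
Your proposal is correct and follows essentially the same route as the paper. Both reduce the claim to uniform convergence of $D_N^{-2}\{\Psi_N-\Psi_0\}$ via the scaled identification condition, and then prove that convergence with a finite cover of $\Theta$, Chebyshev at the centers (variance $O\{(N\rho_{j,N})^{-1}\}$ for rare coordinates), and a Lipschitz bound whose rare part uses the same Cauchy--Schwarz step $\rho_{j,N}^{-1}\Exp\{L(\X)Z_{\cdot j}\}\le[\rho_{j,N}^{-1}\Exp\{L^2(\X)Z_{\cdot j}\}]^{1/2}$. The differences are cosmetic: you handle the whole vector at once and use Markov rather than Chebyshev for the empirical Lipschitz constant, and the nonrare block needs only Assumption~\ref{asmp:regularity-2}(iii), not $\Exp\|\U\|^4<\infty$.
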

\begin{proof}
Corresponding to the partition of $\X=(\U\tp,\Z\tp)\tp$, $\bbeta$ can
be written as $(\bbeta_{u}\tp, \bbeta_{r}\tp)\tp$, and similarly for
$\bbeta_{0}=(\bbeta_{0,u}\tp, \bbeta_{0,r}\tp)\tp$ and
$\hbeta_{N}=(\hbeta_{N,u}\tp, \hbeta_{N,r}\tp)\tp$.  Throughout the
proofs, we sometimes suppress the subscript $N$ when no confusion
arises for notational simplicity, such as when writing the
distribution of $(Y,\X)$ and some expectation operators. Let
$\Psi_{0,u}(\bbeta)=\Exp\{(Y - \mu(\X\tp \bbeta))\U\}$ and
$\Psi_{0,r,j}(\bbeta)=\Exp\{(Y - \mu(\X\tp \bbeta))Z_{\cdot j}\}$.
Similarly, denote
$\Psi_{N,u}(\bbeta) = N^{-1} \sumN\{(Y_{i} - \mu(\X_{i}\tp \bbeta))\U\}$
and
$\Psi_{N,r,j}(\bbeta) = N^{-1} \sumN\{(Y_{i} - \mu(\X_{i}\tp \bbeta))Z_{ij}\}$.
Define the set
$A=\{\bbeta\in\Theta:\|\bbeta-\bbeta_{0}\|\ge\epsilon\}$.  Define
event
$E=\left\{\sup_{\bbeta \in \Theta}\|D_{N}^{-2}\left\{\Psi_{N}(\bbeta) - \Psi_{0}(\bbeta)\right\}\|<c(\epsilon)/2\right\}$,
where $c(\epsilon)$ is defined in
Assumption~\ref{asmp:regularity-2}(iv). We show
\begin{align}
  \label{eq:full-consistency}
  \Pr(\hbeta_{N} \in A) \le \Pr(E^{c}) \to 0.
\end{align}
Notice
that
\begin{align*}
  \|D_{N}^{-2}\Psi_{N}(\bbeta)\|
  &= \|D_{N}^{-2}\left\{
    \Psi_{N}(\bbeta) - \Psi_{0}(\bbeta) + \Psi_{0}(\bbeta)\right\}\| 
   \ge \|D_{N}^{-2}\Psi_{0}(\bbeta)\| - \|D_{N}^{-2}\left\{\Psi_{N}(\bbeta) - \Psi_{0}(\bbeta)\right\}\|.
\end{align*}
By Assumption~\ref{asmp:regularity-2} (iv), we know that
$\|D_{N}^{-2}\Psi_{0}(\bbeta)\| \ge c(\epsilon)$ for any
$\bbeta \in A$. If event $E$ occurs, then we must have
$\|D_{N}^{-2}\Psi_{N}(\bbeta)\| \ge c(\epsilon) - c(\epsilon)/2 > 0$. However,
$\|D_{N}^{-2}\Psi_{N}(\hbeta_{N})\| = 0$, which means the event
$\{\hbeta_{N} \in A\}$ is contained in the event $E^{c}$. Thus,
$\Pr(\hbeta_{N} \in A) \le \Pr(E^{c})$.  Next, we show
$\Pr(E) \to 1$ as $N\to\infty$, that is, show
\begin{align}
  \label{eq:DN-2-Psi}
  \sup_{\bbeta \in \Theta}
\|D_{N}^{-2}\left\{\Psi_{N}(\bbeta) - \Psi_{0}(\bbeta) \right\}\|
\xrightarrow{p} 0.
\end{align}
It is sufficient to show that 
\begin{align}
  \label{eq:Psi-u}
  \sup_{\bbeta \in\Theta}
  \|\Psi_{N,u}(\bbeta) - \Psi_{0,u}(\bbeta)\|
  & \xrightarrow{p} 0 \text{ and }\\
  \sup_{\bbeta \in \Theta}
    \rho_{j,N}^{-1}|\Psi_{N,r,j}(\bbeta) - \Psi_{0,r,j}(\bbeta)|
   & \xrightarrow{p} 0, \quad j=1,\dots,d_{r}. \label{eq:Psi-r-j}
\end{align}

We first prove Equation~\eqref{eq:Psi-u}.  Since $\Theta$ is compact,
for any $\Delta>0$, there exists a finite set
$\{\bbeta_{1}, \dots, \bbeta_{K}\} \subset \Theta$ such that for every
$\bbeta \in \Theta$, there exists an index
$k(\bbeta) \in \{1,\dots,K\}$ satisfying
$\|\bbeta - \bbeta_{k(\bbeta)}\|<\Delta$. Notice that
\begin{align*}
  & \sup_{\bbeta \in \Theta} \|\Psi_{N,u}(\bbeta) - \Psi_{0,u}(\bbeta)\| \\
  =& \sup_{\bbeta \in \Theta} \|\Psi_{N,u}(\bbeta) - \Psi_{N,u}(\bbeta_{k(\bbeta)})
     + \Psi_{N,u}(\bbeta_{k(\bbeta)}) - \Psi_{0,u}(\bbeta_{k(\bbeta)})
     + \Psi_{0,u}(\bbeta_{k(\bbeta)}) - \Psi_{0,u}(\bbeta)\| \\
  \le& \sup_{\bbeta \in \Theta}\|\Psi_{N,u}(\bbeta) - \Psi_{N,u}(\bbeta_{k(\bbeta)})\|
    + \max_{1 \le k \le K}\|\Psi_{N,u}(\bbeta_{k}) - \Psi_{0,u}(\bbeta_{k})\|
    + \sup_{\bbeta \in \Theta}\|\Psi_{0,u}(\bbeta_{k(\bbeta)}) - \Psi_{0,u}(\bbeta)\| \\
  \equiv& \sup_{\bbeta \in \Theta}L_{1}
          + \max_{1 \le k \le K}L_{2}
          + \sup_{\bbeta \in \Theta}L_{3}.
\end{align*}
We prove that $\sup_{\bbeta \in \Theta}L_{1}=o_{p}(1)$,
$\max_{1 \le k \le K}L_{2}=o_{p}(1)$, and
$\sup_{\bbeta \in \Theta}L_{3}=o(1)$.

For $\max_{1 \le k \le K}L_{2}$, Assumption~\ref{asmp:regularity-2} (i) gives
$\sup_{N}\Exp\{(Y - \mu(\X\tp \bbeta_{k}))^{2}\|\U\|^{2}\} <\infty$ for
each fixed $\bbeta_{k}\in\{\bbeta_{1}, \dots, \bbeta_{K}\}$. Thus, we have
$\Psi_{N,u}(\bbeta_{k})-\Psi_{0,u}(\bbeta_{k})\xrightarrow{p}\0$ by
the $L^{2}$ weak law of large numbers (WLLN) for a triangular array. Therefore,
$\max_{1 \le k \le K}L_{2} = o_{p}(1)$ since $K$ is finite.

For $\sup_{\bbeta \in \Theta}L_{1}$, we have
\begin{align*}
  \sup_{\bbeta \in \Theta}L_{1}
  &= \sup_{\bbeta \in \Theta}\left\|\oneN\sumN
    \left[(Y_{i} - \mu(\X_{i}\tp \bbeta))\U_{i} - (Y_{i} - \mu(\X_{i}\tp \bbeta_{k(\bbeta)}))\U_{i}\right]\right\| \\
  &= \sup_{\substack{\bbeta\in\Theta\\ \|\bbeta-\bbeta_{k(\bbeta)}\|<\Delta}}
    \left\|\oneN\sumN
    \left[\{\mu(\X_{i}\tp \bbeta_{k(\bbeta)})-\mu(\X_{i}\tp \bbeta)\}\U_{i}\right]\right\| \\
  &\le \oneN\sumN
    \sup_{\substack{\bbeta\in\Theta\\ \|\bbeta-\bbeta_{k(\bbeta)}\|<\Delta}}
    \left\|\{\mu(\X_{i}\tp \bbeta_{k(\bbeta)})-\mu(\X_{i}\tp \bbeta)\}\U_{i}\right\| \\
  &\le \Delta \oneN\sumN \left\{  L(\X_{i})\|\U_{i}\|\right\}.
\end{align*}
Since $\sup_{N}\Exp\{L^2(\X)\|\U\|^2\}<\infty$ by
Assumption~\ref{asmp:regularity-2} (iii), we know that
$N^{-1}\sumN \left\{ L(\X_{i})\|\U_{i}\|\right\} - \Exp\{L(\X)\|\U\|\} \xrightarrow{p} 0$
by the triangular array WLLN, and hence 
$\sup_{\bbeta\in\Theta}L_1=O_p(\Delta)$.  For $L_{3}$, we have
\begin{align*}
  \sup_{\bbeta \in \Theta}L_{3}
  &= \sup_{\substack{\bbeta\in\Theta\\ \|\bbeta-\bbeta_{k(\bbeta)}\|<\Delta}}
    \left\|\Exp
    [\{\mu(\X\tp \bbeta_{k(\bbeta)})-\mu(\X\tp \bbeta)\}\U] \right\| 
  \le \Delta \Exp\left\{ L(\X)\|\U\| \right\},
\end{align*}
which is $O(\Delta)$ since $\sup_N \Exp\{L(\X)\|\U\|\}<\infty$.  Since
$\Delta$ can be arbitrarily small, letting $\Delta \to 0$ gives
$\sup_{\bbeta\in\Theta}L_1=o_{p}(1)$ and
$\sup_{\bbeta\in\Theta}L_3=o(1)$.  Thus, we proved
$\sup_{\bbeta \in \Theta}\|\Psi_{N,u}(\bbeta) - \Psi_{0,u}(\bbeta)\| =o_{p}(1) + o(1)$
so that Equation~\eqref{eq:Psi-u} is proved.

To show Equation~\eqref{eq:Psi-r-j}, i.e.,
$\sup_{\bbeta \in \Theta}|\Psi_{N,r,j}(\bbeta) - \Psi_{0,r,j}(\bbeta)| = o_{p}(\rho_{j,N})$
for $j=1,\dots,d_{r}$, write
\begin{align*}
  & \sup_{\bbeta \in \Theta} |\Psi_{N,r,j}(\bbeta) - \Psi_{0,r,j}(\bbeta)| \\
 \le& \sup_{\bbeta \in \Theta}|\Psi_{N,r,j}(\bbeta) - \Psi_{N,r,j}(\bbeta_{k(\bbeta)})|
    + \max_{1 \le k \le K}|\Psi_{N,r,j}(\bbeta_{k}) - \Psi_{0,r,j}(\bbeta_{k})|
    + \sup_{\bbeta \in \Theta}|\Psi_{0,r,j}(\bbeta_{k(\bbeta)}) - \Psi_{0,r,j}(\bbeta)| \\
  \equiv& \sup_{\bbeta \in \Theta}R_{1}
          + \max_{1 \le k \le K}R_{2}
          + \sup_{\bbeta \in \Theta}R_{3}.
\end{align*}
We now prove that $\sup_{\bbeta \in \Theta}R_{1}=o_{p}(\rho_{j,N})$,
$\max_{1 \le k \le K}R_{2}=o_{p}(\rho_{j,N})$, and
$\sup_{\bbeta \in \Theta}R_{3}=o(\rho_{j,N})$. For
$\max_{1 \le k \le K}R_{2}$, we show that for each 
$\bbeta_{k} \in \{\bbeta_{1},\dots,\bbeta_{K}\}$ and any $\eta>0$,
\begin{align*}
  & \Pr\left(|\Psi_{N,r,j}(\bbeta_{k}) - \Psi_{0,r,j}(\bbeta_{k})| > \rho_{j,N}\eta\right) \\
  =& \Pr\left(
     \left|\oneN \sumN
     \left[
     (Y_{i}-\mu(\X_{i}\tp \bbeta_{k}))Z_{ij}
     - \Exp\{(Y-\mu(\X\tp \bbeta_{k}))Z_{\cdot j}\}
     \right] \right| > \rho_{j,N}\eta \right) \\
  \le& \frac{\Var\{(Y-\mu(\X\tp\bbeta_{k}))Z_{\cdot j}\}}
       {N\rho_{j,N}^{2}\eta^{2}} \\
  \le& \frac{\Exp\{(Y-\mu(\X\tp\bbeta_{k}))^{2}Z_{\cdot j}\}}
       {N\rho_{j,N}^{2}\eta^{2}}\\
  =& \frac{\Exp\{(Y-\mu(\X\tp\bbeta_{k}))^{2}\mid Z_{\cdot j}=1\}}
     {N\rho_{j,N}\eta^{2}} \to 0,
\end{align*}
as $N\rho_{j,N}\to\infty$, where the following bound holds
\begin{align}
  \Exp\{(Y-\mu(\X\tp\bbeta_{k}))^{2}\mid Z_{\cdot j}=1\}
  &= \rho_{j,N}^{-1}\Exp\{(Y-\mu(\X\tp\bbeta_{k}))^{2}Z_{\cdot j}\} \nonumber\\
  &\le  \sup_{N}\Exp\{(Y-\mu(\X\tp\bbeta_{k}))^{2}\|D_{N}^{-1}\X\|^{2}\}<\infty.
    \label{eq:rho-Z-bound}
\end{align}
Thus, we have $\max_{1 \le k \le K}R_{2} = o_{p}(\rho_{j,N})$ since $K$
is finite.  For $\sup_{\bbeta \in \Theta}R_{1}$,
\begin{align*}
\sup_{\bbeta\in\Theta}R_1
  &=
    \sup_{\bbeta\in\Theta}|\Psi_{N,r,j}(\bbeta)-\Psi_{N,r,j}(\bbeta_{k(\bbeta)})|\\
  &\le
    \sup_{\bbeta\in\Theta} \oneN \sumN \{
|\mu(\X_i^\top\bbeta)-\mu(\X_i^\top\bbeta_{k(\bbeta)})|\,Z_{ij}\}\\
  &\le
    \oneN \sumN \sup_{\bbeta\in\Theta}
  \{|\mu(\X_i^\top\bbeta)-\mu(\X_i^\top\bbeta_{k(\bbeta)})|\,Z_{ij}\} \\
  &\le
    \Delta \oneN \sumN \{L(\X_{i}) Z_{ij}\}.
\end{align*}
By Cauchy--Schwarz inequality,
\begin{align}
\rho_{j,N}^{-1}\Exp\{L(\X) Z_{\cdot j}\}
\le
\rho_{j,N}^{-1}
\left[\Exp\{L^{2}(\X) Z_{\cdot j}\}\right]^{1/2}
\{\Exp(Z_{\cdot j})\}^{1/2} 
=
\left[
\rho_{j,N}^{-1}\Exp\{L^{2}(\X) Z_{\cdot j}\}
\right]^{1/2}
= O(1). \label{eq:rho-L-Z-bound}
\end{align}
By Chebyshev's inequality, 
we have that for any $\eta>0$, as $N\rho_{j,N} \to \infty$,
\begin{align*}
  \Pr \left(\left|\frac1N\sum_{i=1}^N L(\X_i)Z_{ij}-\Exp\{L(\X) Z_{\cdot j}\}\right|>\eta\rho_{j,N}\right) 
  \le \frac{\Exp\{L^{2}(\X) Z_{\cdot j}\}}{\eta^2 N\rho_{j,N}^{2}} 
    \to 0,
\end{align*}
Hence
$N^{-1}\sum_{i=1}^N L(\X_i)Z_{ij} = O(\rho_{j,N}) + o_p(\rho_{j,N}) = O_p(\rho_{j,N})$.
Therefore, 
$\sup_{\bbeta\in\Theta}R_1= o_p( \rho_{j,N})$ since 
$\Delta$ can be arbitrarily small. Similarly, for $\sup_{\bbeta \in \Theta}R_{3}$,
\begin{align*}
\sup_{\bbeta\in\Theta}R_3
&=\sup_{\bbeta\in\Theta}|\Psi_{0,r,j}(\bbeta_{k(\bbeta)})-\Psi_{0,r,j}(\bbeta)| \\
&\le \sup_{\bbeta\in\Theta}
  \Exp\left[Z_{\cdot j}|\mu(\X^\top\bbeta)-\mu(\X^\top\bbeta_{k(\bbeta)})|\right] \\
&\le \Delta\Exp\!\left[L(\X)Z_{\cdot j}\right] =o(\rho_{j,N}),
\end{align*}
Thus, we proved Equation~\eqref{eq:Psi-r-j} and thus
Equation~\eqref{eq:full-consistency}. This completes the proof.
\end{proof}

\subsection{Proof of Theorem~\ref{thm:full}}

\label{supp:thm-full}
\begin{proof}
Using a Taylor expansion of $\Psi_{N}(\hbeta_{N})$, 
\begin{align*}
  \Psi_{N}(\hbeta_{N})
  &= \Psi_{N}(\bbeta_{0})
    - \left[ \int_{0}^{1}
    \left\{ \oneN \sumN 
    \dot{\mu}\left(\X_{i}\tp\{\bbeta_{0}+\lambda(\hbeta_{N}-\bbeta_{0})\}\right) \X_{i}\X_{i}\tp
    \right\} \diff \lambda \right]
    (\hbeta_{N}-\bbeta_{0})  = \0.
\end{align*}
After scaling, we obtain
\begin{align*}
   \sqrt{N}D_{N}(\hbeta_{N}-\bbeta_{0})
  = H_{N}^{-1}  \onesqN \sumN D_{N}^{-1}  \psi(Y_{i},\X_{i},\bbeta_{0}),
\end{align*}
where
\begin{align*}
  & H_{N}
    = \int_{0}^{1}
    \left\{ \frac{1}{N}\sumN 
    \dot{\mu}\left(\X_{i}\tp\{\bbeta_{0} + \lambda(\hbeta_{N} - \bbeta_{0})\}\right)
    D_{N}^{-1} \X_{i}\X_{i}\tp D_{N}^{-1}
    \right\} \diff \lambda.
\end{align*}
We will prove that 
\begin{align}
  H_{N}
  & \xrightarrow{p} M_{r}(\bbeta_{0}), \label{eq:HN-Mr}\\
  \onesqN \sumN \{D_{N}^{-1} \psi(Y_{i},\X_{i},\bbeta_{0})\}
  & \xrightarrow{d}
    \Nor(\0, \Lambda_{r}(\bbeta_{0})). \label{eq:sumN-xi}
\end{align}
Then, we will have 
\begin{align*}
\sqrt{N}D_{N}(\hbeta_{N}-\bbeta_{0})
\xrightarrow{d}
\Nor(\0,
  \left\{M_{r}^{-1}(\bbeta_{0})\right\}
  \Lambda_{r}(\bbeta_{0})
  \left\{ M_{r}^{-1}(\bbeta_{0}) \right\} )
\end{align*}
by Slutsky's theorem and the continuous mapping theorem. For
\ref{eq:HN-Mr}, note that by Assumption~\ref{asmp:regularity-1}
and~\ref{asmp:regularity-2},
\begin{align*}
  & \|H_{N} - M_{r}(\bbeta_{0})\| 
   = \left\|H_{N}
    - M_{r,N}(\bbeta_{0}) + M_{r,N}(\bbeta_{0})
    - M_{r}(\bbeta_{0}) \right\| \\
 \le& \int_{0}^{1} \left[ \left\| 
    \oneN \sumN \left\{
      \dot{\mu}\left(\X_{i}\tp(\bbeta_{0}
      + \lambda(\hbeta_{N} - \bbeta_{0})\right)
    -\dot{\mu}(\X_{i}\tp\bbeta_{0})
    \right\}
    D_{N}^{-1} \X_{i}\X_{i}\tp D_{N}^{-1}
     \right\|  \right]
    \diff \lambda  \\
   +& \left\| M_{r,N}(\bbeta_{0}) - M_{r}(\bbeta_{0}) \right\| \\
 \le& \int_{0}^{1}\left[  
    \oneN \sumN \left\| \left\{
      \dot{\mu}\left(\X_{i}\tp(\bbeta_{0}
      + \lambda(\hbeta_{N} - \bbeta_{0})\right)
    -\dot{\mu}(\X_{i}\tp\bbeta_{0})
    \right\}
    D_{N}^{-1} \X_{i}\X_{i}\tp D_{N}^{-1}\right\| \right] 
    \diff \lambda
    + o_{p}(1) \\
 \le& \|\hbeta_{N} - \bbeta_{0}\|
        \left\{ \oneN \sumN L(\X_{i})
    \left\|D_{N}^{-1} \X_{i} \right\|^{2} \right\}
    \int_{0}^{1} \lambda \diff \lambda
    + o_{p}(1),
\end{align*}
where $\|M_{r,N}(\bbeta_{0}) - M_{r}(\bbeta_{0}) \|=o_{p}(1)$ follows from
Assumption~\ref{asmp:regularity-3}.  By
$\sup_{N}\Exp\{L^{2}(\X)\|D_{N}^{-1}\X\|^{2}\} < \infty$ in
Assumption~\ref{asmp:regularity-2} (iii) and
$\sup_{N}\Exp(\|D_{N}^{-1}\X\|^{2}) < \infty$, we know
$\sup_{N}\Exp\{L(\X)\|D_{N}^{-1}\X\|^{2}\} < \infty$ by
Cauchy--Schwarz inequality. Thus,
$N^{-1} \sumN L(\X_{i}) \left\|D_{N}^{-1} \X_{i} \right\|^{2}= O_{p}(1)$
by Markov's inequality.  We also have
$\|\hbeta_{N} - \bbeta_{0}\|=o_{p}(1)$ from
Lemma~\ref{lem:full-consistency}, thus
$\|H_{N} - M_{r}(\bbeta_{0})\|=o_{p}(1)$ and \eqref{eq:HN-Mr} is
proved.

To prove \eqref{eq:sumN-xi}, we use the Lindeberg--Feller central limit
theorem (LF-CLT) since the distribution of
$D_{N}^{-1} \psi(Y,\X,\bbeta)$ depends on $N$. We derive the asymptotic
expectation and covariance and then verify the Lindeberg
condition. Notice that
\begin{align*}
  \Exp \left\{D_{N}^{-1}\psi(Y,\X,\bbeta_{0}) \right\}
  =  D_{N}^{-1} \Psi_{0}(\bbeta_{0}) = \0.
\end{align*}
Thus, $\Exp (N^{-1/2}\sumN D_{N}^{-1}\psi(Y_{i},\X_{i},\bbeta_{0})) = \0$.
By Assumption~\ref{asmp:regularity-3}, the covariance is
\begin{align*}
  \Var \left[\onesqN  \sumN \left\{Y_{i} - \mu(\X_{i}\tp\bbeta_{0})\right\}
    D_{N}^{-1}\X_{i} \right] 
  & = \Exp \left[
    \oneN \sumN \left\{Y_{i} - \mu(\X_{i}\tp\bbeta_{0})\right\}^{2}
    D_{N}^{-1} \X_{i}\X_{i}\tp D_{N}^{-1} \right] \\
  & = \Exp \{\Lambda_{r,N}(\bbeta_{0})\} \to \Lambda_{r}(\bbeta_{0}).
\end{align*}
Therefore, the asymptotic covariance is $\Lambda_{r}(\bbeta_{0})$. Finally,
we check the Lindeberg condition: for every $\epsilon > 0$,
\begin{equation} \label{LF-condition-full}
  N \Exp \left\{
    \left\|\onesqN D_{N}^{-1}\psi(Y,\X,\bbeta_{0}) \right\|^{2}
    \bI \left( \left\| \onesqN  D_{N}^{-1}\psi(Y,\X,\bbeta_{0}) \right\| > \epsilon \right)
  \right\} \to 0.
\end{equation}
By Assumption~\ref{asmp:regularity-2} (ii), for any $\epsilon>0$,
\begin{align*} 
  & N \Exp \left\{
    \left\|\onesqN D_{N}^{-1} \psi(Y,\X,\bbeta_{0}) \right\|^{2}
    \bI \left( \left\| \onesqN  D_{N}^{-1}\psi(Y,\X,\bbeta_{0}) \right\| > \epsilon \right)
    \right\} \\
  &= \Exp \left\{
    \left\| D_{N}^{-1}\psi(Y,\X,\bbeta_{0}) \right\|^{2}
    \bI \left( \left\|  D_{N}^{-1}\psi(Y,\X,\bbeta_{0}) \right\| > \epsilon \sqrt{N}\right)
    \right\} \\
  &\le \frac{\Exp\|D_{N}^{-1}\psi(Y,\X,\bbeta_{0})\|^{2+\kappa}}{(\epsilon\sqrt{N})^{\kappa}}
\to 0.
\end{align*}
Hence \eqref{LF-condition-full} holds. This proves \eqref{eq:sumN-xi}
and completes the proof of Theorem~\ref{thm:full}.
\end{proof}

\subsection{Proof of Proposition~\ref{prop:lower-bound}.}
\label{supp:lem-prob-bound}

\begin{proof}

Denote events $\cA_{j} = \{ \sumN\delta_{i}Z_{ij}=0 \}$ for $j=1,\dots,d_{r}$.
Since $\cA_j\subset \cA$, we have
$\Pr(\cA)\ge \Pr(\cA_j)$ for every $j=1,\dots,d_{r}$.  Conditional on $\Fn$, the inclusion
indicators $\{\delta_{i}\}_{i=1}^{N}$ are independent, and therefore
\begin{align*}
\Pr(\cA_j\mid \Fn)
&=
\Pr\left(\sum_{i=1}^N\delta_i Z_{ij}=0\middle| \Fn\right) \\
&=
\prod_{i=1}^N \{1-\pi(Y_i,\X_i)\}^{Z_{ij}}
=
\exp\left[
\sum_{i=1}^N Z_{ij}\log\{1-\pi(Y_i,\X_i)\}
\right].
\end{align*}
Taking expectations and applying Jensen's inequality gives 
\begin{align*}
\Pr(\cA_j)
&=
\Exp\left(
\exp\left[
\sum_{i=1}^N Z_{ij}\log\{1-\pi(Y_i,\X_i)\}
\right]
\right)\\
&\ge
\exp\left(
\Exp\left[
\sum_{i=1}^N Z_{ij}\log\{1-\pi(Y_i,\X_i)\}
\right]
  \right) \\
&\ge
\exp\left(
- N \Exp\left[
\frac{Z_{\cdot j}\pi(Y,\X)}{1-\pi(Y,\X)}
\right]
  \right) \\
&\ge
\exp\left(
-\frac{N}{1-\zeta}
\Exp\{Z_{\cdot j}\pi(Y,\X)\}
\right)\\
&=
\exp\left(
-\frac{N\rho_{j,N}}{1-\zeta}
\Exp\{\pi(Y,\X)\mid Z_{\cdot j}=1\}
\right)\\
&=
\exp\left(
-\frac{N\rho_{j,N}\omega_{j,N}}{1-\zeta}
\right). 
\end{align*}
Since this bound holds for every $j\in\{1,\ldots,d_r\}$, we obtain
\begin{align*}
  \Pr(\cA)
\ge
\max_{1\le j\le d_r}
\exp\left(
-\frac{N\rho_{j,N}\omega_{j,N}}{1-\zeta}
\right).
\end{align*}
\end{proof}

\subsection{Proof of Consistency of $\cbeta_{n}$}

\begin{lemma}
  \label{lem:sub-consistency}
Under Assumptions~\ref{asmp:regularity-1}--\ref{asmp:samp-prob-2}, if
$Nh_N\rho_{j,N}\to\infty$ for $j=1,\dots,d_{r}$, then
$\cbeta_n\xrightarrow{p}\bbeta_0$.
\end{lemma}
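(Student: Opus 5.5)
I will follow the template of Lemma~\ref{lem:full-consistency}, replacing $\Psi_N$ by the subsample function $\Psi_n$ in \eqref{eq:ee-sub}. Fix $\epsilon>0$, set $A=\{\bbeta\in\Theta:\|\bbeta-\bbeta_0\|\ge\epsilon\}$, and define
$E_n=\{\sup_{\bbeta\in\Theta}\|D_N^{-2}\{\Psi_n(\bbeta)-\Psi_0(\bbeta)\}\|<c(\epsilon)/2\}$.
By Assumption~\ref{asmp:regularity-2}(iv), the same triangle-inequality argument gives $\{\cbeta_n\in A\}\subset E_n^c$. It therefore suffices to show
\[
\sup_{\bbeta\in\Theta}\|D_N^{-2}\{\Psi_n(\bbeta)-\Psi_N(\bbeta)\}\|\xrightarrow{p}0 .
\]
The companion term $\sup_{\bbeta}\|D_N^{-2}\{\Psi_N(\bbeta)-\Psi_0(\bbeta)\}\|\xrightarrow{p}0$ is exactly \eqref{eq:DN-2-Psi}, already established. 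Coordinatewise, the goal is $o_p(1)$ for the nonrare block and $o_p(\rho_{j,N})$ for the $j$-th rare coordinate.

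For the sampling error I will use the same finite $\Delta$-net $\{\bbeta_1,\dots,\bbeta_K\}$ of $\Theta$. There are three pieces.

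\emph{Pointwise term.} At a fixed $\bbeta_k$, conditional on $\Fn$, the sum $\Psi_n(\bbeta_k)-\Psi_N(\bbeta_k)=N^{-1}\sumN(\delta_i/\pi_i-1)\psi(Y_i,\X_i,\bbeta_k)$ has mean zero and conditional variance $N^{-2}\sumN(1-\pi_i)\pi_i^{-1}\psi^{\otimes2}_i$. Since $\pi_i\ge\min(h_N\alpha,1)$ by Assumption~\ref{asmp:samp-prob-1}(i), the $j$-th rare coordinate has variance at most $(N^2h_N\alpha)^{-1}\sumN(Y_i-\mu(\X_i\tp\bbeta_k))^2Z_{ij}$. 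Its expectation is $O(\rho_{j,N}/(Nh_N))$ by the bound \eqref{eq:rho-Z-bound}. Chebyshev's inequality at threshold $\eta\rho_{j,N}$ then gives a probability bound of order $(Nh_N\rho_{j,N})^{-1}\to0$, which is exactly where the hypothesis $Nh_N\rho_{j,N}\to\infty$ enters. The nonrare block is handled the same way with bound $O((Nh_N)^{-1})\to0$.

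\emph{Equicontinuity term.} Using the Lipschitz bound in Assumption~\ref{asmp:regularity-2}(iii), the error from moving off the net is bounded by
\[
\Delta\, N^{-1}\sumN \frac{\delta_i}{\pi_i}L(\X_i)Z_{ij} \quad\text{and}\quad \Delta\, N^{-1}\sumN \frac{\delta_i}{\pi_i}L(\X_i)\|\U_i\|.
\]
Each weighted sum has the same expectation as its full-data analogue, namely $O(\rho_{j,N})$ by \eqref{eq:rho-L-Z-bound}, or $O(1)$. The remaining step is to show these sums are $O_p(\rho_{j,N})$ and $O_p(1)$, after which letting $\Delta\to0$ disposes of the term. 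Markov's inequality applied to the nonnegative sums gives this immediately: their expectations equal $\Exp\{L(\X)Z_{\cdot j}\}$ and $\Exp\{L(\X)\|\U\|\}$.

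The main obstacle I expect is purely technical. The sampling probabilities $\pi_i=\min\{h_N\varphi(Y_i,\X_i),1\}$ may depend on the whole of $\Fn$ through the normalization, so the variance calculation must be carried out conditionally on $\Fn$ and then averaged. The truncation at one only removes variance, since the factor $(1-\pi_i)$ vanishes there, so the lower bound $\pi_i\ge h_N\alpha$ on untruncated units is all that is needed. Assumption~\ref{asmp:samp-prob-2} is not required for consistency beyond keeping the moments finite.

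Combining the three pieces gives $\Pr(E_n^c)\to0$, and hence $\Pr(\cbeta_n\in A)\to0$ for every $\epsilon>0$.
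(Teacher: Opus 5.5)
Your proposal is correct and follows the paper's proof essentially step for step. You reduce to $\sup_{\bbeta\in\Theta}\|D_N^{-2}\{\Psi_n(\bbeta)-\Psi_N(\bbeta)\}\|\xrightarrow{p}0$ using Lemma~\ref{lem:full-consistency}, cover $\Theta$ by a finite $\Delta$-net, control the pointwise term by Chebyshev with $(1-\pi)/\pi\le(\alpha h_N)^{-1}$ and \eqref{eq:rho-Z-bound} (which is where $Nh_N\rho_{j,N}\to\infty$ enters), and control the off-net error by the Lipschitz envelope $\Delta N^{-1}\sumN \delta_i\pi_i^{-1}L(\X_i)Z_{ij}$. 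The only differences are minor refinements that make the argument slightly cleaner: you bound that envelope with Markov's inequality where the paper uses Chebyshev concentration, and you condition explicitly on $\Fn$ to allow data-dependent normalization of $\pi_i$.
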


\begin{proof}
  Similar to the notations defined in the proof of
  Lemma~\ref{lem:full-consistency}, denote
  \begin{align*}
    & \Psi_{n,u}(\bbeta) = \oneN \sumN \left\{  \frac{\delta_{i}}{\pi(Y_{i},\X_{i})}(Y_{i} - \mu(\X_{i}\tp \bbeta))\U_{i}\right\}, \\
    & \Psi_{n,r,j}(\bbeta) = \oneN \sumN \left\{  \frac{\delta_{i}}{\pi(Y_{i},\X_{i})} (Y_{i} - \mu(\X_{i}\tp \bbeta))Z_{ij}\right\}, \quad j=1,\dots,d_{r}.
  \end{align*}
Following the argument used to derive \eqref{eq:DN-2-Psi}, it remains
to show that
$\sup_{\bbeta \in\Theta} \|D_{N}^{-2}\left\{\Psi_{n}(\bbeta) - \Psi_{0}(\bbeta) \right\}\|\xrightarrow{p} 0$.
Notice that
\begin{align*}
  \|D_{N}^{-2}\left\{\Psi_{n}(\bbeta) - \Psi_{0}(\bbeta) \right\}\|
  \le 
  \|D_{N}^{-2}\left\{\Psi_{n}(\bbeta) - \Psi_{N}(\bbeta) \right\}\|
  +  
  \|D_{N}^{-2}\left\{\Psi_{N}(\bbeta) - \Psi_{0}(\bbeta) \right\}\|,
\end{align*}
and we already know that
$\sup_{\bbeta \in\Theta} \|D_{N}^{-2}\left\{\Psi_{N}(\bbeta) - \Psi_{0}(\bbeta) \right\}\| \xrightarrow{p} 0$
from Lemma~\ref{lem:full-consistency}, what we need to prove is
$\sup_{\bbeta \in\Theta} \|D_{N}^{-2}\left\{\Psi_{n}(\bbeta) - \Psi_{N}(\bbeta) \right\}\| \xrightarrow{p} 0$.
It is sufficient to prove that
\begin{align}
\label{eq:Psi-u-sub}
  \sup_{\bbeta \in\Theta}
    \|\Psi_{n,u}(\bbeta) - \Psi_{N,u}(\bbeta)\|
  & \xrightarrow{p} 0,\\
  \sup_{\bbeta \in \Theta}
    \rho_{j,N}^{-1}|\Psi_{n,r,j}(\bbeta) - \Psi_{N,r,j}(\bbeta)|
  & \xrightarrow{p} 0, \quad j=1,\dots,d_{r}. \label{eq:Psi-r-j-sub}
\end{align}
For \eqref{eq:Psi-u-sub}, fix $\Delta>0$, then there exists a finite set
$\{\bbeta_{1}, \dots, \bbeta_{K}\} \subset \Theta$ such that for every
$\bbeta \in \Theta$, there exists an index $k(\bbeta) \in \{1,\dots,K\}$ satisfying
$\|\bbeta - \bbeta_{k(\bbeta)}\|<\Delta$. Notice that 
\begin{align*}
  & \sup_{\bbeta \in \Theta} \|\Psi_{n,u}(\bbeta) - \Psi_{N,u}(\bbeta)\| \\
  =& \sup_{\bbeta \in \Theta} \|\Psi_{n,u}(\bbeta) - \Psi_{n,u}(\bbeta_{k(\bbeta)})
     + \Psi_{n,u}(\bbeta_{k(\bbeta)}) - \Psi_{N,u}(\bbeta_{k(\bbeta)})
     + \Psi_{N,u}(\bbeta_{k(\bbeta)}) - \Psi_{N,u}(\bbeta)\| \\
  \le& \sup_{\bbeta \in \Theta}\|\Psi_{n,u}(\bbeta) - \Psi_{n,u}(\bbeta_{k(\bbeta)})\|
    + \max_{1 \le k \le K}\|\Psi_{n,u}(\bbeta_{k}) - \Psi_{N,u}(\bbeta_{k})\|
    + \sup_{\bbeta \in \Theta}\|\Psi_{N,u}(\bbeta_{k(\bbeta)}) - \Psi_{N,u}(\bbeta)\| \\
  \equiv& \sup_{\bbeta \in \Theta}L_{\pi,1}
          + \max_{1 \le k \le K}L_{\pi,2}
          + \sup_{\bbeta \in \Theta}L_{1},
\end{align*}
where $L_{1}$ is defined in the proof of
Lemma~\ref{lem:full-consistency} and
$\sup_{\bbeta \in \Theta}L_{1}=o_{p}(1)$ was established there. It
remains to show $\max_{1 \le k \le K}L_{\pi,2} = o_{p}(1)$ and
$\sup_{\bbeta \in \Theta}L_{\pi,1}= o_{p}(1)$. For fixed $k$,
\begin{align*}
  \Psi_{n,u}(\bbeta_k)-\Psi_{N,u}(\bbeta_k)
  =\oneN \sumN \left[  \left(\frac{\delta_i}{\pi(Y_{i},\X_{i})}-1\right)
\{Y_i-\mu(\X_i^\top\bbeta_k)\} \U_i\right].
\end{align*}
Its expectation is $\0$ because
$\Exp\{\delta \pi^{-1}(Y,\X) -1 \mid Y,X\} = 0 $ and the covariance
is
\begin{align*}
  \Var\{\Psi_{n,u}(\bbeta_k)-\Psi_{N,u}(\bbeta_k)\}
  = \oneN\Exp\left[\frac{1-\pi(Y,\X)}{\pi(Y,\X)} \{Y-\mu(\X\tp\bbeta_k)\}^{2} \U\U\tp\right].
\end{align*}
Notice that
$\{1-\pi(Y,\X)\} / \pi(Y,\X) < \pi^{-1}(Y,\X) \le (\alpha h_{N})^{-1}$
by Assumption~\ref{asmp:samp-prob-1} (i). Therefore, we have
\begin{align*}
  \|\Var\{\Psi_{n,u}(\bbeta_k)-\Psi_{N,u}(\bbeta_k)\}\|
  \le \frac{1}{\alpha N h_{N}} \Exp[\{Y-\mu(\X\tp\bbeta_k)\}^{2} \|\U\|^{2}]
  \to 0
\end{align*}
as $Nh_{N}\to \infty$, where
$\sup_{N}\Exp[\{Y-\mu(\X\tp\bbeta_k)\}^{2} \|\U\|^{2}] < \infty$ is by
Assumption~\ref{asmp:regularity-2} (i).  Thus,
$\Psi_{n,u}(\bbeta_k)-\Psi_{N,u}(\bbeta_k)=o_{p}(1)$ and thus
$\max_{1 \le k \le K}L_{\pi,2} = o_{p}(1)$. For
$\sup_{\bbeta \in \Theta}L_{\pi,1}$, we have
\begin{align*}
  \sup_{\bbeta \in \Theta}L_{\pi,1}
  &= \sup_{\bbeta \in \Theta}\left\|
    \oneN\sumN
    \left[
    \frac{\delta_{i}}{\pi(Y_{i},\X_{i})}
    \left\{\psi_{u}(Y_{i},\X_{i},\bbeta) - \psi_{u}(Y_{i},\X_{i},\bbeta_{k(\bbeta)})
    \right\} \right] \right\| \\
  &\le \oneN\sumN
    \sup_{\substack{\bbeta\in\Theta\\ \|\bbeta-\bbeta_{k(\bbeta)}\|<\Delta}}
  \left\|\frac{\delta_{i}}{\pi(Y_{i},\X_{i})}
  \left\{\psi_{u}(Y_{i},\X_{i},\bbeta) - \psi_{u}(Y_{i},\X_{i},\bbeta_{k(\bbeta)})
  \right\} \right\| \\
  &\le \Delta \oneN\sumN
    \left\{\frac{\delta_{i}}{\pi(Y_{i},\X_{i})} L(\X_{i})\|\U_{i}\|\right\}.
\end{align*}
Under $Nh_{N}\to\infty$,
\begin{align*}
  & \Exp \left\{\oneN\sumN\frac{\delta_{i}}{\pi(Y_{i},\X_{i})} L(\X_{i})\|\U_{i}\|\right\}
    = \Exp\{L(\X)\|\U\|\} < \infty, \\
  & \Var \left\{\oneN\sumN\frac{\delta_{i}}{\pi(Y_{i},\X_{i})} L(\X_{i})\|\U_{i}\|\right\}
    \le \oneN \Exp\left\{\frac{1}{\pi(Y,\X)} L^{2}(\X)\|\U\|^{2} \right\}
    \le \frac{1}{\alpha N h_{N}} \Exp\{ L^{2}(\X)\|\U\|^{2}\} \to 0,
\end{align*}
where $\sup_N \Exp\{L^2(\X)\|\U\|^2\}<\infty$ is by
Assumption~\ref{asmp:regularity-2} (iii).
Therefore,
\begin{align*}
  \oneN \sumN
  \left\{
  \frac{\delta_{i}}{\pi(Y_{i},\X_{i})} L(\X_{i})\|\U_{i}\|
  \right\}
  =
  \Exp\{L(\X)\|\U\|\}+o_{p}(1)=O_{p}(1).
\end{align*}
As a result, for any fixed $\Delta>0$,
$\sup_{\bbeta\in\Theta}L_{\pi,1}=O_p(\Delta)$. Letting $\Delta\to 0$, we
have $\sup_{\bbeta\in\Theta}L_{\pi,1}=o_p(1)$. This proves
equation~\eqref{eq:Psi-u-sub}.

To prove Equation~\eqref{eq:Psi-r-j-sub}, i.e.,
$\sup_{\bbeta \in \Theta}|\Psi_{n,r,j}(\bbeta) - \Psi_{N,r,j}(\bbeta)| = o_{p}(\rho_{j,N})$,
note that
\begin{align*}
  & \sup_{\bbeta \in \Theta} |\Psi_{n,r,j}(\bbeta) - \Psi_{N,r,j}(\bbeta)| \\
 \le& \sup_{\bbeta \in \Theta}|\Psi_{n,r,j}(\bbeta) - \Psi_{n,r,j}(\bbeta_{k(\bbeta)})|
    + \max_{1 \le k \le K}|\Psi_{n,r,j}(\bbeta_{k}) - \Psi_{N,r,j}(\bbeta_{k})|
    + \sup_{\bbeta \in \Theta}|\Psi_{N,r,j}(\bbeta_{k(\bbeta)}) - \Psi_{N,r,j}(\bbeta)| \\
  \equiv& \sup_{\bbeta \in \Theta}R_{\pi,1}
          + \max_{1 \le k \le K}R_{\pi,2}
          + \sup_{\bbeta \in \Theta}R_{1}.
\end{align*}
We have shown $\sup_{\bbeta \in \Theta}R_{1}=o_{p}(\rho_{j,N})$ in the
proof of Lemma~\ref{lem:full-consistency}, so it remains to prove
$\sup_{\bbeta \in \Theta}R_{\pi,1}=o_{p}(\rho_{j,N})$ and
$\max_{1 \le k \le K}R_{\pi,2}=o_{p}(\rho_{j,N})$.  For
$\max_{1 \le k \le K}R_{\pi,2}$, we show that for each
$\bbeta_{k} \in \{\bbeta_{1},\dots,\bbeta_{K}\}$ and any $\eta>0$,
\begin{align*}
  & \Pr\left(|\Psi_{n,r,j}(\bbeta_{k}) - \Psi_{N,r,j}(\bbeta_{k})| > \rho_{j,N}\eta\right) \\
  =&
     \Pr\left(
     \left|\oneN \sumN
     \left[ \left\{ \frac{\delta_{i}}{\pi(Y_{i},\X_{i})} - 1 \right\}
     \{Y_{i}-\mu(\X_{i}\tp \bbeta_{k})\}Z_{ij}
     \right] \right| > \rho_{j,N}\eta
     \right) \\
  \le&
       \frac{\Var\left[
       \left\{ \frac{\delta_{i}}{\pi(Y_{i},\X_{i})} - 1 \right\}
       \{Y-\mu(\X\tp\bbeta_{k})\}Z_{\cdot j}\right]}
       {N\rho_{j,N}^{2}\eta^{2}} \\
  \le&
       \frac{\Exp[\{Y-\mu(\X\tp\bbeta_{k})\}^{2}Z_{\cdot j}]}
       {N\rho_{j,N}^{2}h_{N}\alpha\eta^{2}}\to 0,
\end{align*}
as $N\rho_{j,N}h_{N}\to\infty$, where we again use
inequality~\eqref{eq:rho-Z-bound}.  Thus, we have 
$\max_{1 \le k \le K}R_{\pi,2} = o_{p}(\rho_{j,N})$ since $K$ is
finite.  For $\sup_{\bbeta \in \Theta}R_{\pi,1}$,
\begin{align*}
\sup_{\bbeta\in\Theta}R_{\pi,1}
&=\sup_{\bbeta\in\Theta}|\Psi_{n,r,j}(\bbeta)-\Psi_{n,r,j}(\bbeta_{k(\bbeta)})|\\
&\le \sup_{\bbeta\in\Theta} \oneN \sumN \left\{
  \frac{\delta_{i}}{\pi(Y_{i},\X_{i})}
|\mu(\X_i^\top\bbeta)-\mu(\X_i^\top\bbeta_{k(\bbeta)})|\,Z_{ij}\right\} \\
&\le \oneN \sumN \sup_{\bbeta\in\Theta} \left\{ 
   \frac{\delta_{i}}{\pi(Y_{i},\X_{i})}
  |\mu(\X_i^\top\bbeta)-\mu(\X_i^\top\bbeta_{k(\bbeta)})|\,Z_{ij} \right\} \\
&\le \Delta \oneN \sumN
\left\{  \frac{\delta_{i}}{\pi(Y_{i},\X_{i})}
  L(\X_{i}) Z_{ij} \right\}.
\end{align*}
By inequality~\eqref{eq:rho-L-Z-bound} and
Assumption~\ref{asmp:regularity-2} (iii) we have that for any
$\eta>0$,
\begin{align*}
  & \Pr \left(\left|\oneN\sumN \left\{  \frac{\delta_{i}}{\pi(Y_{i},\X_{i})}
    L(\X_{i}) Z_{ij} \right\}
    -\rho_{j,N}\Exp\{L(\X)\mid Z_{\cdot j}=1\}\right|>\eta\rho_{j,N}\right) \\
  &\le \frac{\Exp\{L^{2}(\X) Z_{\cdot j}\}}{\alpha \eta^2 N\rho_{j,N}^{2}h_{N}} \\
  &\le \frac{\Exp\{L^{2}(\X)\mid Z_{\cdot j}=1\}}{\alpha \eta^2 N\rho_{j,N}h_{N}} 
    \to 0,
\end{align*}
where we use
$\sup_{N} [\rho_{j,N}^{-1}\Exp\{\delta \pi^{-1}(Y,\X)L^{2}(\X) Z_{\cdot j}\} ]
= \sup_{N}[\rho_{j,N}^{-1} \Exp\{L^{2}(\X)Z_{\cdot j}\}] < \infty$. Hence
$N^{-1}\sum_{i=1}^N \{\delta_{i} \pi^{-1}(Y_{i},\X_{i}) L(\X_i)Z_{ij}\}
=O(\rho_{j,N})+o_p(\rho_{j,N}) =O_p(\rho_{j,N})$.
Therefore, $\sup_{\bbeta\in\Theta}R_{\pi,1}= O_p(\Delta \rho_{j,N})$ and
$\sup_{\bbeta\in\Theta}R_{\pi,1}= o_p( \rho_{j,N})$ as $\Delta\to 0$. Thus,
we proved Equation~\eqref{eq:Psi-r-j-sub}. This completes the proof of
Lemma~\ref{lem:sub-consistency}.
\end{proof}

\subsection{Proof of Theorem~\ref{thm:sub1}}
\label{supp:thm-sub1}
\begin{proof}
Recall that
$\Psi_{n}(\bbeta) = N^{-1} \sumN
\left\{ \delta_{i} \pi^{-1}(Y_{i}, \X_{i}) \psi(Y_{i},\X_{i},\bbeta) \right\}$.
Using a Taylor expansion for $\Psi_{n}(\cbeta_{n})$,
\begin{align*}
  \Psi_{n}(\cbeta_{n})
  &= \Psi_{n}(\bbeta_{0})
  - \left[\int_{0}^{1}\left\{ \oneN \sumN \frac{\delta_{i}}{\pi(Y_{i}, \X_{i})}
    \dot{\mu}\left(
    \X_{i}\tp
    \{\bbeta_{0}+\lambda(\cbeta_{n}-\bbeta_{0})\}
    \right)
    \X_{i}\X_{i}\tp \right\} \diff \lambda \right]
  (\cbeta_{n}-\bbeta_{0}) \\
  & = \0.
\end{align*}
After scaling, we obtain
\begin{align*}
   \sqrt{N}\Omega_{N}D_{N}(\cbeta_{n}-\bbeta_{0})
  = M_{\pi,N}^{-1}(\cbeta_{n})  \frac{1}{\sqrt{N}} \sumN \eta_{N,i},
\end{align*}
where
\begin{align*}
   M_{\pi,N}(\cbeta_{n})
  &  = \int_{0}^{1}\left\{
    \oneN\sumN
    \frac{\delta_{i}}{\pi(Y_{i}, \X_{i})}
    \dot{\mu}\left(\X_{i}\tp\{
    \bbeta_{0} + \lambda(\cbeta_{n} - \bbeta_{0})\}\right)
    K_{N}(\X_{i}) \right\} \diff \lambda,\\
  \eta_{N,i}
  & = \frac{\delta_{i}}{\pi(Y_{i},\X_{i})}
  \left\{Y_{i} - \mu(\X_{i}\tp\bbeta_{0})  \right\}
  \Omega_{N} D_{N}^{-1} \X_{i}, \quad i=1,\dots,N.
\end{align*}
We will prove that 
\begin{align}
  M_{\pi,N}(\cbeta_{n})
  & \xrightarrow{p} M_{\pi}(\bbeta_{0}), \label{eq:M-pi-N}\\
  \frac{1}{\sqrt{N}} \sumN \eta_{N,i}
  & \xrightarrow{d} \Nor(\0, \Lambda_{\pi}(\bbeta_{0})), \label{eq:sum-eta}
\end{align}
then we will have
\begin{align*}
\sqrt{N}\Omega_{N}D_{N}(\cbeta_{n}-\bbeta_{0})
\xrightarrow{d}
\Nor\left(\0,
  \left\{M_{\pi}^{-1}(\bbeta_{0})\right\}
  \Lambda_{\pi}(\bbeta_{0})
  \left\{ M_{\pi}^{-1}(\bbeta_{0}) \right\}\tp  \right)
\end{align*}
by Slutsky's theorem and the continuous mapping theorem. To prove
\eqref{eq:M-pi-N}, it suffices to show that the $(k,l)$-th entry of
$M_{\pi,N}(\cbeta_{n}) - M_{\pi}(\bbeta_{0})$ is $o_{p}(1)$ for
$k,l=1,\dots,d$. Notice that
\begin{align*}
  & |M_{\pi,N}(\cbeta_{n}) - M_{\pi}(\bbeta_{0})|_{k,l} \\
  \le& \int_{0}^{1}\left| \oneN \sumN
     \frac{\delta_{i}}{\pi(Y_{i}, \X_{i})}
    \left\{  
      \dot{\mu}\left(\X_{i}\tp(\bbeta_{0}+\lambda(\cbeta_{n}-\bbeta_{0})\right)
      - \dot{\mu}(\X_{i}\tp\bbeta_{0}) \right\}
    K_{N}(\X_{i}) \right|_{k,l}
    \diff \lambda \\
  &+ \left| 
     \left\{ \oneN\sumN
    \left(\frac{\delta_{i}}{\pi(Y_{i}, \X_{i})}-1\right)
\dot{\mu}\left(\X_{i}\tp\bbeta_{0}\right)
    K_{N}(\X_{i})\right\}
    \right|_{k,l}  \\
  &+  \left| 
     \left\{ \oneN\sumN \dot{\mu}\left(\X_{i}\tp\bbeta_{0}\right)
    K_{N}(\X_{i})\right\}
    - M_{\pi}(\bbeta_{0}) \right|_{k,l} \\
  \le& \|\cbeta_{n}-\bbeta_{0}\| \left\{ \oneN\sumN
    \left|\frac{\delta_{i}}{\pi(Y_{i}, \X_{i})} L(\X_{i})
      K_{N}(\X_{i})
       \right|_{k,l}  \right\}
       \int_{0}^{1} \lambda \diff \lambda \\
  &+  \left| \left\{ \oneN \sumN 
      \left(\frac{\delta_{i}}{\pi(Y_{i}, \X_{i})} - 1 \right)
      \dot{\mu}\left(\X_{i}\tp\bbeta_{0}\right)
    K_{N}(\X_{i}) \right\}\right|_{k,l}\\
  &+  \left| 
     \left\{ \oneN\sumN \dot{\mu}\left(\X_{i}\tp\bbeta_{0}\right)
    K_{N}(\X_{i})\right\}
    - M_{\pi}(\bbeta_{0}) \right|_{k,l} \\ 
  =& \|\cbeta_{n}-\bbeta_{0}\| h_{k,l} \int_{0}^{1} \lambda \diff \lambda
     + |g_{k,l}| + o_{p}(1),
\end{align*}
where we denote 
\begin{align*}
  h_{k,l}
  &=  \oneN \sumN \left| \frac{\delta_{i}}{\pi(Y_{i}, \X_{i})} L(\X_{i})
  K_{N}(\X_{i})\right|_{k,l}, \\
  g_{k,l}
  &= 
     \left\{ \oneN\sumN
       \left(\frac{\delta_{i}}{\pi(Y_{i}, \X_{i})} - 1 \right)
      \dot{\mu}\left(\X_{i}\tp\bbeta_{0}\right)
    K_{N}(\X_{i})
    \right\}_{k,l},
\end{align*} 
and the third term is $o_p(1)$ by Assumption~\ref{asmp:samp-prob-1} (ii).
We already know $\|\cbeta_{n}-\bbeta_{0}\|=o_{p}(1)$ by
Lemma~\ref{lem:sub-consistency}. Additionally, we have $h_{k,l} = O_{p}(1)$
 by Assumption~\ref{asmp:samp-prob-2} (i):
\begin{align*}
  \Exp (h_{k,l})
  = \Exp\{|L(\X) K_{N}(\X)|_{k,l}\} 
  \le \sup_{N} \Exp \left\| L(\X) K_{N}(\X) \right\|_{F} < \infty.
\end{align*}
Therefore, Markov's inequality yields $h_{k,l} = O_{p}(1)$ and thus
$\|\cbeta_{n}-\bbeta_{0}\| h_{k,l} \int_{0}^{1} \lambda \diff \lambda = o_{p}(1)$. Hence,
it remains to show $g_{k,l} = o_{p}(1)$ for $k,l=1,\dots,d$. Notice
that
\begin{align*}
  \Exp (g_{k,l}) = \Exp \left[\oneN\sumN
\left(\frac{\delta_{i}}{\pi(Y_{i}, \X_{i})} - 1\right)
      \dot{\mu}\left(\X_{i}\tp\bbeta_{0}\right)
  \{K_{N}(\X_{i})\}_{k,l}\right]
  = \0
\end{align*}
since $\Exp\{\delta / \pi(Y, \X) \mid Y,\X\} = 1$. By
Assumption~\ref{asmp:samp-prob-2} (iii), we know that
\begin{align*}
  \Var(g_{k,l})
  &= \Var \left[ \oneN\sumN
\left(\frac{\delta_{i}}{\pi(Y_{i}, \X_{i})} - 1\right)
      \dot{\mu}\left(\X_{i}\tp\bbeta_{0}\right)
  \left\{ K_{N}(\X_{i}) \right\}_{k,l}\right] \\
 & = \oneN \Exp \left[
  \frac{1-\pi(Y, \X)}{\pi(Y, \X)}
  \dot{\mu}^{2}\left(\X\tp\bbeta_{0}\right)
    \left\{K_{N}(\X) \right\}^{2}_{k,l}\right] \\
 &\le \frac{1}{\alpha N h_{N}}
    \Exp \left[
  \dot{\mu}^{2}\left(\X\tp\bbeta_{0}\right)
    \left\{K_{N}(\X) \right\}^{2}_{k,l}\right] \\
 &\le \frac{1}{\alpha N h_{N}}
    \Exp \left\|
  \dot{\mu}\left(\X\tp\bbeta_{0}\right)
    K_{N}(\X)
    \right\|^{2}_{F}     \to 0.
\end{align*}
By Chebyshev's inequality, we have $g_{k,l}=o_{p}(1)$ and thus
$|g_{k,l}|=o_{p}(1)$. Equation~\eqref{eq:M-pi-N} is proved.

Next, we prove \eqref{eq:sum-eta}.  We will use LF-CLT again. We
first derive the asymptotic mean and variance for
$N^{-1/2}\sumN \eta_{N,i}$, and then check the Lindeberg
condition. Denote
$\eta_{N} = \delta \pi^{-1}(Y,\X)\Omega_{N} D_{N}^{-1}\psi(Y,\X,\bbeta_{0})$,
then
\begin{align*}
  \Exp (\eta_{N})
  = \Exp\left\{\frac{\delta}{\pi(Y,\X)}
      \Omega_{N} D_{N}^{-1}\psi(Y,\X,\bbeta_{0})\right\} 
  = \Omega_{N}D_{N}^{-1}\Exp\left\{\psi(Y,\X,\bbeta_{0})\right\} = \0.
\end{align*}
Hence, $\Exp \{ N^{-1/2}\sumN \eta_{N,i} \} = 0$. For the
covariance, by Assumption~\ref{asmp:samp-prob-1} (ii), 
\begin{align*}
   \Var \left( \frac{1}{\sqrt{N}} \sumN \eta_{N,i}\right)
  = & \frac{1}{N} \sumN \Exp(\eta_{N,i}\eta_{N,i}\tp) \\
  = & \frac{1}{N} \sumN
      \Exp\left[
      \frac{\delta_{i}\{Y_{i} - \mu(\X_{i}\tp\bbeta_{0})\}^{2}}{\pi^{2}(Y_{i},\X_{i})}
      \Omega_{N}D_{N}^{-1} \X_{i} \X_{i}\tp D_{N}^{-1}\Omega_{N} \right] \\
  = & \Exp\left[\frac{1}{N} \sumN
      \frac{\{Y_{i} - \mu(\X_{i}\tp\bbeta_{0})\}^{2}}{\pi(Y_{i},\X_{i})}
      \Omega_{N}D_{N}^{-1} \X_{i} \X_{i}\tp D_{N}^{-1}\Omega_{N} \right] \\
  = & \Exp\left[
      \Omega_{N}D_{N}^{-1}\Lambda_{\pi,N}(\bbeta_{0})D_{N}^{-1}\Omega_{N}
      \right]
  \to  \Lambda_{\pi}(\bbeta_{0}).
\end{align*}
Next, we check the Lindeberg condition.
It remains to verify
that for every $\epsilon > 0$,
\begin{equation*}
  N \Exp \left\{
    \left\| \frac{1}{\sqrt{N}} \eta_{N} \right\|^{2}
    \bI \left( \left\| \frac{1}{\sqrt{N}}  \eta_{N} \right\| > \epsilon \right)
  \right\} \to 0.
\end{equation*}
By Assumption~\ref{asmp:samp-prob-2} (ii), for any $\epsilon>0$,
\begin{align*} 
   N \Exp \left\{
    \left\|\onesqN \eta_{N}\right\|^{2}
    \bI \left( \left\| \onesqN \eta_{N} \right\| > \epsilon \right)
    \right\} 
  = \Exp \left\{
    \left\| \eta_{N} \right\|^{2}
    \bI \left( \left\| \eta_{N}\right\| > \epsilon \sqrt{N}\right)
    \right\} 
  \le \frac{\Exp\left\| \eta_{N}\right\|^{2+\kappa_{2}}}{(\epsilon\sqrt{N})^{\kappa_{2}}}
\to 0.
\end{align*}
This completes the proof of Equation~\eqref{eq:sum-eta} and 
the proof of Theorem~\ref{thm:sub1}.
\end{proof}

\subsection{Proof of Corollary~\ref{cor:rc-variance-decomposition}}
\label{supp:cor-rc-variance-decomposition}

\begin{proof}
Under an RC design, $\pi(Y_i,\X_i)=1$ for $i\in\RN$ and 
$\X_i=J_u\U_i$ for $i\in\RN^c$. Therefore,
\begin{align*}
\Lambda_{\pi,N}(\bbeta_0)
={}&\oneN\sumN\frac{\{Y_i-\mu(\X_i\tp\bbeta_{0})\}^2}{\pi(Y_i,\X_i)}\X_i\X_i\tp\\
={}&\Lambda_N(\bbeta_0)
+\oneN\sum_{i\in\RN^c}
\frac{1-\pi(Y_i,\U_i)}{\pi(Y_i,\U_i)}
\{Y_i-\mu(\U_i\tp\bbeta_{0,u})\}^2
J_u\U_i\U_i\tp J_u\tp.
\end{align*}
Substitution into
$\Var_N\{\cbeta_n\}=N^{-1}M_N^{-1}(\bbeta_0)
\Lambda_{\pi,N}(\bbeta_0)M_N^{-1}(\bbeta_0)$ gives
Equation~\eqref{eq:rc-variance-decomposition}.
\end{proof}

\subsection{Proof of Proposition~\ref{prop:rc-optimal-scores}}
\label{supp:prop-rc-optimal-scores}

\begin{proof}
In this proof, we abbreviate $\pi(Y_{i},\X_{i})$ to $\pi_{i}$. For
$i\in\RN^c$, define
\begin{align*}
t_i=
|Y_i-\mu(\U_i\tp\bbeta_{0,u})|
\left\|L M_N^{-1}(\bbeta_0)J_u\U_i\right\|.
\end{align*}
By Equation~\eqref{eq:rc-variance-decomposition}, conditional on $\Fn$
and subject to $\sum_{i\in\RN^c}\pi_{i}=n-|\RN|$, and $0<\pi_{i}\le 1$
for $i\in\RN^c$, minimizing $\tr\{\Var_N(L\cbeta_n)\}$ with respect to
the probabilities $\{\pi_{i}\}_{i \in \RN^{c}}$ is equivalent to
minimizing $\sum_{i\in\RN^c}(\pi_{i}^{-1}t_{i}^{2})$. The Lagrangian
is
\begin{align*}
  \sum_{i\in\RN^c}\frac{t_i^{2}}{\pi_i}
  + \lambda \left\{\left(\sum_{i\in \RN^c}\pi_i\right) - (n-|\RN|) \right\}
+ \sum_{i\in\RN^c}\nu_i(\pi_i-1),
\end{align*}
where $\lambda\in\mathbb R$ and $\nu_i\ge0$ are the Lagrangian
multipliers. Since the objective function is strictly convex and the
constraints are convex, the Karush--Kuhn--Tucker (KKT) conditions are
sufficient for optimality. The KKT conditions are
\begin{align*}
  - \frac{t_{i}^{2}}{\pi_{i}^{2}} + \lambda + \nu_{i} & =0, \\
  \sum_{i\in\RN^c}\pi_{i} &= n-|\RN|, \\
  \nu_i(\pi_i-1) &= 0,
\end{align*}
together with $0<\pi_i\le1$ and $\nu_i\ge0$.  The case $n=N$ is
trivial; otherwise, at least one $\pi_k<1$, so $\nu_k=0$ and
$\lambda=t_k^2/\pi_k^2>0$.  For $i\in\RN^c$, if $0<\pi_{i}<1$, then
$\nu_{i}=0$ and $\pi_{i}=\lambda^{-1/2}t_{i}$, i.e., $\pi_{i}$ is
proportional to $t_{i}$. For brevity, define $\tau=\lambda^{-1/2}$. If
$\pi_{i}=1$, the KKT conditions give
$t_{i}^{2} = \lambda + \nu_{i} \ge \lambda$, which is equivalent to
$\tau t_{i} \ge 1$. Therefore, the probabilities
$\{\pi_{i}\}_{i \in \RN^{c}}$ that minimize $\tr\{\Var_N(L\cbeta_n)\}$
satisfy
\begin{align*}
  \pi_i=
\begin{cases}
\tau t_i, & \tau t_i<1,\\
1, & \tau t_i\ge 1,
\end{cases}
\end{align*}
or, equivalently, $\pi_{i}=\min\{\tau t_{i}, 1\}$, where $\tau>0$ is
chosen so that $\sum_{i\in\RN^c}\pi_{i}=n-|\RN|$. Since we also 
define $\pi_i=\min\{h_N\varphi(Y_i,\X_i),1\}$, an optimal weight
function $\varphi(Y_i,\X_i)$ can therefore be chosen as
$\varphi(Y_i,\X_i)=(\tau/h_N)t_i$, and hence
$\varphi(Y_i,\X_i)\propto t_i$.  Setting $L=\I$ gives the RC-Aopt
weight function. Setting $L=M_N(\bbeta_0)$ and using
$\|J_u\U_i\|=\|\U_i\|$ gives the RC-Lopt weight function.
\end{proof}

\subsection{Proof of Proposition~\ref{prop:original-optimal}}
\label{supp:prop-original-optimal}
\begin{proof}
In this proof, we abbreviate $\pi(Y_i,\X_i)$ to $\pi_i$ and define
$t_i = |Y_i-\mu(\X_i\tp\bbeta_0)| \left\|L M_N^{-1}(\bbeta_0)\X_i\right\|$,
$i=1,\ldots,N$. By the definitions of $\Var_N(\cbeta_n)$ and
$\Lambda_{\pi,N}(\bbeta_0)$,
\begin{align*}
\tr\{\Var_N(L\cbeta_n)\}
=
\frac{1}{N^2}\sumN
\frac{
\{Y_i-\mu(\X_i\tp\bbeta_0)\}^2
\|L M_N^{-1}(\bbeta_0)\X_i\|^2
}{\pi_i} 
=
\frac{1}{N^2}\sumN\frac{t_i^2}{\pi_i}.
\end{align*}
Conditional on $\Fn$ and subject to $\sumN\pi_{i}=n$, and
$0<\pi_{i}\le 1$ for $i=1,\dots,N$, minimizing
$\tr\{\Var_N(L\cbeta_n)\}$ with respect to the probabilities
$\{\pi_{i}\}_{i=1}^{N}$ is equivalent to minimizing
$\sumN (\pi_{i}^{-1}t_{i}^{2})$.  Therefore, by similar KKT
argument in the proof of
Proposition~\ref{prop:rc-optimal-scores} gives
$\pi_i=\min\{\tau t_i,1\}$, $i=1,\ldots,N$, where $\tau>0$ is chosen
so that $\sumN\pi_i=n$. Since $\pi_i=\min\{h_N\varphi(Y_i,\X_i),1\}$,
an optimal weight function can be chosen as
$\varphi(Y_i,\X_i)=(\tau/h_N)t_i$, and hence
$\varphi(Y_i,\X_i) \propto |Y_i-\mu(\X_i\tp\bbeta_0)| \left\|L M_N^{-1}(\bbeta_0)\X_i\right\|$.
Taking $L=\I$ gives the Aopt weight function. Taking $L=M_N(\bbeta_0)$
gives the Lopt weight function.
\end{proof}

\subsection{Proof of Proposition~\ref{prop:bl-order}}
\label{supp:prop-bl-order}

\begin{proof}
Let $m=|\mathcal B|$, which is a fixed constant. Conditional on the
full data $\Fn$, write the BL sampling probabilities as
$\pi_i^{\mathrm{BL}}=\min\{h_N\tau_1b(\X_i),1\}$, $i=1,\dots,N$, where
$\tau_1>0$ is chosen so that $\sumN\pi_i^{\mathrm{BL}}=n$. Because the
observations are independent and identically distributed and the BL
construction is symmetric in the observations, the pairs
$(\pi_i^{\mathrm{BL}},Z_{ij})$, $i=1,\dots,N$, are exchangeable. We
will prove
$\omega_{j,N}^{\mathrm{BL}}\gtrsim \min\{h_N\rho_{j,N}^{-1},1\} $ and
$\omega_{j,N}^{\mathrm{BL}} \le \min\{h_N\rho_{j,N}^{-1},1\}$.

For feature $k\in\mathcal B$, there are $N\bar X_{\cdot k}$
observations with $X_{ik}=1$ and $N(1-\bar X_{\cdot k})$ observations
with $X_{ik}=0$. Therefore, the average balance score satisfies
\begin{align*}
\bar b_N
&=
\oneN\sumN b(\X_i) \\
&=
\oneN\sumN\sum_{k\in\mathcal B}
\frac{|X_{ik}-\bar X_{\cdot k}|}{\bar X_{\cdot k}} \\
&=
\sum_{k\in\mathcal B}
\left\{
\bar X_{\cdot k}
\frac{1-\bar X_{\cdot k}}{\bar X_{\cdot k}}
+
(1-\bar X_{\cdot k})
\right\} \\
&=
\sum_{k\in\mathcal B}2(1-\bar X_{\cdot k})
\le2m.
\end{align*}
Moreover, $n
  = \sumN
\min\left\{h_N\tau_{1}b(\X_i),1\right\}
\le
\sumN h_N\tau_{1}b(\X_i)
= n \tau_{1}\bar b_N$, so that 
\begin{align}
\label{eq:supp-bl-tau-lower}
\tau_1
&\ge
\frac{1}{\bar b_N}
\ge
\frac{1}{2m}.
\end{align}
For a fixed rare feature $j\in\{1,\dots,d_r\}$, conditional on
$Z_{1j}=1$, we have
\begin{align*}
\frac{\Exp(\bar Z_{\cdot j}\mid Z_{1j}=1)}{\rho_{j,N}}
&=
1+\frac{1-\rho_{j,N}}{N\rho_{j,N}}
\to 1, \\
\Var\left(
\frac{\bar Z_{\cdot j}}{\rho_{j,N}}
\,\middle|\,
Z_{1j}=1
\right)
&=
\frac{(N-1)(1-\rho_{j,N})}
     {N^2\rho_{j,N}}
\to 0,
\end{align*}
as $N\rho_{j,N}\to\infty$. Hence,
$\bar Z_{\cdot j}/\rho_{j,N}\xrightarrow{p}1$ conditional on
$Z_{1j}=1$. Moreoever, define event
$\mathcal E_{j,N} = \{\bar Z_{\cdot j}\le2\rho_{j,N}\}$.  The above
conditional convergence gives
$\Pr(\mathcal E_{j,N}\mid Z_{1j}=1)\to1$. Because $\rho_{j,N}\to0$, on
$\mathcal E_{j,N}$ and for all sufficiently large $N$, we have
$1-\bar Z_{\cdot j}\ge1/2$.  Next, notice that when $Z_{1j}=1$, the
contribution of rare feature $j$ to $b(\X_1)$ is
$(1-\bar Z_{\cdot j})/\bar Z_{\cdot j}$ and contributions from all
other binary features are nonnegative, so
$b(\X_1)\ge(1-\bar Z_{\cdot j})/\bar Z_{\cdot j}$.  Using
\eqref{eq:supp-bl-tau-lower}, we obtain
\begin{align*}
\omega_{j,N}^{\mathrm{BL}}
&=
\Exp(\pi_1^{\mathrm{BL}}\mid Z_{1j}=1) \\
&=
\Exp\left[\min\{h_N\tau_1b(\X_1),1\}\middle|Z_{1j}=1\right] \\
&\ge
  \Exp\left[
  \min\left\{h_N\frac{1-\bar Z_{\cdot j}} {2m\bar Z_{\cdot j}},1 \right\}
\bI(\mathcal E_{j,N})
 \middle| Z_{1j}=1
\right] \\
&\ge
\min\left\{\frac{h_N}{8m\rho_{j,N}},1\right\}
  \Pr(\mathcal E_{j,N}\mid Z_{1j}=1)
  \gtrsim
\min\{h_N\rho_{j,N}^{-1},1\}.
\end{align*}

For the upper bound, $\pi_i^{\mathrm{BL}}\le1$ gives
$\omega_{j,N}^{\mathrm{BL}}\le1$. Moreover, 
\begin{align*}
\Exp(\pi_1^{\mathrm{BL}})
=
\oneN\sumN\Exp(\pi_i^{\mathrm{BL}})
=
\oneN\Exp\left(\sumN\pi_i^{\mathrm{BL}}\right)
=h_N.
\end{align*}
Therefore, $\rho_{j,N}\omega_{j,N}^{\mathrm{BL}} = \Exp(\pi_1^{\mathrm{BL}}Z_{1j}) \le \Exp(\pi_1^{\mathrm{BL}}) =
h_N $,
which implies
$\omega_{j,N}^{\mathrm{BL}}\le h_N\rho_{j,N}^{-1}$.
Combining the two upper bounds gives $\omega_{j,N}^{\mathrm{BL}}
\le
\min\{h_N\rho_{j,N}^{-1},1\}$.
This proves $\omega_{j,N}^{\mathrm{BL}}
\asymp
\min\{h_N\rho_{j,N}^{-1},1\}$ and $N\rho_{j,N}\omega_{j,N}^{\mathrm{BL}}
\asymp
\min\{n,N\rho_{j,N}\}$.
\end{proof}

\subsection{Proof of Proposition~\ref{prop:ra-lopt}}
\label{supp:prop-ra-lopt}

\begin{proof}
Conditional on $\Fn$, apply Proposition~\ref{prop:original-optimal}
with $L=D_N^{-2}M_N(\bbeta_0)$.  Then the RA-Lopt optimal weight
function satisfies
\begin{align*}
\varphi^{\text{RA-Lopt}}(Y_i,\X_i)
\propto
|Y_i-\mu(\X_i\tp\bbeta_0)|
\left\|D_N^{-2}\X_i\right\|,
\qquad i=1,\ldots,N.
\end{align*}
\end{proof}

\subsection{Proof of Theorem~\ref{thm:ag}}
\label{supp:thm-ag}

\begin{proof}

First, we define quantities used in the oracle version of the union
procedure.  Let
$\pi^{0}(Y_i,\X_i)=\min\{h_N\varphi^{0}(Y_i,\X_i),1\}$, $i=1,\dots,N$,
denote the oracle second-step sampling probabilities, where
$\varphi^{0}(Y_i,\X_i)$ is obtained by replacing the pilot quantities
in $\tilde\varphi(Y_i,\X_i)$ by their limits.  Define the oracle
second-step inclusion indicator
$\delta_i^{0}=\bI\{V_i<\pi^{0}(Y_i,\X_i)\}$, where $V_i$ is the same
uniform variable used in $\tdelta_i$. The oracle union inclusion
indicators and inclusion probabilities are
\begin{align*}
   \delta_{\ag,i}^{0}
    &= \bI(\delta_{\pl,i} = 1 \text{ or } \delta_{i}^{0} = 1)
    = \delta_{\pl,i} + (1-\delta_{\pl,i})\delta_{i}^{0},\\
   \pi_{\ag}^{0}(Y_{i},\X_{i})
    &=\pi_{\pl}(Y_i,\X_i) + \{1-\pi_{\pl}(Y_i,\X_i)\}\pi^0(Y_{i},\X_{i}),
      \quad i=1,\ldots,N.
\end{align*}
The corresponding oracle union estimating equation is
\begin{align*}
  \Psi^{0}_{\ag}(\bbeta) =
  \oneN 
  \sumN \left\{\frac{\delta^{0}_{\ag,i}}{\pi^{0}_{\ag}(Y_{i}, \X_{i})}
  \psi(Y_{i},\X_{i},\bbeta) \right\}.
\end{align*}
Let
$h_{\ag,N}^0=\Exp\{\pi_{\ag}^{0}(Y,\X)\}$.
Let $\Omega_{\ag,N}^{0} = \operatorname{diag}\{(h_{\ag,N}^0)^{1/2}\mathbf{1}\tp_{d_{u}}, \{(\bomega_{\ag}^{0})^{1/2}\}\tp\}$,
where
$\bomega_{\ag}^{0} = (\omega_{\ag,1}^{0}, \dots, \omega_{\ag,d_{r}}^{0})\tp$
and $\omega_{\ag,j}^0=\Exp\{\pi_{\ag}^{0}(Y,\X) \mid Z_{\cdot j}=1\}$.
Let $K_{\ag,N}^{0}(\X)
  = \Omega_{\ag,N}^{0}D_N^{-1}\X\X\tp D_N^{-1}(\Omega_{\ag,N}^{0})^{-1}$ and 
\begin{align*}
  G_N(Y,\X,\bbeta_0)
  =& |\dot\mu(\X^\top\bbeta_0)|\|K_{\ag,N}^0(\X)\|_F
+\|D_N^{-2}\psi(Y,\X,\bbeta_0)\|, \\
  T_N(Y,\X,\bbeta_0)
  =& L(\X)\|K_{\ag,N}^0(\X)\|_F + \|D_N^{-1}\psi(Y,\X,\bbeta_0)\|^2 .
\end{align*}
Thus,  Assumption~\ref{asmp:ag} (ii) implies:
\begin{align}
  & \sup_N h_{\pl,N}^{-1}
    \Exp\{\pi_{\pl}(Y,\X)L_{\pl}(Y,\X)|\dot\mu(\X^\top\bbeta_0)|\|K_{\ag,N}^0(\X)\|_F\}
    <\infty, \label{eq:ag-GN-ineq-2}\\
  & \sup_N h_{\pl,N}^{-1}
    \Exp\{\pi_{\pl}(Y,\X)L_{\pl}(Y,\X)\|D_N^{-2}\psi(Y,\X,\bbeta_0)\|\}
    <\infty, \label{eq:ag-GN-ineq-3}\\
  & \sup_N \Exp\{L_{\pl}(Y,\X)L(\X)\|K_{\ag,N}^0(\X)\|_F\}<\infty,
  \label{eq:ag-TN-ineq-1}\\
  & \sup_N \Exp\{L_{\pl}(Y,\X)\|D_N^{-1}\psi(Y,\X,\bbeta_0)\|^2\}<\infty.
  \label{eq:ag-TN-ineq-2}
\end{align}
Recall the feasible union estimator $\cbeta_{\ag}$ is the solution of
\begin{align*}
  \tilde\Psi_{\ag}(\bbeta) =
  \oneN 
  \sumN \left\{\frac{\tdelta_{\ag,i}}{\tpi_{\ag}(Y_{i}, \X_{i})}
  \psi(Y_{i},\X_{i},\bbeta) \right\}  = \0,
\end{align*}
where
\begin{align*}
  \tdelta_{\ag,i}
  &= \bI(\delta_{\pl,i}=1 \text{ or } \tdelta_i=1)
   = \delta_{\pl,i}+(1-\delta_{\pl,i})\tdelta_i,\\
  \tpi_{\ag}(Y_i,\X_i)
  &= \pi_{\pl}(Y_i,\X_i)
    +\{1-\pi_{\pl}(Y_i,\X_i)\}\tpi(Y_i,\X_i),
     \quad i=1,\dots,N.
\end{align*}
We next establish the asymptotic
distribution of $\cbeta_{\ag}$.

Using a Taylor expansion for $\tilde\Psi_{\ag}(\cbeta_{\ag})$ at the point
$\bbeta_{0}$, we have 
\begin{align*}
  \tilde\Psi_{\ag}(\bbeta_{0}) 
  - \left( \int_{0}^{1}\left[  \oneN\sumN \left\{
    \frac{\tdelta_{\ag,i}}{\tpi_{\ag}(Y_{i}, \X_{i})}
    \dot{\mu}\left(\X_{i}\tp\{\bbeta_{0} + \lambda(\cbeta_{\ag} - \bbeta_{0})\}\right)
     \X_{i}\X_{i}\tp \right\} \right] \diff \lambda \right) (\cbeta_{\ag}-\bbeta_{0}) = \0.
\end{align*}
After scaling, we obtain
\begin{align}
  \sqrt{N}\Omega_{\ag,N}^{0}D_{N}(\cbeta_{\ag}-\bbeta_{0})
  =& \{\tilde M_{\ag,N}^{-1} (\cbeta_{\ag})\}
     \{\sqrt{N} \Omega_{\ag,N}^{0} D_{N}^{-1} \tilde\Psi_{\ag}(\bbeta_{0})\}
  \nonumber\\
  =& \{\tilde M_{\ag,N}^{-1} (\cbeta_{\ag})\}
     \{\sqrt{N} \Omega_{\ag,N}^{0} D_{N}^{-1} \Psi^{0}_{\ag}(\bbeta_{0})\}
  \nonumber\\
    &+ \{\tilde M_{\ag,N}^{-1}(\cbeta_{\ag})\} [\sqrt{N} \Omega_{\ag,N}^{0} D_{N}^{-1}
      \{\tilde\Psi_{\ag}(\bbeta_{0}) - \Psi^{0}_{\ag}(\bbeta_{0})\}],
      \label{eq:ag-decomp}
\end{align}
where 
\begin{align*}
  & \tilde M_{\ag,N} (\cbeta_{\ag})
    = \int_{0}^{1}\left[
    \oneN\sumN \left\{  
    \frac{\tdelta_{\ag,i}}{\tpi_{\ag}(Y_{i}, \X_{i})}
    \dot{\mu}\left(\X_{i}\tp
    \{\bbeta_{0} + \lambda(\cbeta_{\ag} - \bbeta_{0})\}
    \right)
    K_{\ag,N}^0(\X_{i})
    \right\} \right]\diff \lambda.
\end{align*}
Since $\pi_{\ag}^{0}(Y,\X)$ satisfies
Assumptions~\ref{asmp:samp-prob-1} and~\ref{asmp:samp-prob-2}, we can
apply the result \eqref{eq:sum-eta} in the proof of
Theorem~\ref{thm:sub1} to the term
$\sqrt{N} \Omega_{\ag,N}^{0} D_{N}^{-1} \Psi^{0}_{\ag}(\bbeta_{0})$ in
Equation~\eqref{eq:ag-decomp}, so that
\begin{align}
  \label{eq:sum-eta-ag}
  \sqrt{N} \Omega_{\ag,N}^{0} D_{N}^{-1} \Psi^{0}_{\ag}(\bbeta_{0})
  \xrightarrow{d}
  \Nor(\0, \Lambda_{\ag}^{0}(\bbeta_{0})).
\end{align}
It therefore remains to establish
\begin{align}
  \sqrt{N} \Omega_{\ag,N}^{0} D_{N}^{-1}
  \{\tilde\Psi_{\ag}(\bbeta_{0}) - \Psi^{0}_{\ag}(\bbeta_{0})\}
  &= o_{p}(1), \label{eq:ag-remainder} \\
  \tilde M_{\ag,N}(\cbeta_{\ag})
  & \xrightarrow{p} M_{\ag}^{0}(\bbeta_{0}). \label{eq:M-ag-N}
\end{align}
Equation~\eqref{eq:ag-remainder} is established in
Lemma~\ref{lem:remainder-variance}, and \eqref{eq:M-ag-N} is established in
Lemma~\ref{lem:M-ag}. Then we have
\begin{align*}
  & \sqrt{N}\Omega_{\ag,N}^{0}D_{N}(\cbeta_{\ag}-\bbeta_{0}) 
    = \{M_{\ag}^{0}(\bbeta_{0})\}^{-1}
    \{\sqrt{N} \Omega_{\ag,N}^{0} D_{N}^{-1} \Psi^{0}_{\ag}(\bbeta_{0})\} + o_{p}(1).
\end{align*}
The conclusion of Theorem~\ref{thm:ag} then follows from
\eqref{eq:sum-eta-ag}, the continuous mapping theorem, and Slutsky's
theorem.
\end{proof}

\begin{lemma} \label{lem:ag-W}
Define
\begin{align*}
  W_{i} = \frac{\tdelta_{\ag,i}}{\tpi_{\ag}(Y_{i},\X_{i})} - \frac{\delta^{0}_{\ag,i}}{\pi_{\ag}^{0}(Y_{i},\X_{i})}, \quad i=1,\dots,N.
\end{align*}
Then $W_{i} = W_{i,1} + W_{i,2}$, where 
\begin{align*}
  W_{i,1}
  =& \left[(1-\delta_{\pl,i})\pi_{\pl}(Y_{i},\X_{i})
     - \delta_{\pl,i}\{1-\pi_{\pl}(Y_{i},\X_{i})\}\right]
     \frac{\tpi(Y_{i},\X_{i}) -\pi^{0}(Y_{i},\X_{i})}
     {\tpi_{\ag}(Y_{i},\X_{i})\pi^{0}_{\ag}(Y_{i},\X_{i})}, \\
  W_{i,2}
  =&  (1-\delta_{\pl,i})\left(
    \frac{\tdelta_{i} - \tpi(Y_{i},\X_{i})}{\tpi_{\ag}(Y_{i},\X_{i})}
    - \frac{\delta_{i}^{0} - \pi^{0}(Y_{i},\X_{i})}{\pi^{0}_{\ag}(Y_{i},\X_{i})}
    \right).
\end{align*}
Under Assumptions~\ref{asmp:regularity-1}-\ref{asmp:regularity-3} and
the conditions in Theorem~\ref{thm:ag}, we have the following results:
\begin{align}
   |W_{i,1}| 
  & \le \frac{1}{\alpha^{2}h_{N}}\{\delta_{\pl,i} + \pi_{\pl}(Y_{i},\X_{i})\}
    L_{\pl}(Y_{i},\X_{i}) \|\cbeta_{\pl} - \bbeta_{0}\|, \label{eq:W1}\\
  \Exp\{ W_{i,2} \mid \Fn, \{V_{\pl,i}\}_{i=1}^{N} \}
  & = 0, \label{eq:W2} \\
  \Exp\{ |W_{i,2}| \mid \Fn, \{V_{\pl,i}\}_{i=1}^{N} \}
  &  \le
    \frac{4L_{\pl}(Y_{i},\X_{i})\|\cbeta_{\pl}-\bbeta_{0}\|}{\alpha},
    \label{eq:W2-first} \\
  \Exp\{ W_{i,2}^{2} \mid \Fn, \{V_{\pl,i}\}_{i=1}^{N} \}
  & \le \frac{4}{\alpha h_{N}},
  \label{eq:W2-second-looser}\\
  \Exp\{ W_{i,2}^{2} \mid \Fn, \{V_{\pl,i}\}_{i=1}^{N} \}
  & \le
  \frac{2L_{\pl}(Y_{i},\X_{i})\|\cbeta_{\pl}-\bbeta_{0}\|}{\alpha^{2}h_{N}},
    \quad i=1,\dots,N. \label{eq:W2-second}
\end{align}
\end{lemma}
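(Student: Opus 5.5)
The plan is to verify the decomposition algebraically first and then bound each piece using three facts: both union denominators are bounded below by $\alpha h_N$, Assumption~\ref{asmp:ag}(ii) controls $|\tpi-\pi^0|$, and the second-step uniforms $V_i$ are independent of $(\Fn,\{V_{\pl,i}\})$. Throughout I abbreviate $\pi_{\pl,i}=\pi_{\pl}(Y_i,\X_i)$, $\tpi_i$, $\pi^0_i$, $\tpi_{\ag,i}$ and $\pi^0_{\ag,i}$, and I work conditionally on $\Fn$ and $\{V_{\pl,i}\}$. Under this conditioning $\cbeta_{\pl}$, $\tpi_i$ and $\pi^0_i$ are fixed, while $\tdelta_i$ and $\delta_i^0$ are driven by the same $V_i$.

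\textbf{Decomposition.} Write $\tdelta_{\ag,i}=\delta_{\pl,i}+(1-\delta_{\pl,i})\tpi_i+(1-\delta_{\pl,i})(\tdelta_i-\tpi_i)$, and do the same for the oracle quantities. The last piece, divided by its denominator, gives exactly $W_{i,2}$. The remaining part is
\[
\frac{\delta_{\pl,i}+(1-\delta_{\pl,i})\tpi_i}{\tpi_{\ag,i}}-\frac{\delta_{\pl,i}+(1-\delta_{\pl,i})\pi^0_i}{\pi^0_{\ag,i}}.
\]
Putting this over the common denominator $\tpi_{\ag,i}\pi^0_{\ag,i}$ and using $\tpi_{\ag,i}-\pi^0_{\ag,i}=(1-\pi_{\pl,i})(\tpi_i-\pi^0_i)$, the numerator collapses to $(\tpi_i-\pi^0_i)\bigl[(1-\delta_{\pl,i})\pi_{\pl,i}-\delta_{\pl,i}(1-\pi_{\pl,i})\bigr]$. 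This gives $W_{i,1}$ and is a direct check.

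\textbf{Bound \eqref{eq:W1}.} The bracket in $W_{i,1}$ has absolute value at most $\pi_{\pl,i}+\delta_{\pl,i}$. For the denominators, both $\tpi_{\ag,i}\ge\tpi_i$ and $\pi^0_{\ag,i}\ge \pi^0_i$ hold. By the defensive mixture (or Assumption~\ref{asmp:samp-prob-1}(i) for the oracle), these are at least $\min\{\alpha h_N,1\}=\alpha h_N$ when $\alpha\le1$ and $h_N$ is small. Hence the denominator is at least $\alpha^2h_N^2$. Combined with $|\tpi_i-\pi_i^0|\le h_N L_{\pl}\|\cbeta_{\pl}-\bbeta_0\|$, this yields \eqref{eq:W1}.

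\textbf{Results \eqref{eq:W2}--\eqref{eq:W2-second}.} Given $(\Fn,\{V_{\pl,i}\})$, we have $\Exp\tdelta_i=\tpi_i$ and $\Exp\delta_i^0=\pi^0_i$, which gives \eqref{eq:W2}.

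For the second moments, first use $(a-b)^2\le 2a^2+2b^2$ together with $\Exp(\tdelta_i-\tpi_i)^2\le\tpi_i$. This gives
\[
2\tpi_i/\tpi_{\ag,i}^2+2\pi^0_i/(\pi^0_{\ag,i})^2\le 2/\tpi_{\ag,i}+2/\pi^0_{\ag,i}\le 4/(\alpha h_N),
\]
which is \eqref{eq:W2-second-looser}.

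For the sharper bounds \eqref{eq:W2-first} and \eqref{eq:W2-second}, which must vanish with $\|\cbeta_{\pl}-\bbeta_0\|$, I would split
\[
W_{i,2}=(1-\delta_{\pl,i})\Bigl\{\frac{(\tdelta_i-\delta_i^0)-(\tpi_i-\pi^0_i)}{\tpi_{\ag,i}}+(\delta_i^0-\pi_i^0)\Bigl(\frac{1}{\tpi_{\ag,i}}-\frac{1}{\pi^0_{\ag,i}}\Bigr)\Bigr\}.
\]
The key coupling fact is that, because both indicators use the same $V_i$, $|\tdelta_i-\delta_i^0|=\bI\{V_i$ lies between $\tpi_i$ and $\pi^0_i\}$. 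Therefore $\Exp|\tdelta_i-\delta_i^0|=\Exp(\tdelta_i-\delta_i^0)^2=|\tpi_i-\pi^0_i|$.

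For the first absolute moment, the first term contributes at most $2|\tpi_i-\pi^0_i|/\tpi_{\ag,i}\le 2L_{\pl}\|\cbeta_{\pl}-\bbeta_0\|/\alpha$. For the second term, $\Exp|\delta^0_i-\pi^0_i|\le 2\pi^0_i$ and the reciprocal difference is $|\tpi_i-\pi_i^0|/(\tpi_{\ag,i}\pi^0_{\ag,i})$. Using $\pi^0_i\le\pi^0_{\ag,i}$, this contributes at most $2L_{\pl}\|\cbeta_{\pl}-\bbeta_0\|/\alpha$. Summing gives the constant $4/\alpha$ in \eqref{eq:W2-first}.

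For the second moment, the first term contributes
\[
\Exp\{(\tdelta_i-\delta_i^0)-(\tpi_i-\pi_i^0)\}^2\le|\tpi_i-\pi^0_i|,
\]
which is a Bernoulli-type variance, divided by $\tpi_{\ag,i}^2\ge\alpha^2h_N^2$. This gives $L_{\pl}\|\cdot\|/(\alpha^2h_N)$. The cross and second terms are handled the same way. Here I use $|\tpi_i-\pi^0_i|\le 1$ so that the squared difference is dominated by the first power.

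\textbf{Main obstacle.} I expect the main difficulty to be bookkeeping the exact constants in \eqref{eq:W2-second}, in particular the cross term and the factor $2$. I would first try to bound the conditional variance directly as $\Var\{\tdelta_i/\tpi_{\ag,i}-\delta_i^0/\pi^0_{\ag,i}\}$, which avoids the splitting. If the stated constants do not fall out exactly, I would settle for a bound of the same order, since only the order matters downstream in Lemma~\ref{lem:remainder-variance}.
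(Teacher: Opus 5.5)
Your decomposition and the bounds \eqref{eq:W1}, \eqref{eq:W2}, \eqref{eq:W2-first} and \eqref{eq:W2-second-looser} follow the paper's proof essentially line by line. This includes the same split for the first absolute moment and the same coupling fact $\Exp|\tdelta_i-\delta_i^0|=|\tpi_i-\pi_i^0|$.

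The gap is in \eqref{eq:W2-second}, and it concerns the order in $h_N$, not the constants. In your split, the second term is $B=(1-\delta_{\pl,i})(\delta_i^0-\pi_i^0)(1/\tpi_{\ag,i}-1/\pi^0_{\ag,i})$, and it satisfies
\[
\Exp(B^2\mid\Fn,\{V_{\pl,i}\}_{i=1}^{N})\le\frac{\pi_i^0(\tpi_i-\pi_i^0)^2}{(\tpi_{\ag,i}\pi^0_{\ag,i})^2}.
\]
Suppose you cancel $\pi_i^0$ against $\pi^0_{\ag,i}$ and then use $|\tpi_i-\pi_i^0|\le1$, as you propose. You are left with $|\tpi_i-\pi_i^0|/(\tpi_{\ag,i}^2\pi^0_{\ag,i})\le L_{\pl}\|\cbeta_{\pl}-\bbeta_0\|/(\alpha^3h_N^2)$. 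That is a factor $h_N^{-1}$ too large, and the cross term inherits a similarly inflated order. This is not harmless downstream. In Lemma~\ref{lem:remainder-variance} the conditional variance would become $O_p\{1/(h_N^2\sqrt{n_{\pl}})\}$, whereas Theorem~\ref{thm:ag} only assumes $h_N\sqrt{n_{\pl}}\to\infty$. So your fallback of ``a bound of the same order'' is not delivered by the route you wrote down.

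The missing idea is to bound one factor of $|\tpi_i-\pi_i^0|$ by $\max(\tpi_i,\pi_i^0)$ instead of by $1$. That maximum is at most the matching union denominator, so it cancels one $h_N^{-1}$. The paper does this inside the direct-variance route you mention only as an option.

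Assume without loss of generality that $\tpi_i>\pi_i^0$. Because $\tdelta_i$ and $\delta_i^0$ share the same $V_i$,
\[
\Exp\Bigl\{\Bigl(\frac{\tdelta_i}{\tpi_{\ag,i}}-\frac{\delta_i^0}{\pi^0_{\ag,i}}\Bigr)^2\Bigm|\Fn,\{V_{\pl,i}\}_{i=1}^{N}\Bigr\}
=\pi_i^0\Bigl(\frac{1}{\tpi_{\ag,i}}-\frac{1}{\pi^0_{\ag,i}}\Bigr)^2+\frac{\tpi_i-\pi_i^0}{\tpi_{\ag,i}^2}.
\]
Now use $\pi_i^0\le\pi^0_{\ag,i}$ and $|\tpi_i-\pi_i^0|\le\tpi_i\le\tpi_{\ag,i}$. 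These reduce the first piece to $|\tpi_i-\pi_i^0|/(\tpi_{\ag,i}\pi^0_{\ag,i})$. Each piece is then at most $L_{\pl}\|\cbeta_{\pl}-\bbeta_0\|/(\alpha^2h_N)$, which gives the constant $2$.

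The same substitution also repairs your split route. Using $(a+b)^2\le2a^2+2b^2$ there gives constant $4$ with the correct order, which suffices downstream.
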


\begin{proof}
We can decompose $W_{i} = W_{i,1}+W_{i,2}$ by the following
derivation:
\begin{align*}
  W_{i}
  =&
     \delta_{\pl,i}
    \left(\frac{1}{\tpi_{\ag}(Y_{i},\X_{i})} -
    \frac{1}{\pi^{0}_{\ag}(Y_{i},\X_{i})}  \right) 
    + (1-\delta_{\pl,i}) \left(
    \frac{\tdelta_{i}}{\tpi_{\ag}(Y_{i},\X_{i})} -
    \frac{\delta^{0}_{i}}{\pi^{0}_{\ag}(Y_{i},\X_{i})}  \right) \\
  =&
    - \delta_{\pl,i}\{1-\pi_{\pl}(Y_{i},\X_{i})\}
     \frac{\tpi(Y_{i},\X_{i}) -\pi^{0}(Y_{i},\X_{i})}
     {\tpi_{\ag}(Y_{i},\X_{i})\pi^{0}_{\ag}(Y_{i},\X_{i})}  \\
   &+
     (1-\delta_{\pl,i})  \left(
    \frac{\tpi(Y_{i},\X_{i})}{\tpi_{\ag}(Y_{i},\X_{i})}
     - \frac{\pi^{0}(Y_{i},\X_{i})}{\pi_{\ag}^{0}(Y_{i},\X_{i})}
     +
    \frac{\tdelta_{i} - \tpi(Y_{i},\X_{i})}{\tpi_{\ag}(Y_{i},\X_{i})}
    - \frac{\delta^{0}_{i} - \pi^{0}(Y_{i},\X_{i})}{\pi^{0}_{\ag}(Y_{i},\X_{i})}
    \right)  \\
  =&
     - \delta_{\pl,i}\{1-\pi_{\pl}(Y_{i},\X_{i})\}
     \frac{\tpi(Y_{i},\X_{i}) -\pi^{0}(Y_{i},\X_{i})}
     {\tpi_{\ag}(Y_{i},\X_{i})\pi^{0}_{\ag}(Y_{i},\X_{i})}
   + (1-\delta_{\pl,i}) \pi_{\pl}(Y_{i},\X_{i})
     \frac{\tpi(Y_{i},\X_{i}) -\pi^{0}(Y_{i},\X_{i})}
    {\tpi_{\ag}(Y_{i},\X_{i})\pi^{0}_{\ag}(Y_{i},\X_{i})} \\
   &
     + (1-\delta_{\pl,i})\left(
    \frac{\tdelta_{i} - \tpi(Y_{i},\X_{i})}{\tpi_{\ag}(Y_{i},\X_{i})}
    - \frac{\delta_{i}^{0} - \pi^{0}(Y_{i},\X_{i})}{\pi^{0}_{\ag}(Y_{i},\X_{i})}
     \right)\\
  =&  W_{i,1} + W_{i,2}.
\end{align*}
For $|W_{i,1}|$,
\begin{align*}
  & |W_{i,1}| \\ 
  & \le \frac{1}{\alpha^{2}h_{N}^{2}} \left[
  \pi_{\pl}(Y_{i},\X_{i}) (1 - \delta_{\pl,i})
    |\tpi(Y_{i},\X_{i}) -\pi^{0}(Y_{i},\X_{i})|
    +
    \delta_{\pl,i} \{1 - \pi_{\pl}(Y_{i},\X_{i})\}
    |\tpi(Y_{i},\X_{i}) -\pi^{0}(Y_{i},\X_{i})|
    \right] \nonumber \\
  & \le \frac{1}{\alpha^{2}h_{N}^{2}}
    \{\delta_{\pl,i} + \pi_{\pl}(Y_{i},\X_{i})\}
    |\tpi(Y_{i},\X_{i}) -\pi^{0}(Y_{i},\X_{i})| \\
  & \le \frac{1}{\alpha^{2}h_{N}}\{\delta_{\pl,i} + \pi_{\pl}(Y_{i},\X_{i})\}
    L_{\pl}(Y_{i},\X_{i}) \|\cbeta_{\pl} - \bbeta_{0}\|,
\end{align*}
where the last inequality is from Assumption~\ref{asmp:ag} (ii).

For \eqref{eq:W2}, we have
$\Exp\{ W_{i,2} \mid \Fn, \{V_{\pl,i}\}_{i=1}^{N} \} = 0$ because
$\Exp\{\tdelta_{i} - \tpi(Y_{i},\X_{i}) \mid \Fn, \{V_{\pl,i}\}_{i=1}^{N} \} = 0$
and
$\Exp\{\delta_{i}^{0} - \pi^{0}(Y_{i},\X_{i})\mid \Fn, \{V_{\pl,i}\}_{i=1}^{N} \} = 0$.
For $\Exp\{ |W_{i,2}| \mid \Fn, \{V_{\pl,i}\}_{i=1}^{N} \}$, note
that
$\Exp\{ |\tdelta_{i} - \delta^{0}_{i}| \mid \Fn, \{V_{\pl,i}\}_{i=1}^{N} \} = |\tpi(Y_{i},\X_{i}) -\pi^{0}(Y_{i},\X_{i})|$
and $\pi^{0}(Y_{i},\X_{i}) < \pi^{0}_{\ag}(Y_{i},\X_{i})$; hence,
\begin{align*}
  & \Exp\{ |W_{i,2}| \mid \Fn, \{V_{\pl,i}\}_{i=1}^{N} \} \\
  \le &
  \Exp\left\{   \left|
    \frac{\tdelta_{i} - \tpi(Y_{i},\X_{i})}{\tpi_{\ag}(Y_{i},\X_{i})}
    -
    \frac{\delta_{i}^{0} - \pi^{0}(Y_{i},\X_{i})}{\pi^{0}_{\ag}(Y_{i},\X_{i})}
    \right| \middle| \Fn, \{V_{\pl,i}\}_{i=1}^{N} \right\} \\
  \le &
    \Exp\left\{
    \frac{|\tdelta_{i} - \delta_{i}^{0}| + |\pi^{0}(Y_{i},\X_{i}) - \tpi(Y_{i},\X_{i})|}
    {\tpi_{\ag}(Y_{i},\X_{i})}
    \middle| \Fn, \{V_{\pl,i}\}_{i=1}^{N} \right\} \\
  & +
    \Exp\left\{ \left|\delta_{i}^{0} - \pi^{0}(Y_{i},\X_{i})\right|
    \left|
    \frac{\pi^{0}_{\ag}(Y_{i},\X_{i}) - \tpi_{\ag}(Y_{i},\X_{i})}
    {\tpi_{\ag}(Y_{i},\X_{i})\pi^{0}_{\ag}(Y_{i},\X_{i})}
    \right|
    \middle| \Fn, \{V_{\pl,i}\}_{i=1}^{N} \right\} \\
  \le & \frac{\Exp\left\{ |\tdelta_{i} - \delta_{i}^{0}| \middle| \Fn, \{V_{\pl,i}\}_{i=1}^{N} \right\}
      + |\pi^{0}(Y_{i},\X_{i}) - \tpi(Y_{i},\X_{i})|}
      {\tpi_{\ag}(Y_{i},\X_{i})}  \\
  & +
    \Exp\left\{ \left|\delta_{i}^{0} - \pi^{0}(Y_{i},\X_{i})\right|
    \middle| \Fn, \{V_{\pl,i}\}_{i=1}^{N} \right\}
  \frac{ \left|
    \pi^{0}_{\ag}(Y_{i},\X_{i}) - \tpi_{\ag}(Y_{i},\X_{i}) \right| }
    {\tpi_{\ag}(Y_{i},\X_{i})\pi^{0}_{\ag}(Y_{i},\X_{i})}
     \\
  \le &
      \frac{2|\pi^{0}(Y_{i},\X_{i}) - \tpi(Y_{i},\X_{i})|}
        {\tpi_{\ag}(Y_{i},\X_{i})}
        + \frac{2 \pi^{0}(Y_{i},\X_{i})  |\pi^{0}(Y_{i},\X_{i}) - \tpi(Y_{i},\X_{i})|}
        {\tpi_{\ag}(Y_{i},\X_{i})\pi^{0}_{\ag}(Y_{i},\X_{i})} \\
  \le & \frac{4}{\alpha}  L_{\pl}(Y_{i},\X_{i}) \|\cbeta_{\pl} - \bbeta_{0}\|.
\end{align*}

For $\Exp\{ W_{i,2}^{2} \mid \Fn, \{V_{\pl,i}\}_{i=1}^{N} \}$, the
first bound \eqref{eq:W2-second-looser} follows from
\begin{align*}
  & \Exp\{ W_{i,2}^{2} \mid \Fn, \{V_{\pl,i}\}_{i=1}^{N} \} \\
  & \le
    \frac{2 \Exp \left\{ (
    \tdelta_{i} - \tpi(Y_{i},\X_{i}))^{2}
    \middle| \Fn, \{V_{\pl,i}\}_{i=1}^{N}
    \right\} }
    {\{\tpi_{\ag}(Y_{i},\X_{i})\}^{2}}
  + \frac{2\Exp \left\{ \left(
     \delta_{i}^{0} - \pi^{0}(Y_{i},\X_{i})  \right)^{2}
    \middle| \Fn, \{V_{\pl,i}\}_{i=1}^{N}
    \right\}}
    {\{\pi^{0}_{\ag}(Y_{i},\X_{i})\}^{2}} \\
  & \le
     \frac{2 \tpi(Y_{i},\X_{i}) }
    {\{\tpi_{\ag}(Y_{i},\X_{i})\}^{2}}
  + \frac{2 \pi^{0}(Y_{i},\X_{i})}
    {\{\pi^{0}_{\ag}(Y_{i},\X_{i})\}^{2}} 
  \le \frac{4}{\alpha h_{N}}.
\end{align*}
To derive \eqref{eq:W2-second}, without
loss of generality, suppose
$\tpi(Y_{i},\X_{i}) > \pi^{0}(Y_{i},\X_{i})$. Then
\begin{align*}
  &\Exp\{ W_{i,2}^{2} \mid \Fn, \{V_{\pl,i}\}_{i=1}^{N} \} \\
  &\le \Exp \left\{ \left(
    \frac{\tdelta_{i} - \tpi(Y_{i},\X_{i})}{\tpi_{\ag}(Y_{i},\X_{i})}
    - \frac{\delta_{i}^{0} - \pi^{0}(Y_{i},\X_{i})}{\pi^{0}_{\ag}(Y_{i},\X_{i})}
  \right)^{2}
    \middle| \Fn, \{V_{\pl,i}\}_{i=1}^{N}
    \right\} \\
  & = \Var \left\{ \left(
    \frac{\tdelta_{i}}{\tpi_{\ag}(Y_{i},\X_{i})}
    - \frac{\delta_{i}^{0}}{\pi^{0}_{\ag}(Y_{i},\X_{i})}  \right)
    \middle| \Fn, \{V_{\pl,i}\}_{i=1}^{N}
    \right\} \\
  &\le
    \Exp \left\{ \left(
    \frac{\tdelta_{i}}{\tpi_{\ag}(Y_{i},\X_{i})}
    - \frac{\delta_{i}^{0}}{\pi^{0}_{\ag}(Y_{i},\X_{i})} \right)^{2}
    \middle| \Fn, \{V_{\pl,i}\}_{i=1}^{N}  \right\} \\
  & =
    \pi^{0}(Y_{i},\X_{i})
    \left( \frac{1}{\tpi_{\ag}(Y_{i},\X_{i})}
    - \frac{1}{\pi^{0}_{\ag}(Y_{i},\X_{i})} \right)^{2}
    + \frac{|\tpi(Y_{i},\X_{i}) - \pi^{0}(Y_{i},\X_{i})|}
    {\{\tpi_{\ag}(Y_{i},\X_{i})\}^{2}} \\
  & =
    \frac{\pi^{0}(Y_{i},\X_{i})\{1 - \pi_{\pl}(Y_{i},\X_{i})\}^{2}
    \{\tpi(Y_{i},\X_{i}) - \pi^{0}(Y_{i},\X_{i})\}^{2} }
    {\{\tpi_{\ag}(Y_{i},\X_{i})\pi^{0}_{\ag}(Y_{i},\X_{i})\}^{2}}
    + \frac{|\tpi(Y_{i},\X_{i}) - \pi^{0}(Y_{i},\X_{i})|}
    {\{\tpi_{\ag}(Y_{i},\X_{i})\}^{2}} \\
  & \le
    \frac{\pi^{0}(Y_{i},\X_{i})
    |\tpi(Y_{i},\X_{i}) - \pi^{0}(Y_{i},\X_{i})|^{2} }
    {\{\tpi_{\ag}(Y_{i},\X_{i})\pi^{0}_{\ag}(Y_{i},\X_{i})\}^{2}}
    + \frac{|\tpi(Y_{i},\X_{i}) - \pi^{0}(Y_{i},\X_{i})|}
    {\{\tpi_{\ag}(Y_{i},\X_{i})\}^{2}} \\
  & \le
    \frac{|\tpi(Y_{i},\X_{i}) - \pi^{0}(Y_{i},\X_{i})| }
    {\tpi_{\ag}(Y_{i},\X_{i}) \pi^{0}_{\ag}(Y_{i},\X_{i})}
    + \frac{|\tpi(Y_{i},\X_{i}) - \pi^{0}(Y_{i},\X_{i})|}
    {\{\tpi_{\ag}(Y_{i},\X_{i})\}^{2}} \\
  & \le
    \frac{2L_{\pl}(Y_{i},\X_{i})\|\cbeta_{\pl}-\bbeta_{0}\|}{\alpha^{2}h_{N}}.
\end{align*}
\end{proof}

\begin{lemma} \label{lem:remainder-variance}
Let
$Q_{i} = \Omega_{\ag,N}^{0} D_{N}^{-1}\psi(Y_{i},\X_{i},\bbeta_{0})$, $i=1,\dots,N$.
By Lemma~\ref{lem:ag-W}, the left side of \eqref{eq:ag-remainder} can
be written as $N^{-1/2}\sumN (W_{i,1} Q_{i} + W_{i,2} Q_{i})$.  Then,
under Assumptions~\ref{asmp:regularity-1}-\ref{asmp:regularity-3} and
the conditions in Theorem~\ref{thm:ag}, we have
$N^{-1/2}\sumN (W_{i,1} Q_{i})=o_{p}(1)$ and
$N^{-1/2}\sumN (W_{i,2} Q_{i})=o_{p}(1)$.
\end{lemma}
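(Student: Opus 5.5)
Both terms will be handled by bounding first (or second) moments conditional on $\Fn$ and $\{V_{\pl,i}\}_{i=1}^N$ and then unconditioning. The ingredients are the bounds \eqref{eq:W1}--\eqref{eq:W2-second} of Lemma~\ref{lem:ag-W}, the rate $\|\cbeta_{\pl}-\bbeta_0\|=O_p(n_{\pl}^{-1/2})$ from Assumption~\ref{asmp:ag}(i), and the envelope conditions \eqref{eq:ag-GN-ineq-3} and \eqref{eq:ag-TN-ineq-2}. Throughout I use the entrywise bound $\|\Omega^0_{\ag,N}\|\le 1$ (all diagonal entries are probabilities raised to the power $1/2$), together with the lower bounds $(\Omega^0_{\ag,N})_{kk}^2\ge h^0_{\ag,N}\gtrsim h_N$. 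These give $\|Q_i\|\le \|D_N^{-1}\psi(Y_i,\X_i,\bbeta_0)\|$. Where a $D_N^{-2}$ envelope is more convenient, I use $\|Q_i\|\le \|D_N^{-2}\psi(Y_i,\X_i,\bbeta_0)\|$, which holds because $D_N^{-1}\ge \I$ entrywise.

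\emph{First term.} By \eqref{eq:W1},
\[
\Bigl\|N^{-1/2}\sumN W_{i,1}Q_i\Bigr\|\le \frac{\sqrt N\,\|\cbeta_{\pl}-\bbeta_0\|}{\alpha^2h_N}\,\oneN\sumN\{\delta_{\pl,i}+\pi_{\pl}(Y_i,\X_i)\}L_{\pl}(Y_i,\X_i)\|Q_i\|.
\]
Taking the expectation of the average, and noting that $\Exp(\delta_{\pl}\mid Y,\X)=\pi_{\pl}$, gives $2\Exp\{\pi_{\pl}L_{\pl}\|D_N^{-2}\psi\|\}=O(h_{\pl,N})$ by \eqref{eq:ag-GN-ineq-3}. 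Markov's inequality then makes the average $O_p(h_{\pl,N})$. The whole bound is therefore $O_p(\sqrt N n_{\pl}^{-1/2}h_{\pl,N}/h_N)=O_p(\sqrt{h_{\pl,N}}/h_N)=o_p(1)$ by the assumed rate condition. The $\sqrt N$ factor is not absorbed by $\Omega^0_{\ag,N}$ here, so I may need the sharper scaling $\|\Omega^0_{\ag,N}\|\lesssim$ something smaller. If the crude bound fails, the fallback is to keep $\Omega^0_{\ag,N}$ inside $Q_i$ and use $h_N\lesssim(\Omega^0_{\ag,N})^2_{kk}$ to cancel one power of $h_N$.

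\emph{Second term.} Conditional on $\Fn$ and $\{V_{\pl,i}\}_{i=1}^N$, the $W_{i,2}$ are independent across $i$ (the $V_i$ are independent) and have mean zero by \eqref{eq:W2}, while $Q_i$ is fixed. Hence
\[
\Exp\Bigl\{\Bigl\|N^{-1/2}\sumN W_{i,2}Q_i\Bigr\|^2\Bigm|\Fn,\{V_{\pl,i}\}\Bigr\}=\oneN\sumN\Exp(W_{i,2}^2\mid\cdot)\|Q_i\|^2 .
\]
Applying \eqref{eq:W2-second} bounds this by $\{2\|\cbeta_{\pl}-\bbeta_0\|/(\alpha^2h_N)\}\,N^{-1}\sumN L_{\pl}(Y_i,\X_i)\|Q_i\|^2$. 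Here I keep $\Omega^0_{\ag,N}$ in $Q_i$, so that $\|Q_i\|^2\le \|D_N^{-1}\psi\|^2$ up to a constant. By \eqref{eq:ag-TN-ineq-2} and Markov's inequality the average is $O_p(1)$, so the conditional second moment is $O_p\{(h_N\sqrt{n_{\pl}})^{-1}\}=o_p(1)$ by the condition $h_N\sqrt{n_{\pl}}\to\infty$. A conditional Chebyshev argument, made unconditional by dominated convergence applied to bounded conditional probabilities, then yields $o_p(1)$.

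\emph{Main obstacle.} I expect the main obstacle to be the first term. There the bias-type error $W_{i,1}$ carries no mean-zero structure, so one must exploit the fact that the pilot contributes only a fraction $h_{\pl,N}$ of observations, via the factor $\delta_{\pl,i}+\pi_{\pl}$ and the $h_{\pl,N}^{-1}$-scaled envelope condition. One must also track the powers of $\Omega^0_{\ag,N}$ carefully so that exactly the rate $\sqrt{h_{\pl,N}}/h_N\to0$ emerges. The second term is a routine conditional variance computation once \eqref{eq:W2-second} is available.
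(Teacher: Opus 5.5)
Your proposal is correct and follows the paper's proof essentially step for step. The $W_{i,1}$ term goes through \eqref{eq:W1}, $\Exp(\delta_{\pl,i}\mid\Fn)=\pi_{\pl,i}$, \eqref{eq:ag-GN-ineq-3} and Markov's inequality to give $O_p(\sqrt{h_{\pl,N}}/h_N)$. The $W_{i,2}$ term goes through conditional independence, \eqref{eq:W2}, \eqref{eq:W2-second} and \eqref{eq:ag-TN-ineq-2} to give a conditional variance of order $O_p\{(h_N\sqrt{n_{\pl}})^{-1}\}$. Your conditional-Chebyshev-plus-dominated-convergence step is a cleaner way to reach $o_p(1)$ than the paper's direct assertion about the unconditional variance.

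Your hedge about the $\sqrt N$ factor is unnecessary, because $\sqrt N\,h_{\pl,N}/\sqrt{n_{\pl}}=\sqrt{h_{\pl,N}}$ exactly. The crude bound $\|Q_i\|\le\|D_N^{-2}\psi(Y_i,\X_i,\bbeta_0)\|$ therefore already closes the argument, and no lower bound on $\Omega^0_{\ag,N}$ is needed. In any case, your claimed bound $(\Omega^0_{\ag,N})_{kk}^2\ge h^0_{\ag,N}$ need not hold on the rare coordinates.
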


\begin{proof}
By
\eqref{eq:W1}, for $i=1,\dots,N$, we obtain
\begin{align*}
  \left\| \frac{1}{\sqrt{N}} \sumN (W_{i,1} Q_{i}) \right\|
  & \le  \frac{1}{\sqrt{N}} \sumN (|W_{i,1}| \| Q_{i} \|) \\
  & \le
    \frac{\|\cbeta_{\pl}-\bbeta_{0}\|}{\alpha^{2} h_{N}}
  \frac{1}{\sqrt{N}} \sumN \left[  
    \{\delta_{\pl,i}+\pi_{\pl}(Y_{i},\X_{i})\}
    L_{\pl}(Y_{i},\X_{i}) \|Q_{i}\| \right]\\
  & \le
    \frac{\|\cbeta_{\pl}-\bbeta_{0}\|}{\alpha^{2} h_{N}}
  \frac{1}{\sqrt{N}} \sumN \left[ 
    \{\delta_{\pl,i}+\pi_{\pl}(Y_{i},\X_{i})\} L_{\pl}(Y_{i},\X_{i})
    \|D_{N}^{-1}\psi(Y_{i},\X_{i},\bbeta_{0})\|  \right],
\end{align*}
where we used
$ \|Q_{i}\| \le \|D_{N}^{-1}\psi(Y_{i},\X_{i},\bbeta_{0})\|$ because
the diagonal entries of $\Omega_{\ag,N}^0$ are at most one.  By
\eqref{eq:ag-GN-ineq-3}, we have
\begin{align*}
  \Exp\left[ \pi_{\pl}(Y,\X) L_{\pl}(Y,\X) \|D_{N}^{-1}\psi(Y,\X,\bbeta_{0})\| \right]
  \le \Exp\left[ \pi_{\pl}(Y,\X) L_{\pl}(Y,\X) \|D_{N}^{-2}\psi(Y,\X,\bbeta_{0})\| \right]
  = O(h_{\pl,N}).
\end{align*}
By Assumption~\ref{asmp:ag} (i),
$\|\cbeta_{\pl}-\bbeta_{0}\| = O_{p}(n_{\pl}^{-1/2})$. Therefore, by
Markov's inequality and $\sqrt{h_{\pl,N}} / h_{N} \to 0$, we have 
\begin{align*}
  \left\| \frac{1}{\sqrt{N}}\sumN (W_{i,1} Q_{i})  \right\|
  \le O_{p}\left(\frac{\sqrt{N}h_{\pl,N}}{\sqrt{n_{\pl}}h_{N}}  \right)
  = O_{p}\left(\frac{\sqrt{h_{\pl,N}}}{h_{N}}  \right)
  \to 0.
\end{align*}

For $N^{-1/2}\sumN (W_{i,2} Q_{i})$, note that, conditional on
$(\Fn, \{V_{\pl,i}\}_{i=1}^{N})$, $\{W_{i,2} Q_{i}\}_{i=1}^{N}$ are
independent. Therefore, for $i=1,\dots,N$,
\begin{align*}
  & \Exp(W_{i,2} Q_{i} \mid \Fn, \{V_{\pl,i}\}_{i=1}^{N}) \\
  & = \Exp\left[ \left\{(1-\delta_{\pl,i})\left(
    \frac{\tdelta_{i} - \tpi(Y_{i},\X_{i})}{\tpi_{\ag}(Y_{i},\X_{i})}
    - \frac{\delta_{i}^{0} - \pi^{0}(Y_{i},\X_{i})}{\pi^{0}_{\ag}(Y_{i},\X_{i})}
  \right) Q_{i} \right\}  \middle| \Fn, \{V_{\pl,i}\}_{i=1}^{N}\right] \\
  & =  (1-\delta_{\pl,i})\left(
    \frac{\tpi(Y_{i},\X_{i}) - \tpi(Y_{i},\X_{i})}{\tpi_{\ag}(Y_{i},\X_{i})}
    - \frac{\pi^{0}(Y_{i},\X_{i}) - \pi^{0}(Y_{i},\X_{i})}{\pi^{0}_{\ag}(Y_{i},\X_{i})}
  \right) Q_{i}   = \0,
\end{align*}
Thus, $\Exp\{N^{-1/2}\sumN (W_{i,2} Q_{i})\} = \0$. For the variance,
first consider its conditional variance. By \eqref{eq:W2-second},
\begin{align*}
  & \Var\left\{
    \frac{1}{\sqrt{N}}\sumN (W_{i,2} Q_{i}) \middle|
    \Fn, \{V_{\pl,i}\}_{i=1}^{N} \right\} \\
  &=
    \oneN \sumN
    \Exp\left\{W_{i,2}^{2} \|Q_{i}\|^{2} \mid \Fn, \{V_{\pl,i}\}_{i=1}^{N} \right\}  
    \le
    \frac{2\|\cbeta_{\pl}-\bbeta_{0}\|}{\alpha^{2}h_{N}}
    \oneN \sumN  \{L_{\pl}(Y_{i},\X_{i})\|Q_{i}\|^{2}\}.
\end{align*}
Moreover, by Assumption~\ref{asmp:ag} (ii) and \eqref{eq:ag-TN-ineq-2},
for $i=1,\dots,N$,
\begin{align*}
  \sup_{N}\Exp\{L_{\pl}(Y_{i},\X_{i}) \|Q_{i}\|^{2}\}
  \le \sup_{N}\Exp\{L_{\pl}(Y_{i},\X_{i})
  \|D_{N}^{-1}\psi(Y_{i},\X_{i},\bbeta_{0})\|^{2}\}
  < \infty,
\end{align*}
we
have
\begin{align*}
  \Var\left\{N^{-1/2}\sumN (W_{i,2} Q_{i}) \middle| \Fn, \{V_{\pl,i}\}_{i=1}^{N} \right\}
  = O_{p}\left( \frac{1}{h_{N}\sqrt{n_{\pl}}} \right) \to 0.
\end{align*}
Thus, $\Var\{N^{-1/2}\sumN (W_{i,2} Q_{i})\} \to 0$. This completes
the proof of Lemma~\ref{lem:remainder-variance}.
\end{proof}

\begin{lemma} \label{lem:ag-feasible-consistency-step1}
Let $J_{i}(\bbeta) = D_{N}^{-2}\psi(Y_{i},\X_{i},\bbeta)$, $i=1,\dots,N$. Under
Assumptions~\ref{asmp:regularity-1}-\ref{asmp:regularity-3} and the
conditions in Theorem~\ref{thm:ag},
  \begin{align*}
  \sup_{\bbeta \in\Theta}
    \|D_{N}^{-2}\{\tilde\Psi_{\ag}(\bbeta) - \Psi_{\ag}^{0}(\bbeta) \} \|
    = \sup_{\bbeta \in\Theta} \left\|\oneN \sumN W_{i} J_{i}(\bbeta) \right\|
    \xrightarrow{p} 0.
  \end{align*}
\end{lemma}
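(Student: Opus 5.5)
The identity is immediate: $\tilde\Psi_{\ag}(\bbeta)-\Psi^0_{\ag}(\bbeta)=N^{-1}\sumN W_i\psi(Y_i,\X_i,\bbeta)$, and $D_N^{-2}$ passes inside the sum. For the convergence, use the decomposition $W_i=W_{i,1}+W_{i,2}$ from Lemma~\ref{lem:ag-W} and treat the two pieces separately. In each piece, reduce the supremum over $\Theta$ to the fixed point $\bbeta_0$ plus a Lipschitz remainder. Note that $\psi(Y,\X,\bbeta)-\psi(Y,\X,\bbeta_0)=-\{\mu(\X\tp\bbeta)-\mu(\X\tp\bbeta_0)\}\X$. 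By Assumption~\ref{asmp:regularity-2}(iii), this gives $\|J_i(\bbeta)-J_i(\bbeta_0)\|\le \mathrm{diam}(\Theta)L(\X_i)\|D_N^{-2}\X_i\|$ uniformly in $\bbeta$. Hence
\begin{align*}
\sup_{\bbeta\in\Theta}\Big\|\oneN\sumN W_iJ_i(\bbeta)\Big\|
\le \Big\|\oneN\sumN W_iJ_i(\bbeta_0)\Big\| + C\,\oneN\sumN |W_i|\,L(\X_i)\|D_N^{-2}\X_i\|,
\end{align*}
and no chaining or finite cover is needed.

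\textbf{The $W_{i,1}$ part.} Apply \eqref{eq:W1}. The $\bbeta_0$ term is bounded by $(\alpha^2h_N)^{-1}\|\cbeta_{\pl}-\bbeta_0\|\,N^{-1}\sumN\{\delta_{\pl,i}+\pi_{\pl,i}\}L_{\pl,i}\|D_N^{-2}\psi_i(\bbeta_0)\|$. Conditioning on $\Fn$ first, this average has expectation $2\Exp\{\pi_{\pl}L_{\pl}\|D_N^{-2}\psi(\bbeta_0)\|\}=O(h_{\pl,N})$ by \eqref{eq:ag-GN-ineq-3}. By Markov's inequality the term is $O_p(h_{\pl,N}/(h_N\sqrt{n_{\pl}}))$, which is $o_p(1)$ since $h_{\pl,N}\le 1$ and $h_N\sqrt{n_{\pl}}\to\infty$. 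For the Lipschitz term we need a bound on $\Exp\{\pi_{\pl}L_{\pl}L\|D_N^{-2}\X\|\}$.
\begin{itemize}
\item On the rare coordinates, $\|D_N^{-2}\X\|\le \|D_N^{-1}\X\|\max_j\rho_{j,N}^{-1/2}$, which costs extra inverse prevalence factors.
\item I would first try to dominate this through the envelope $G_N$. The diagonal of $K^0_{\ag,N}(\X)$ is $\|D_N^{-1}\X\|^2$-type, so $\|K^0_{\ag,N}\|_F\ge$ the rare diagonal entries $Z_j/\rho_{j,N}$.
\item If that domination does not match, I would fall back on the cruder bound $|W_{i,1}|\le 2/(\alpha h_N)$ together with $\pi_{\pl}$-weighting and Cauchy--Schwarz with \eqref{eq:ag-TN-ineq-1}.
\end{itemize}
I expect this domination step, matching $L\|D_N^{-2}\X\|$ to the available envelopes $G_N,T_N$, to be the main obstacle. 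Some care, or a mild strengthening read off from the definition of $G_N$, may be needed.

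\textbf{The $W_{i,2}$ part.} Condition on $(\Fn,\{V_{\pl,i}\})$, under which the $W_{i,2}$ are independent and mean zero by \eqref{eq:W2}. For the $\bbeta_0$ term, compute the conditional second moment using \eqref{eq:W2-second}:
\begin{align*}
\Exp\Big\{\Big\|\oneN\sumN W_{i,2}J_i(\bbeta_0)\Big\|^2\,\Big|\,\cdot\Big\}\le \frac{2\|\cbeta_{\pl}-\bbeta_0\|}{\alpha^2 h_N N}\oneN\sumN L_{\pl,i}\|D_N^{-2}\psi_i(\bbeta_0)\|^2.
\end{align*}
Bound $\|D_N^{-2}\psi\|^2\le \max_j\rho_{j,N}^{-1}\|D_N^{-1}\psi\|^2$ and use \eqref{eq:ag-TN-ineq-2}. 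The result is $O_p\{(h_N\sqrt{n_{\pl}}\,N\min_j\rho_{j,N})^{-1}\}\to0$. For the Lipschitz term, use \eqref{eq:W2-first}: $\Exp(|W_{i,2}|\mid\cdot)\le 4\alpha^{-1}L_{\pl,i}\|\cbeta_{\pl}-\bbeta_0\|$, so its conditional mean is $O_p(n_{\pl}^{-1/2})$ times $N^{-1}\sumN L_{\pl,i}L_i\|D_N^{-2}\X_i\|$. I would control this average via \eqref{eq:ag-TN-ineq-1}, after comparing $\|D_N^{-2}\X\|$ with $\|K^0_{\ag,N}(\X)\|_F$ in the same way as above.

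Combining the four pieces by Markov's and Chebyshev's inequalities and the conditions $h_N\sqrt{n_{\pl}}\to\infty$ and $\sqrt{h_{\pl,N}}/h_N\to0$ gives the claim.
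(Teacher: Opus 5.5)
Your reduction to the fixed point $\bbeta_0$ plus a Lipschitz remainder matches the paper's proof, including the rates. So do the split $W_i=W_{i,1}+W_{i,2}$ and your treatment of the two $\bbeta_0$ terms: via \eqref{eq:W1} and \eqref{eq:ag-GN-ineq-3} for the first, and via \eqref{eq:W2-second}, $\|D_N^{-2}\psi\|^2\le\rho_{N,\min}^{-1}\|D_N^{-1}\psi\|^2$ and \eqref{eq:ag-TN-ineq-2} for the second.

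The gap is the step you flagged yourself. As written, you have no bound on $\Exp\{\pi_{\pl}L_{\pl}L\|D_N^{-2}\X\|\}$, or on $\Exp\{L_{\pl}L\|D_N^{-2}\X\|\}$ for the $W_{i,2}$ part. For the $W_{i,1}$ term, your first attempt uses the wrong envelope and your fallback does not work. $G_N$ carries $|\dot\mu(\X\tp\bbeta_0)|\,\|K^0_{\ag,N}(\X)\|_F$, not $L(\X)\|K^0_{\ag,N}(\X)\|_F$, so it cannot dominate $L\|D_N^{-2}\X\|$. The relevant condition for both Lipschitz terms is \eqref{eq:ag-TN-ineq-1}, as you already proposed for the $W_{i,2}$ part. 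The crude bound $|W_{i,1}|\lesssim h_N^{-1}$ discards both $\|\cbeta_{\pl}-\bbeta_0\|$ and $L_{\pl}$. What remains is of order $h_N^{-1}\Exp\{\pi_{\pl}L\|D_N^{-2}\X\|\}$, and the assumptions give no $o(1)$ bound for it, since pilot probabilities of rare observations need not be $O(h_{\pl,N})$. Moreover, \eqref{eq:ag-TN-ineq-1} no longer applies once $L_{\pl}$ has been dropped.

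What closes the step, and what the paper uses without comment, is the pointwise inequality $\|D_N^{-2}\X\|\le\|K^0_{\ag,N}(\X)\|_F$. Your diagonal-entry remark covers only the rare coordinates. On nonrare coordinates the diagonal entries are $U_k^2$, which can be smaller than $|U_k|$. Instead, use that $K^0_{\ag,N}(\X)=(\Omega^0_{\ag,N}D_N^{-1}\X)(\{\Omega^0_{\ag,N}\}^{-1}D_N^{-1}\X)\tp$ has rank one, so Cauchy--Schwarz gives $\|K^0_{\ag,N}(\X)\|_F\ge\|D_N^{-1}\X\|^2$. Then
$\|D_N^{-1}\X\|^4\ge\|\U\|^4+\sum_jZ_{\cdot j}\rho_{j,N}^{-2}\ge\|\U\|^2+\sum_jZ_{\cdot j}\rho_{j,N}^{-2}=\|D_N^{-2}\X\|^2$,
where the middle step uses $\|\U\|\ge1$ because $\U$ contains the intercept. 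Together with $\pi_{\pl}\le1$ this gives
$\Exp\{\pi_{\pl}L_{\pl}L\|D_N^{-2}\X\|\}\le\Exp\{L_{\pl}L\|D_N^{-2}\X\|\}\le\Exp\{L_{\pl}L\|K^0_{\ag,N}(\X)\|_F\}=O(1)$
by \eqref{eq:ag-TN-ineq-1}. Hence the $W_{i,1}$ Lipschitz term is $O_p\{(h_N\sqrt{n_{\pl}})^{-1}\}$ and the $W_{i,2}$ one is $O_p(n_{\pl}^{-1/2})$. No strengthening of the assumptions is needed.
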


\begin{proof}
Note that
$\|J_{i}(\bbeta)\| \le \|J_{i}(\bbeta_{0})\| + \|J_{i}(\bbeta) - J_{i}(\bbeta_{0})\|$
and
$\|J_{i}(\bbeta) - J_{i}(\bbeta_{0})\| \le L(\X_{i})\|\bbeta - \bbeta_{0}\| \|D_{N}^{-2}\X\|$
by Assumption~\ref{asmp:regularity-2} (iii), thus
\begin{align*}
  \sup_{\bbeta \in\Theta}\left\|\oneN \sumN W_{i} J_{i}(\bbeta) \right\|
  \le
  \left\| \oneN \sumN \{W_{i} J_{i}(\bbeta_{0})\}  \right\|
  +
  \left( \sup_{\bbeta \in\Theta}\|\bbeta - \bbeta_{0}\| \right)
  \oneN \sumN \{ |W_{i}| L(\X_{i})\|D_{N}^{-2}\X_{i}\| \}.
\end{align*}
Since $\sup_{\bbeta \in\Theta}\|\bbeta - \bbeta_{0}\| < \infty$
because $\Theta$ is compact, it suffices to prove
\begin{align}
  & \left\|  \oneN \sumN \{W_{i,1} J_{i}(\bbeta_{0})\}\right\| = o_{p}(1), \label{eq:wq11}\\
  & \oneN \sumN \left\{ |W_{i,1}| L(\X_{i})\|D_{N}^{-2}\X_{i}\|\right\} = o_{p}(1), \label{eq:wq12} \\
  & \left\|\oneN \sumN \{W_{i,2} J_{i}(\bbeta_{0})\} \right\| = o_{p}(1), \label{eq:wq21}\\
  & \oneN \sumN \left\{ |W_{i,2}| L(\X_{i})\|D_{N}^{-2}\X_{i}\|\right\} = o_{p}(1). \label{eq:wq22}
\end{align}
For \eqref{eq:wq11}, using \eqref{eq:W1}, we obtain
\begin{align*}
  \left\|\oneN \sumN W_{i,1} J_{i}(\bbeta_{0}) \right\|
  & \le
  \oneN \sumN |W_{i,1}| \left\| J_{i}(\bbeta_{0}) \right\| \\
  & \le 
  \frac{\|\cbeta_{\pl} - \bbeta_{0}\|}{\alpha^{2} N h_{N}}
  \sumN \{\delta_{\pl,i}+\pi_{\pl}(Y_{i},\X_{i})\} L_{\pl}(Y_{i},\X_{i}) \|J_{i}(\bbeta_{0})\|  \\
  & =
    O_{p}\left( \frac{h_{\pl,N}}{h_{N}\sqrt{n_{\pl}}} \right)
  \to 0,
\end{align*}
where we use  \eqref{eq:ag-GN-ineq-3}:
$\sup_{N}h_{\pl,N}^{-1}\Exp\{\pi_{\pl}(Y,\X) L_{\pl}(Y,\X) \|D_{N}^{-2}\psi(Y,\X,\bbeta)\|\} < \infty$,
$\sqrt{h_{\pl,N}} / h_{N} \to 0$, $n_{\pl} \to \infty$ and Markov's
inequality.  Similarly, for \eqref{eq:wq12}, 
\begin{align*}
  \oneN \sumN |W_{i,1}| L(\X_{i})\|D_{N}^{-2}\X_{i}\|
  & \le 
  \frac{\|\cbeta_{\pl} - \bbeta_{0}\|}{\alpha^{2} N h_{N}}
  \sumN \{\delta_{\pl,i}+\pi_{\pl}(Y_{i},\X_{i})\} L_{\pl}(Y_{i},\X_{i}) L(\X_{i})\|D_{N}^{-2}\X_{i}\|  \\
  & =
    O_{p}\left( \frac{1}{h_{N}\sqrt{n_{\pl}}} \right)
  \to 0,
\end{align*}
where we use \eqref{eq:ag-TN-ineq-1}: 
\begin{align*}
  \sup_{N}\Exp\{\pi_{\pl}(Y,\X) L_{\pl}(Y,\X) L(\X) \|D_{N}^{-2}\X\|\}
  \le \sup_{N}\Exp\{ L_{\pl}(Y,\X) L(\X) \|K_{\ag,N}^0(\X)\|_F\}
  < \infty.
\end{align*}
Next, for \eqref{eq:wq21}, $\Exp[N^{-1} \sumN \{W_{i,2} J_{i}(\bbeta_{0})\}] =0$ and 
\begin{align*}
  & \Var\left(\oneN \sumN \{W_{i,2} J_{i}(\bbeta_{0})\} 
  \middle| \Fn, \{V_{\pl,i}\}_{i=1}^{N} \right) \\
  & =
    \frac{1}{N^{2}} \sumN \Var\left( W_{i,2} J_{i}(\bbeta_{0})
  \middle| \Fn, \{V_{\pl,i}\}_{i=1}^{N} \right) \\
  & =
    \frac{1}{N^{2}}\sumN \Exp\left( W_{i,2}^{2} \left\| J_{i}(\bbeta_{0}) \right\|^{2}
    \middle| \Fn, \{V_{\pl,i}\}_{i=1}^{N}  \right) \\
  & \le \frac{2\|\cbeta_{\pl} - \bbeta_{0}\|}{\alpha^{2} N h_{N} } 
  \oneN \sumN \{L_{\pl}(Y_{i},\X_{i}) \|J_{i}(\bbeta_{0})\|^{2}\}.
\end{align*}
If we denote
$\rho_{N,\min} = \min\{\rho_{1,N},\dots,\rho_{d_{r},N}\}$ and $J(\bbeta) = D_{N}^{-2}\psi(Y,\X,\bbeta)$, then by
\eqref{eq:ag-TN-ineq-2}, as $Nh_{N}\rho_{N,\min} \to \infty$, 
\begin{align*}
  \frac{1}{Nh_{N}}\Exp\{L_{\pl}(Y,\X) \|J(\bbeta_{0})\|^{2}\}
  \le
  \frac{1}{Nh_{N}\rho_{N,\min}}\Exp\{L_{\pl}(Y,\X)
  \|D_{N}^{-1}\psi(Y,\X,\bbeta_{0})\|^{2}\} \to 0.
\end{align*}
Combining with $\|\cbeta_{\pl} - \bbeta_{0}\| = o_{p}(1)$, we have 
$\Var(N^{-1} \sumN \{W_{i,2} J_{i}(\bbeta_{0})\}) = o(1)$
and \eqref{eq:wq21} is proved. Lastly, for \eqref{eq:wq22}, by
\eqref{eq:ag-TN-ineq-1} and \eqref{eq:W2-first},
\begin{align*}
  & \Exp\left[ \oneN \sumN \{ |W_{i,2}| L(\X_{i})\|D_{N}^{-2}\X_{i}\| \}
  \middle| \Fn, \{V_{\pl,i}\}_{i=1}^{N}   \right] \\
  & \le
    \frac{4\|\cbeta_{\pl} - \bbeta_{0}\|}{\alpha}
    \oneN\sumN[L_{\pl}(Y_{i},\X_{i}) L(\X_{i})\|D_{N}^{-2}\X_{i}\|] \\
  & \le
    \frac{4\|\cbeta_{\pl} - \bbeta_{0}\|}{\alpha}
    \oneN\sumN[L_{\pl}(Y_{i},\X_{i}) L(\X_{i})\|K_{\ag,N}^0(\X_{i})\|_F] \\  
  &  =
    O_{p}\left(\frac{1}{\sqrt{n_{\pl}}}  \right) =o_{p}(1).
\end{align*}
Therefore, by Markov's inequality, \eqref{eq:wq22} is proved.  The
proof of Lemma~\ref{lem:ag-feasible-consistency-step1} is completed.
\end{proof}

\begin{lemma} \label{lem:ag-feasible-consistency}
Under Assumptions~\ref{asmp:regularity-1}-\ref{asmp:regularity-3} and
the conditions in Theorem~\ref{thm:ag}, if
$N h_{N} \rho_{j,N} \to \infty$, $j=1,\dots,d_{r}$, then
$\cbeta_{\ag} \xrightarrow{p} \bbeta_{0}$.
\end{lemma}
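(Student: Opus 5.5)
The argument follows the template of Lemmas~\ref{lem:full-consistency} and~\ref{lem:sub-consistency}: an identification argument based on Assumption~\ref{asmp:regularity-2}(iv), combined with a uniform convergence result for the rescaled estimating function. Let $A=\{\bbeta\in\Theta:\|\bbeta-\bbeta_0\|\ge\epsilon\}$ and define the event
\[
E=\Bigl\{\sup_{\bbeta\in\Theta}\|D_N^{-2}\{\tilde\Psi_{\ag}(\bbeta)-\Psi_0(\bbeta)\}\|<c(\epsilon)/2\Bigr\}.
\]
On $E$, every $\bbeta\in A$ satisfies $\|D_N^{-2}\tilde\Psi_{\ag}(\bbeta)\|\ge c(\epsilon)/2>0$. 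Since $\tilde\Psi_{\ag}(\cbeta_{\ag})=\0$, this gives $\Pr(\cbeta_{\ag}\in A)\le\Pr(E^c)$. It therefore suffices to prove
\[
\sup_{\bbeta\in\Theta}\|D_N^{-2}\{\tilde\Psi_{\ag}(\bbeta)-\Psi_0(\bbeta)\}\|\xrightarrow{p}0 .
\]

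I will split this quantity with the triangle inequality into three pieces:
\[
D_N^{-2}\{\tilde\Psi_{\ag}-\Psi^0_{\ag}\},\qquad D_N^{-2}\{\Psi^0_{\ag}-\Psi_N\},\qquad D_N^{-2}\{\Psi_N-\Psi_0\}.
\]
The third piece is uniformly $o_p(1)$ by \eqref{eq:DN-2-Psi} in the proof of Lemma~\ref{lem:full-consistency}. The first piece is uniformly $o_p(1)$ by Lemma~\ref{lem:ag-feasible-consistency-step1}. The second piece is the oracle union estimating equation, which is a Poisson-sampling estimating equation with inclusion indicators $\delta^0_{\ag,i}$ and probabilities $\pi^0_{\ag}(Y_i,\X_i)$. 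By hypothesis these probabilities satisfy Assumptions~\ref{asmp:samp-prob-1}--\ref{asmp:samp-prob-2}.

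For the oracle piece I will rerun the proof of Lemma~\ref{lem:sub-consistency}, with $(\delta_i,\pi)$ replaced by $(\delta^0_{\ag,i},\pi^0_{\ag})$. The facts used there carry over as follows.
\begin{itemize}
\item $\Exp\{\delta^0_{\ag,i}\mid\Fn\}=\pi^0_{\ag}(Y_i,\X_i)$. This holds because $\delta_{\pl,i}$ and $\delta_i^0$ are conditionally independent given $\Fn$: $\pi^0$ contains no pilot quantities, and $V_i$ is independent of $V_{\pl,i}$. The centered terms $\delta^0_{\ag,i}/\pi^0_{\ag}-1$ therefore have conditional mean zero and are conditionally independent across $i$.
\item The lower bound $\pi^0_{\ag}\ge\pi^0\ge\min\{\alpha h_N,1\}$ holds, so $(\pi^0_{\ag})^{-1}\le(\alpha h_N)^{-1}$ for large $N$.
\end{itemize}
With these two facts, the Chebyshev bounds on the finite $\Delta$-net for the nonrare block and the rare coordinates give orders $(Nh_N)^{-1}$ and $(Nh_N\rho_{j,N})^{-1}$, respectively. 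Both vanish under the assumed condition $Nh_N\rho_{j,N}\to\infty$. The Lipschitz control of the oscillation between net points again uses Assumption~\ref{asmp:regularity-2}(iii) together with \eqref{eq:rho-L-Z-bound}. Hence
\[
\sup_{\bbeta\in\Theta}\|D_N^{-2}\{\Psi^0_{\ag}(\bbeta)-\Psi_N(\bbeta)\}\|=o_p(1).
\]

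The main obstacle is this transfer step for the oracle piece. Specifically, I need that the union indicator is exactly unbiased for $\pi^0_{\ag}$ and that the variance bounds of Lemma~\ref{lem:sub-consistency} do not require $\delta^0_{\ag,i}$ to be a single-stage Poisson indicator. I will check these two points explicitly; only first and second conditional moments are used in that proof. Combining the three pieces yields $\Pr(E^c)\to0$ and hence $\cbeta_{\ag}\xrightarrow{p}\bbeta_0$.
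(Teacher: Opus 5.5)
Your proposal is correct and follows the paper's proof. Both use the identification argument from Lemma~\ref{lem:full-consistency}, handle the feasible-minus-oracle term by Lemma~\ref{lem:ag-feasible-consistency-step1}, and rerun Lemma~\ref{lem:sub-consistency} for the oracle union equation; your extra split through $\Psi_N$ is exactly what that lemma does internally. The only small difference is the $(\alpha h_N)^{-1}$ bound: you justify it directly via $\pi^0_{\ag}\ge\pi^0$, whereas the paper argues $h^0_{\ag,N}\asymp h_N$ from $\sqrt{h_{\pl,N}}/h_N\to0$.
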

\begin{proof}
Following the argument used to derive \eqref{eq:DN-2-Psi}, it remains to show that
$\sup_{\bbeta \in\Theta} \|D_{N}^{-2}\{\tilde\Psi_{\ag}(\bbeta) - \Psi_{0}(\bbeta)\}\|\xrightarrow{p} 0$.
First, since $\pi^{0}_{\ag}(Y,\X)$ satisfies
Assumptions~\ref{asmp:samp-prob-1} and \ref{asmp:samp-prob-2}, the
proof of Lemma~\ref{lem:sub-consistency} applies verbatim after
replacing
$\{\delta_{i},\pi(Y_{i},\X_{i}),\Omega_{N}, \Psi_{n}(\bbeta)\}$ with
$\{\delta^{0}_{\ag,i}, \pi_{\ag}^{0}(Y_{i},\X_{i}),\Omega_{\ag,N}^{0}, \Psi_{\ag}^{0}(\bbeta)\}$.
Also, note that under $\sqrt{h_{\pl,N}}/h_N\to0$, we have
$h_{\pl,N}=o(h_N^2)$, hence $h_{\pl,N}=o(h_N)$ since $h_{N}\le 1$. Therefore
$h_{\ag,N}^0$ is of the same order as $h_N$. Thus, if
$N h_{N} \rho_{j,N} \to \infty$, $j=1,\dots,d_{r}$, we have
$\sup_{\bbeta \in\Theta} \|D_{N}^{-2}\{\Psi_{\ag}^{0}(\bbeta) - \Psi_{0}(\bbeta) \}\| \xrightarrow{p} 0$
and $\cbeta_{\ag}^{0} \xrightarrow{p} \bbeta_{0}$. Note that
\begin{align*}
  \|D_{N}^{-2}\{\tilde\Psi_{\ag}(\bbeta) - \Psi_{0}(\bbeta)\}\|
  \le \|D_{N}^{-2}\{\tilde\Psi_{\ag}(\bbeta) - \Psi_{\ag}^{0}(\bbeta) \}\|
  + \|D_{N}^{-2}\{\Psi_{\ag}^{0}(\bbeta) - \Psi_{0}(\bbeta)\}\|.
\end{align*}
Thus,
it remains to show that
$\sup_{\bbeta \in\Theta} \|D_{N}^{-2}\{\tilde\Psi_{\ag}(\bbeta) - \Psi_{\ag}^{0}(\bbeta) \} \| \xrightarrow{p} 0$,
which follows from
Lemma~\ref{lem:ag-feasible-consistency-step1}.
\end{proof}

\begin{lemma} \label{lem:M-ag}
Under Assumptions~\ref{asmp:regularity-1}-\ref{asmp:regularity-3} and
the conditions in Theorem~\ref{thm:ag}, we have 
$\tilde M_{\ag,N}(\cbeta_{\ag}) \xrightarrow{p} M_{\ag}^{0}(\bbeta_{0})$.
\end{lemma}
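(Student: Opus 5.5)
We mirror the proof of \eqref{eq:M-pi-N} in Theorem~\ref{thm:sub1}, with one additional term for the gap between the feasible weights $\tdelta_{\ag,i}/\tpi_{\ag}$ and the oracle weights $\delta^{0}_{\ag,i}/\pi^{0}_{\ag}$. Write $\bbeta_\lambda=\bbeta_0+\lambda(\cbeta_{\ag}-\bbeta_0)$ and $W_i=W_{i,1}+W_{i,2}$ as in Lemma~\ref{lem:ag-W}. We decompose, entrywise for each $(k,l)$,
\begin{align*}
\tilde M_{\ag,N}(\cbeta_{\ag})-M^{0}_{\ag}(\bbeta_0)
={}&\int_0^1\oneN\sumN W_i\,\dot\mu(\X_i\tp\bbeta_\lambda)K^{0}_{\ag,N}(\X_i)\,\diff\lambda\\
&+\int_0^1\oneN\sumN\frac{\delta^{0}_{\ag,i}}{\pi^{0}_{\ag,i}}\{\dot\mu(\X_i\tp\bbeta_\lambda)-\dot\mu(\X_i\tp\bbeta_0)\}K^{0}_{\ag,N}(\X_i)\,\diff\lambda\\
&+\left\{\oneN\sumN\frac{\delta^{0}_{\ag,i}}{\pi^{0}_{\ag,i}}\dot\mu(\X_i\tp\bbeta_0)K^{0}_{\ag,N}(\X_i)-M^{0}_{\ag}(\bbeta_0)\right\}.
\end{align*}

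The last two terms are the oracle terms. Since $\pi^{0}_{\ag}$ satisfies Assumptions~\ref{asmp:samp-prob-1}--\ref{asmp:samp-prob-2}, the argument for \eqref{eq:M-pi-N} applies verbatim. The third term splits into a centered part, whose variance is controlled by Assumption~\ref{asmp:samp-prob-2}(iii), and the deterministic convergence assumed in Theorem~\ref{thm:ag}. The second term is bounded by
\[
\|\cbeta_{\ag}-\bbeta_0\|\,\oneN\sumN\frac{\delta^{0}_{\ag,i}}{\pi^{0}_{\ag,i}}L(\X_i)\|K^{0}_{\ag,N}(\X_i)\|_F,
\]
which is $o_p(1)\cdot O_p(1)$ by Lemma~\ref{lem:ag-feasible-consistency} and Assumption~\ref{asmp:samp-prob-2}(i) for $\pi^0_{\ag}$.

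The main work is the first term, in which $W_i$ multiplies $\dot\mu$ evaluated at the random point $\bbeta_\lambda$. Using the Lipschitz bound $|\dot\mu(\X_i\tp\bbeta_\lambda)|\le|\dot\mu(\X_i\tp\bbeta_0)|+L(\X_i)\|\cbeta_{\ag}-\bbeta_0\|$, this term is at most
\begin{align*}
\oneN\sumN|W_i|\,|\dot\mu(\X_i\tp\bbeta_0)|\,\|K^{0}_{\ag,N}(\X_i)\|_F
+\|\cbeta_{\ag}-\bbeta_0\|\,\oneN\sumN|W_i|\,L(\X_i)\|K^{0}_{\ag,N}(\X_i)\|_F.
\end{align*}

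\emph{The $W_{i,1}$ contribution.} By \eqref{eq:W1}, the first sum is at most
\[
\frac{\|\cbeta_{\pl}-\bbeta_0\|}{\alpha^2 h_N}\,\oneN\sumN(\delta_{\pl,i}+\pi_{\pl,i})L_{\pl,i}|\dot\mu|\|K^0_{\ag,N}\|_F .
\]
Its expectation is $O(h_{\pl,N})$ by \eqref{eq:ag-GN-ineq-2}, since $\Exp(\delta_{\pl}\mid\Fn)=\pi_{\pl}$. Markov's inequality then gives the order $O_p\{h_{\pl,N}/(h_N\sqrt{n_{\pl}})\}=o_p(1)$. For the second sum, bound $\delta_{\pl,i}+\pi_{\pl,i}\le2$ and use \eqref{eq:ag-TN-ineq-1}. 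This gives $O_p\{1/(h_N\sqrt{n_{\pl}})\}$, which is $o_p(1)$ by $h_N\sqrt{n_{\pl}}\to\infty$.

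\emph{The $W_{i,2}$ contribution.} Here we take conditional expectations given $(\Fn,\{V_{\pl,i}\})$ and apply \eqref{eq:W2-first}. The first sum is then bounded by
\[
4\alpha^{-1}\|\cbeta_{\pl}-\bbeta_0\|\,\oneN\sumN L_{\pl,i}|\dot\mu(\X_i\tp\bbeta_0)|\|K^0_{\ag,N}(\X_i)\|_F .
\]

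This is the step I expect to be delicate. Assumption~\ref{asmp:ag}(ii) directly controls $\Exp(L_{\pl}L\|K^0_{\ag,N}\|_F)$, but it controls $\dot\mu\|K^0_{\ag,N}\|_F$ only when weighted by $\pi_{\pl}$. My plan is to dominate $|\dot\mu(\X\tp\bbeta_0)|$ by $L(\X)$ plus a constant, or to use boundedness of $\dot\mu$ for the canonical links considered. This reduces the term to \eqref{eq:ag-TN-ineq-1}, giving $O_p(n_{\pl}^{-1/2})=o_p(1)$. If that reduction fails, the fallback is to avoid bounding absolute values. Instead we would handle the centered sum $\oneN\sumN W_{i,2}\dot\mu(\X_i\tp\bbeta_0)K^0_{\ag,N}(\X_i)$ by its conditional variance, using \eqref{eq:W2-second-looser} together with Assumption~\ref{asmp:samp-prob-2}(iii), which yields a bound of order $(Nh_N)^{-1}\Exp\|\dot\mu K^0_{\ag,N}\|_F^2\to0$. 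The $\bbeta_\lambda-\bbeta_0$ remainder would be treated through the $L(\X)$ bound above.

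The second sum in the $W_{i,2}$ contribution is $O_p(n_{\pl}^{-1/2})$ by \eqref{eq:W2-first} and \eqref{eq:ag-TN-ineq-1}, and it is further multiplied by $\|\cbeta_{\ag}-\bbeta_0\|=o_p(1)$. Combining all terms gives $\tilde M_{\ag,N}(\cbeta_{\ag})\xrightarrow{p}M^0_{\ag}(\bbeta_0)$.
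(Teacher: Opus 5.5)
Your decomposition and term-by-term bounds coincide with the paper's proof (its $G_1,G_2,G_3$, with $G_1$ split into $G_{11}$--$G_{14}$). Your fallback for the piece $\oneN\sumN W_{i,2}\dot\mu(\X_i\tp\bbeta_0)K^{0}_{\ag,N}(\X_i)$ (conditional mean zero, \eqref{eq:W2-second-looser}, and Assumption~\ref{asmp:samp-prob-2}(iii)) is exactly how the paper handles $G_{14}$, so make it the main argument. Your primary route, dominating $|\dot\mu|$ by a constant or by $L(\X)$ plus a constant, would need a bound such as $\sup_N\Exp\{L_{\pl}(Y,\X)\|K^{0}_{\ag,N}(\X)\|_F\}<\infty$, which Assumption~\ref{asmp:ag} does not supply; it only gives the $L(\X)$-weighted version \eqref{eq:ag-TN-ineq-1}.
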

\begin{proof}
It suffices to show that for $k,l=1,\dots,d$,
$\{\tilde M_{\ag,N}(\cbeta_{\ag}) - M_{\ag}^{0}(\bbeta_{0})\}_{k,l}$,
the $(k,l)$-th element of the matrix
$\{\tilde M_{\ag,N}(\cbeta_{\ag}) - M_{\ag}^{0}(\bbeta_{0})\}$, is
$o_{p}(1)$.  We write
$\tilde M_{\ag,N}(\cbeta_{\ag}) - M_{\ag}^{0}(\bbeta_{0}) = G_{1} + G_{2} + G_{3}$,
where
\begin{align*}
 G_{1}=& \int_{0}^{1}\left[\oneN\sumN\left\{
    \left( \frac{\tdelta_{\ag,i}}{\tpi_{\ag}(Y_{i}, \X_{i})}
    - \frac{\delta^{0}_{\ag,i}}{\pi_{\ag}^{0}(Y_{i}, \X_{i})}  \right)
         \dot{\mu}\left(
         \X_{i}\tp\{\bbeta_{0} + \lambda(\cbeta_{\ag} - \bbeta_{0})
         \}\right)
          K_{\ag,N}^{0}(\X_{i})
          \right\} \right]\diff \lambda \\
 G_{2}=& \int_{0}^{1}\left[  \oneN\sumN \left\{
    \frac{\delta^{0}_{\ag,i}}{\pi_{\ag}^{0}(Y_{i}, \X_{i})}  
           \left\{ \dot{\mu}\left(
           \X_{i}\tp\{\bbeta_{0} + \lambda(\cbeta_{\ag} - \bbeta_{0})\}
           \right)
    - \dot{\mu}(\X_{i}\tp\bbeta_{0})\right\}
    K_{\ag,N}^{0}(\X_{i})\right\} \right] \diff \lambda \\
 G_{3}=&  \oneN\sumN \left\{  
    \frac{\delta^{0}_{\ag,i}}{\pi_{\ag}^{0}(Y_{i}, \X_{i})}
    \dot{\mu}(\X_{i}\tp\bbeta_{0})
    K_{\ag,N}^{0}(\X_{i})\right\}
    -M_{\ag}^{0}(\bbeta_{0}).
\end{align*}
Thus,
$|\tilde M_{\ag,N}(\cbeta_{\ag}) - M_{\ag}^{0}(\bbeta_{0})|_{k,l} \le |G_{1}|_{k,l} + |G_{2}|_{k,l} + |G_{3}|_{k,l}$.
We now prove all three terms are $o_{p}(1)$. For $|G_{1}|_{k,l}$,
using Lemma~\ref{lem:ag-W}, we have
\begin{align*}
  & |G_{1}|_{k,l} \\ 
  \le& 
  \int_{0}^{1}\oneN\sumN\left\{ |W_{i,1}|
       \left|\dot{\mu}\left(
       \X_{i}\tp\{\bbeta_{0} + \lambda(\cbeta_{\ag} - \bbeta_{0})\}
       \right)
       - \dot{\mu}(\X_{i}\tp\bbeta_{0}) \right| 
       |K_{\ag,N}^{0}(\X_{i})|_{k,l} \right\}  \diff \lambda \\
  &+ \oneN\sumN\left\{ |W_{i,1}|
       \left| \dot{\mu}(\X_{i}\tp\bbeta_{0}) \right| 
       |K_{\ag,N}^{0}(\X_{i})|_{k,l} \right\}   \\
     &+ \int_{0}^{1}
       \left| \oneN\sumN 
        \left[  W_{i,2} \left\{  
       \dot{\mu}\left(\X_{i}\tp\{\bbeta_{0} + \lambda(\cbeta_{\ag} - \bbeta_{0})\}\right)
       - \dot{\mu}(\X_{i}\tp\bbeta_{0})  \right\}
       K_{\ag,N}^{0}(\X_{i})  \right]\right|_{k,l} \diff \lambda \\
  &+ \left| \oneN\sumN\left\{W_{i,2}
        \dot{\mu}(\X_{i}\tp\bbeta_{0}) K_{\ag,N}^{0}(\X_{i}) \right\}  \right| \\
  \le& G_{11} + G_{12} + G_{13} + G_{14},
\end{align*}
where
\begin{align*}
  G_{11} &= \|\cbeta_{\ag}-\bbeta_{0}\|
       \oneN\sumN\left\{ |W_{i,1}|
           L(\X_{i}) |K_{\ag,N}^{0}(\X_{i})|_{k,l} \right\}, \\
  G_{12}&= \oneN\sumN\left\{ |W_{i,1}|
       \left| \dot{\mu}(\X_{i}\tp\bbeta_{0}) \right| 
          |K_{\ag,N}^{0}(\X_{i})|_{k,l} \right\}, \\
  G_{13} &= \|\cbeta_{\ag}-\bbeta_{0}\|
       \left| \oneN\sumN\left\{ W_{i,2}
       L(\X_{i})
           K_{\ag,N}^{0}(\X_{i}) \right\}  \right|_{k,l}, \\
  G_{14}&=\left|  \oneN\sumN\left\{ W_{i,2}
    \dot{\mu}(\X_{i}\tp\bbeta_{0}) K_{\ag,N}^{0}(\X_{i}) \right\}\right|_{k,l}.
\end{align*}
For $G_{11}$, using the upper bound for $|W_{i,1}|$ in \eqref{eq:W1},
\begin{align*}
  G_{11}
  \le 
  \frac{\|\cbeta_{\ag}-\bbeta_{0}\| \|\cbeta_{\pl}-\bbeta_{0}\|}{\alpha^{2}h_{N}}
  \oneN \sumN \left[
  \{\delta_{\pl,i} + \pi_{\pl}(Y_{i},\X_{i})\} L_{\pl}(Y_{i},\X_{i}) L(\X_{i})
  \| K_{\ag,N}^{0}(\X_{i})\|_{F} \right].
\end{align*}
By \eqref{eq:ag-TN-ineq-1},
\begin{align*}
  \Exp[ \{\delta_{\pl} + \pi_{\pl}(Y,\X)\}
  L_{\pl}(Y,\X) L(\X) \| K_{\ag,N}^{0}(\X)\| ] 
  & = 2\Exp\left\{ \pi_{\pl}(Y,\X)
    L_{\pl}(Y,\X) L(\X) \| K_{\ag,N}^{0}(\X)\|_{F}\right\} \\
  & \le 2\Exp\left\{
    L_{\pl}(Y,\X) L(\X) \| K_{\ag,N}^{0}(\X)\|_{F} \right\}
    <\infty,
\end{align*}
 and since $h_{N}\sqrt{n_{\pl}} \to \infty$,
$\|\cbeta_{\ag}-\bbeta_{0}\| = o_{p}(1)$ and
$\|\cbeta_{\pl}-\bbeta_{0}\|=O_{p}(n_{\pl}^{-1/2})$, we have
\begin{align*}
  G_{11}
  \le o_{p}(1)O_{p}\left( \frac{1}{h_{N}\sqrt{n_{\pl}}} \right)
  =  o_{p}(1)
\end{align*}
by Markov's inequality.  For $G_{12}$, similarly, by \eqref{eq:W1},
\eqref{eq:ag-GN-ineq-2} and Assumption~\ref{asmp:ag} (i),
\begin{align*}
  G_{12}
  & \le \|\cbeta_{\pl}-\bbeta_{0}\| \frac{1}{\alpha^{2}h_{N}} \oneN \sumN
    \left[
    \{\delta_{\pl,i} + \pi_{\pl}(Y_{i},\X_{i})\}
    L_{\pl}(Y_{i},\X_{i})
    |\dot{\mu}(\X_{i}\tp\bbeta_{0})| \|
    K_{\ag,N}^{0}(\X_{i})\|_{F} \right] \\
  & = O_{p}\left( \frac{h_{\pl,N}}{h_{N}\sqrt{n_{\pl}}} \right) 
  =  o_{p}(1).
\end{align*}
To prove $G_{13}=o_{p}(1)$, it suffices to show that
$N^{-1}\sumN\left\{ W_{i,2} L(\X_{i}) K_{\ag,N}^{0}(\X_{i}) \right\}_{k,l}=O_{p}(1)$.
By \eqref{eq:W2-first} and \eqref{eq:ag-TN-ineq-1},
\begin{align*}
  & \Exp\left[
    \left|
    \oneN\sumN
    \left\{ W_{i,2} L(\X_{i}) K_{\ag,N}^{0}(\X_{i}) \right\}_{k,l}
    \right|
    \middle| \Fn, \{V_{\pl,i}\}_{i=1}^{N}
    \right] \\
  & \le
    \Exp\left[  \oneN\sumN
    |W_{i,2}| L(\X_{i}) \|K_{\ag,N}^{0}(\X_{i})\|_{F}
    \middle| \Fn, \{V_{\pl,i}\}_{i=1}^{N}
    \right] \\
  & \le \frac{4\|\cbeta_{\pl}-\bbeta_{0}\|}{\alpha} \oneN \sumN
    L_{\pl}(Y_{i}, \X_{i}) L(\X_{i}) \|K_{\ag,N}^{0}(\X_{i})\|_{F}
    = O_{p}(n_{\pl}^{-1/2})O_{p}(1) = o_{p}(1).
\end{align*}
Therefore,
$N^{-1}\sumN\left\{ W_{i,2} L(\X_{i}) K_{\ag,N}^{0}(\X_{i}) \right\}_{k,l}=O_{p}(1)$
by Markov's inequality.  For $G_{14}$, we have
$\Exp\left[ N^{-1}\sumN\left\{ W_{i,2} \dot{\mu}(\X_{i}\tp\bbeta_{0}) K_{\ag,N}^{0}(\X_{i}) \right\}_{k,l} \right] = 0$.
Using \eqref{eq:W2-second-looser},
\begin{align*}
  & \Var\left[ \oneN\sumN\left\{ W_{i,2}
  \dot{\mu}(\X_{i}\tp\bbeta_{0}) K_{\ag,N}^{0}(\X_{i}) \right\}_{k,l}
   \middle| \Fn, \{V_{\pl,i}\}_{i=1}^{N}\right] \\
  & \le
    \frac{1}{N^{2}} \sumN \Exp\left\{
    W_{i,2}^{2}
    \middle| \Fn, \{V_{\pl,i}\}_{i=1}^{N} \right\}  \|\dot{\mu}(\X_{i}\tp\bbeta_{0}) K_{\ag,N}^{0}(\X_{i})\|_{F}^{2} \\ 
  & \le
    \frac{4}{\alpha N h_{N}} \oneN \sumN
    \|\dot{\mu}(\X_{i}\tp\bbeta_{0}) K_{\ag,N}^{0}(\X_{i})\|_{F}^{2} = o_{p}(1),
\end{align*}
where the last step follows from Assumption~\ref{asmp:samp-prob-2} (iii). 
This implies that $|G_{14}|=o_{p}(1)$. Thus, $|G_{1}|=o_{p}(1)$.

For $|G_{2}|_{k,l}$, we have
\begin{align*}
  |G_{2}|_{k,l}
  &\le \|\cbeta_{\ag} - \bbeta_{0}\|
    \left|\oneN \sumN \left\{ \frac{\delta^{0}_{\ag,i}}{\pi_{\ag}^{0}(Y_{i}, \X_{i})}
    L(\X_{i}) K_{\ag,N}^{0}(\X_{i}) \right\}
    \right|_{k,l}.
\end{align*}
By Lemma~\ref{lem:ag-feasible-consistency},
$\|\cbeta_{\ag} - \bbeta_{0}\|=o_{p}(1)$. By
Assumption~\ref{asmp:samp-prob-2} (i),
$\sup_{N} \Exp\{ L(\X) \| K_{\ag,N}^{0}(\X) \|_{F} \} < \infty$. Thus,
\begin{align*}
  & \Exp\left\{\frac{\delta^{0}_{\ag,i}}{\pi_{\ag}^{0}(Y_{i}, \X_{i})}
   \left|L(\X_{i})K_{\ag,N}^{0}(\X_{i}) \right|_{k,l} \right\} 
  = \Exp\{ \left| L(\X) K_{\ag,N}^{0}(\X) \right|_{k,l}\} 
  \le \sup_{N}
  \Exp\left\|L(\X)K_{\ag,N}^{0}(\X) \right\|_{F}
  <  \infty.
\end{align*}
Thus, we obtain $|G_{2}|_{k,l} = o_{p}(1)$ for $k,l=1,\dots,d$. Finally,
to prove $|G_{3}|_{k,l}=o_{p}(1)$, we observe that $G_{3}=G_{31} + G_{32}$, where
\begin{align*}
  G_{31}
  & = \oneN\sumN \left\{  
    \left(\frac{\delta^{0}_{\ag,i}}{\pi_{\ag}^{0}(Y_{i}, \X_{i})} - 1 \right)
    \dot{\mu}(\X_{i}\tp\bbeta_{0})
    K_{\ag,N}^{0}(\X_{i})\right\}, \\
  G_{32}
  & = \oneN\sumN \left\{  
    \dot{\mu}(\X_{i}\tp\bbeta_{0})
    K_{\ag,N}^{0}(\X_{i})\right\}
    -M_{\ag}^{0}(\bbeta_{0}).
\end{align*}
By Assumption~\ref{asmp:samp-prob-2} (iii) and the same argument in
the proof of \eqref{eq:M-pi-N} in Theorem~\ref{thm:sub1}, we have
$G_{31}=o_{p}(1)$, and $G_{32}=o_{p}(1)$ is implied by
Assumption~\ref{asmp:samp-prob-1} (ii).  Thus, \eqref{eq:M-ag-N} is
proved.
\end{proof}

\subsection{Aggregation Methods Implementation}
\label{supp:ag-implement}
Recall that the pilot and second-step inclusion probabilities are
$\{\pi_{\pl}(Y_i,\X_i)\}_{i=1}^{N}$ and
$\{\tpi(Y_i,\X_i)\}_{i=1}^{N}$, respectively.  To distinguish
realized and expected sample sizes, let $n_{\pl}^{\star}$ and
$n^{\dagger}$ be the realized sample sizes after the implementation of
pilot sampling and second-step sampling. Correspondingly, let
$\{(Y_{j}^{\star}, \X_{j}^{\star})\}_{j=1}^{n_{\pl}^{\star}}$ and
$\{(Y_{k}^{\dagger}, \X_{k}^{\dagger})\}_{k=1}^{n^{\dagger}}$ be the
realized observations in the pilot sample and second-step sample.
Therefore, the pilot and second-step estimating equations are
\begin{align*}
  & \Psi_{\pl}^{\star}(\bbeta) = 
    \oneN 
  \sum_{j=1}^{n_{\pl}^{\star}}
    \frac{\psi(Y_{j}^{\star}, \X_{j}^{\star},\bbeta)}
    {\pi_{\pl}(Y_{j}^{\star}, \X_{j}^{\star})}, \qquad
   \tilde\Psi^{\dagger}(\bbeta) = 
    \oneN \sum_{k=1}^{n^{\dagger}}
    \frac{\psi(Y_{k}^{\dagger},\X_{k}^{\dagger},\bbeta)}
    {\tpi(Y_{k}^{\dagger}, \X_{k}^{\dagger})}.
\end{align*}
The solutions are $\cbeta_{\pl}$ and $\tilde\bbeta_{n}$, respectively.
The pooled-sample estimator
$\cbeta_{\mathrm{pool}}$ solves
\begin{align*}
  \Psi_{\mathrm{pool}}(\bbeta) = 
  \frac{n_{\pl}^{\star}}{n_{\pl}^{\star}+n^{\dagger}}
  \Psi_{\pl}^{\star}(\bbeta)
  +  \frac{n^{\dagger}}{n_{\pl}^{\star}+n^{\dagger}}
  \tilde\Psi^{\dagger}(\bbeta) 
    = \0.
\end{align*}
The estimated covariance matrix of $\cbeta_{\mathrm{pool}}$ is
\begin{align*}
  & \Var(\cbeta_{\mathrm{pool}}) 
   = \frac{1}{n_{\pl}^{\star}+n^{\dagger}}
   \left\{ \dot{\Psi}_{\mathrm{pool}}(\cbeta_{\mathrm{pool}}) \right\}^{-1} \\
  & \Biggl[ \frac{1}{N^{2}}
    \Biggl\{ 
  \frac{(n_{\pl}^{\star})^{2}}{n_{\pl}^{\star}+n^{\dagger}} \sum_{j=1}^{n_{\pl}^{\star}}
    \frac{\{\psi(Y_{j}^{\star},\X_{j}^{\star},\cbeta_{\mathrm{pool}})\}^{\otimes2}}
    {\pi_{\pl}^{2}(Y_{j}^{\star}, \X_{j}^{\star})} 
  +
    \frac{(n^{\dagger})^{2}}{n_{\pl}^{\star}+n^{\dagger}}
    \sum_{k=1}^{n^{\dagger}} \frac{\{\psi(Y_{k}^{\dagger},\X_{k}^{\dagger},\cbeta_{\mathrm{pool}})\}^{\otimes2}}
    {\tpi^{2}(Y_{k}^{\dagger}, \X_{k}^{\dagger})} 
   \Biggr\}
  \Biggr] \\
  & \left\{ \dot{\Psi}_{\mathrm{pool}}(\cbeta_{\mathrm{pool}}) \right\}^{-1}.
\end{align*}
Regarding the weighting procedure, after deriving $\cbeta_{\pl}$ and
$\tilde\bbeta_{n}$, we calculate the weighting matrices by
$w_{\pl} = n_{\pl}^{\star}(n_{\pl}^{\star}+n^{\dagger})^{-1}\dot\Psi_{\pl}^{\star}(\cbeta_{\pl})$
and
$w = n^{\dagger}(n_{\pl}^{\star}+n^{\dagger})^{-1}\dot{\tilde\Psi}^{\dagger}(\tilde\bbeta_{n})$,
then the weighted estimator is
$\cbeta_{\mathrm{we}}=(w_{\pl}+w)^{-1}(w_{\pl}\cbeta_{\pl} + w \tilde\bbeta_{n})$,
and its estimated covariance matrix is 
\begin{align*}
  & \Var(\cbeta_{\mathrm{we}})
   =  \frac{1}{n_{\pl}^{\star}+n^{\dagger}} (w_{\pl}+w)^{-1} \\
  & \left[ \frac{1}{N^{2} }
   \left\{  
   \frac{(n_{\pl}^{\star})^{2}}{n_{\pl}^{\star}+n^{\dagger}} \sum_{j=1}^{n_{\pl}^{\star}}
    \frac{\{\psi(Y_{j}^{\star},\X_{j}^{\star},\cbeta_{\pl})\}^{\otimes2}}
    {\pi_{\pl}^{2}(Y_{j}^{\star}, \X_{j}^{\star})}
  +  \frac{(n^{\dagger})^{2}}{n_{\pl}^{\star}+n^{\dagger}}
    \sum_{k=1}^{n^{\dagger}} \frac{\{\psi(Y_{k}^{\dagger},\X_{k}^{\dagger},\tilde\bbeta_{n})\}^{\otimes2}}
    {\tpi^{2}(Y_{k}^{\dagger}, \X_{k}^{\dagger})} 
   \right\}
  \right] (w_{\pl}+w)^{-1}.
\end{align*}

\end{document}